\newtheorem{theorem}{Theorem}
\newtheorem{lemma}{Lemma}
\newtheorem{remark}{Remark}
\newcommand{\blind}{0}
\newcommand{\bb}{\boldsymbol\beta}
\def\arxiv{1}
\newcommand*{\addFileDependency}[1]{
  \typeout{(#1)}
  \@addtofilelist{#1}
  \IfFileExists{#1}{}{\typeout{No file #1.}}
}
\newcommand*{\myexternaldocument}[1]{%
    \externaldocument{#1}%
    \addFileDependency{#1.tex}%
    \addFileDependency{#1.aux}%
}
\newcommand\x{\mathbf{x}}
\renewcommand\theadalign{cl}
\renewcommand{\cellalign}{cl}
\renewcommand\theadfont{\bfseries}
\renewcommand\theadgape{\Gape[2pt]}
\renewcommand\cellgape{\Gape[1pt]}
\newcommand\Po{\mathbf{P}_1}
\newcommand{\y}{\mathbf{y}}
\newcommand{\X}{\mathbf{X}}
\newcommand\nvar{P}
\begin{document}

\def\spacingset#1{\renewcommand{\baselinestretch}%
{#1}\small\normalsize} \spacingset{1}


%

%
%

\if0\blind
{
  \title{\bf TECHNICAL REPORT: \\ Sparse Bayesian Lasso via a Variable-Coefficient $\ell_1$ Penalty}
  \author{Nathan Wycoff$^1$, Ali Arab$^2$, Katharine M. Donato$^{3,4}$, Lisa O. Singh$^{1,5}$\hspace{.2cm}\vspace{0.5em}\\ 
    $^1$ The McCourt School's Massive Data Institute, Georgetown University\\
    $^2$ Department of Mathematics and Statistics, Georgetown University\\
    $^3$ The Institute for the Study of International Migration, Georgetown University\\
    $^4$ The School of Foreign Service, Georgetown University\\
    $^5$ Department of Computer Science, Georgetown University\\
    }
  \maketitle
} \fi

\if1\blind
{
  \bigskip
  \bigskip
  \bigskip
  \begin{center}
    {\LARGE\bf Sparse Bayesian Lasso via a Variable-Coefficient $\ell_1$ Penalty}
\end{center}
  \medskip
} \fi

\bigskip
\begin{abstract}
Modern statistical learning algorithms are capable of amazing flexibility, but struggle with interpretability.
One possible solution is \textit{sparsity}: making inference such that many of the parameters are estimated as being identically $0$, which may be imposed through the use of nonsmooth penalties such as the $\ell_1$ penalty.
However, the $\ell_1$ penalty introduces significant bias when high sparsity is desired.
In this article, we endow the $\ell_1$ penalty with learnable penalty weights $\lambda_p$.
We start by investigating the optimization problem this poses, developing a proximal operator associated with the $\ell_1$ norm.
We then study the theoretical properties of this variable-coefficient $\ell_1$ penalty in the context of penalized likelihood.
Next, we investigate application of this penalty to Variational Bayes, developing a model we call the Sparse Bayesian Lasso which allows for behavior qualitatively like Lasso regression to be applied to arbitrary variational models.
In simulation studies, this gives us the Uncertainty Quantification and low bias properties of simulation-based approaches with an order of magnitude less computation.
Finally, we apply our methodology to a Bayesian lagged spatiotemporal regression model of internal displacement that occurred during the Iraqi Civil War of 2013-2017.

\end{abstract}

\noindent%
{\it Keywords:}  variational Bayesian inference, sparsity, proximal operators, nonsmooth optimization, gravity model, human migration
\vfill

\newpage
\spacingset{1.9} 


\section{Introduction}\label{sec:intro}

\textbf{Basic Inferential Problem:} 
Sparsifying penalties are popular because they allow for simultaneous variable selection and parameter estimation by shrinking parameter values identically to $0$.
Explicitly, we define the sparse learning problem as follows:
\begin{equation}
\begin{split}
    \underset{\boldsymbol\beta,\boldsymbol\theta}{\min}\,\, \mathcal{L}(\boldsymbol\beta,\boldsymbol\theta) + \tau \Vert\boldsymbol\beta\Vert_0 \,\, ,
\end{split}
\end{equation}
where $\boldsymbol\beta$ is a parameter vector which we wish to be sparse, $\boldsymbol\theta$ is a vector of other model parameters and $\mathcal{L}$ is a misfit term which determines the compatibility of a parameter configuration with a dataset (such as a negative log-likelihood or log posterior density).
This is a difficult problem because the $||.||_0$ function, which counts the nonzero components of its input, is nonconvex and discontinuous.
The $||.||_1$ norm is a useful convex approximation to $||.||_0$.
This is the penalty function of the celebrated Lasso \citep{Tibshirani1996,taylor1979deconvolution}, which may be viewed as placing a Laplace prior on $\beta_p$ \citep{Park2008}.
The draw-back to this convexity is bias: estimates of nonzero parameters will be shrunk towards zero, significantly so if we are to achieve much sparsity \citep{Zhang2010}. 
One solution to this problem is to allow the penalty coefficients to vary by parameter:
\begin{equation}
    \underset{\boldsymbol\beta,\boldsymbol\theta}{\min}\,\, \mathcal{L}(\boldsymbol\beta, \boldsymbol\theta) + \tau \sum_{p=1}^P \lambda_p |\beta_p| \,\, ,
\end{equation}
as when $\lambda_p\to0$ the associated $\beta_p$ is unpenalized. 
Far from clear is how to choose these $\lambda_p$, as evidenced by the assortment of approaches discussed in the next section.

\noindent\textbf{Existing Approaches and Limitations:}
The importance of variable selection in nonlinear models has prompted decades of research and yielded a multitude of approaches.  
In this section, we review selected penalty-based approaches to variable selection.

Some approaches perform adaptation ``offline", that is, before or after the optimization process.
The Adaptive Lasso \citep{Zou2006} specifies $\lambda_p=\frac{1}{|\hat{\beta}_p|^\gamma}$, where $\hat{\beta_p}$ is some initial estimate of the regression coefficient and $\gamma$ is a hyperparameter, though an initial estimate $\hat{\beta}_p$ is a nontrivial ask if $\mathcal{L}$ is complicated.
\cite{buhlmann2008,candes2008enhancing} propose to iterate this procedure, updating the penalty coefficient $\lambda_p$ with new $\hat{\beta_p}$.
Alternatively, we may use Lasso for variable selection only then proceed to an unpenalized procedure \citep{Efron2004,meinshausen2007relaxed,Zou2008}.

In the Bayesian framework, several authors have proposed simulation-based procedures for adaptation of $\lambda_p$.
\cite{kang2009self,leng2014bayesian,mallick2014new} place a Gamma prior or similar for $\lambda_p$ and conduct inference via Gibbs sampling.
\cite{bhattacharya2012bayesian} instead take the approach of specifying a Dirichlet prior on the regression coefficients before conducting MCMC.
Conceptually this is similar to the Horseshoe Prior \citep{Carvalho2010}, which instead specifies a conditional Normal prior for $\beta_p $.
The Horseshoe prior enjoys generality as well as fast specific implementations \citep{Terenin2019,Makalic2015}.
The reader is referred to \cite{Bhadra2019} for an extensive comparison of the Lasso and Horseshoe models.
Another approach is the Spike-Slab prior \citep{Mitchell1998} which uses discrete latent variables to categorize whether $\beta_p$ was \textit{a priori} sampled from the spike or the slab, which would complicate gradient-based inference. 

Variational Bayesian inference \citep[VB]{blei2017variational} is another way to adapt the $\lambda_p$.
But there is a sense in which VB with a Laplace prior differs from Lasso: the integrated cost erases the sparsifying geometry, as we discuss in Section \ref{sec:vb}.
Previous work has circumvented this using either a point mass variational distribution for regression coefficients \citep{tung2019bayesian}, which does not directly allow for uncertainty quantification (without, say, bootstrapping \citet{fu2000asymptotics}), or by thresholding small coefficients \citep{babacan2014bayesian}, which requires setting an arbitrary threshold \citep{she2009thresholding}.
\cite{kawano2015predictive} vary the threshold while monitoring information criteria.

Nonconvex penalties do not vary the regularization strength but are directly constructed to impose minimal bias on nonzero coefficients, such as the Smoothly Clipped Absolute Deviation function \citep[SCAD]{Fan2001,Hunter2005} or Minimax Concave Penalty \citep[MCP]{Zhang2010}.
These penalties do not correspond to proper priors, as they place constant, positive probability density arbitrarily far from the origin.

Other penalties, like the ridge penalty \citep{hoerl1970ridge} and elastic net \citep{zou2005regularization}, do not serve primarily to select variables, but rather to stabilize model fit or improve predictions. These are not the subject of this article.

Somewhat in between Bayesian and penalized likelihood approaches lies Sparse Bayesian Learning \citep{tipping2001sparse}.
Here, sparsity comes not from nonsmooothness of the likelihood with respect to $\beta$, but rather by shrinking prior variance terms (and hence posterior variance terms) to zero via empirical Bayes.
As originally proposed, this requires a linear model and Gaussian error structure, though its reach can be expanded using Gaussian mixtures \citep{sandhu2021nonlinear}.
\citet{helgoy2019sparse} applied this framework to the Bayesian Lasso.

\noindent\textbf{Scientific Motivation:} 
Forced human migration is at an all time high and only increasing.
In order for policymakers to plan effectively, it is important to understand which factors affect when and where people will move during times of crisis.
Traditionally, migration researchers use \textit{gravity models}, described in Section \ref{sec:gravity}. These models were developed by economist to model international trade (the flow of goods and services) as a function of traditional econometric indicators, typically on a coarse spatiotemporal scale (i.e. country by year). Migration researchers use this model because it can be effective in capturing both push and pull dynamics. 
In this article, we study an approximately year-long period of the Iraqi Civil War of 2013-2017, where the government and its international backers fought ISIL (Islamic State, also known as ISIS) in the northeast of the country.
Traditional migration variables are not immediately available in exactly those situations where they are needed most: dangerous, remote, and underdeveloped areas, such as our case study. 
Taking advantage of the proliferation of the mobile internet, we develop novel social media ``Buzz Variables" derived from Arabic-Language twitter (Section \ref{sec:data}).
This allows us to make predictions at finer spatiotemporal scales than traditional country-year level administrative data, but introduces new challenges, namely unignorable lag between cause and effect, spatiotemporal correlation, and high dimensional and noisy predictors.
In order to meet these new challenges, we develop a Bayesian hierarchical model which allows for zero inflation, overdispersion, spatiotemporal random effects, and lagging and aggregation of predictors (Section \ref{sec:app_model}) -- a complex model to perform fixed-effects selection for.

\noindent\textbf{Brief Outline of Our Contributions:}
We begin this article with a study of the approach of treating the Laplace inverse scale parameter $\boldsymbol\lambda:=\{\lambda_1,\ldots,\lambda_P\}$ as an additional parameter to be optimized, one endowed with the hyperprior $p_\lambda$, yielding:
\begin{equation}
    \underset{\beta,\theta,\boldsymbol\lambda>\mathbf{0}}{\min}\,\, \mathcal{L}(\beta,\theta) + \sum_{p=1}^P \big[\tau\lambda_p |\beta_p| - \log\lambda_p\big] + \sum_{p=1}^P -\log p_\lambda (\lambda_p) \,\,,
\end{equation}
where $\tau$ is a positive parameter controlling the sparsity level and the  $\log\lambda_p$ term comes from the normalization constant of the Laplace density.
The $\ell_1$ penalty in general may be efficiently applied to a smooth loss function via the Iterative Shrinkage and Thresholding Algorithm \citep[ISTA]{daubechies2004iterative}, a proximal gradient method \cite{parikh2014proximal}.
But ISTA assumes a known and fixed $\lambda_p$.
Section \ref{sec:prox} examines a proximal operator associated with the variable-$\lambda_p$ optimization problem and discusses how to deploy it to marry adaptive and nonsmooth penalties in an analog of ISTA we term Variable ISTA (VISTA) 
\footnote{As we deploy Nesterov acceleration, the better analog is actually Fast ISTA \citep{beck2009fast}.}.
It also investigates the basic theoretical properties of this procedure in penalized likelihood.

In Section \ref{sec:linear} we pivot to a Variational Bayesian take on an adaptive Lasso using a nonsmooth penalty we term the Sparse Bayesian Lasso (SBL).
The ``Bayesian Lasso" of \citet{Park2008} is so named as it is a Bayesian explanation for the density used in Lasso.
But the Laplace prior does not actually encode the prior beliefs that motivate the use of Lasso, i.e., that the coefficient vector is sparse.
Rather, it encodes the belief that most of the coefficients are near zero, which leads to poor interval coverage and bias if they are not.
As such it is not \textit{operationally} a Bayesian Lasso, that is, it is not a method that yields the sparsity properties of Lasso as part of a Bayesian analysis.
The proposed Sparse Bayesian Lasso fills this role by allowing for both geometric sparsity and uncertainty quantification.
Combined with warm starts\footnote{A ``warm start" is simply the practice of initializing an iterative algorithm at the final step of a previous, similar algorithm.}, we can efficiently calculate trajectories not just of parameter estimates, as in the traditional Lasso, but of entire variational distributions, showing us how uncertainty of model parameters change with sparsity.
This allows us to avoid specifying or estimating an overall model complexity parameter $\tau$, which is difficult in practice.

In this article, we explain how to deploy our method in gradient-based learning frameworks such as \texttt{tensorflow} and \texttt{pytorch}.
After verifying the frequentist properties of the SBL in simulation studies, we demonstrate the potential of the SBL in complex hierarchical models using our Iraq forced displacement case study (Section \ref{sec:application}).
We conclude by discussing research directions suggested by the novel proximal operator in Section \ref{sec:conclusion}.

\section{Coefficient-Specific \texorpdfstring{$\lambda_p$}{lambda p} with Optimization}\label{sec:prox}

We begin this section by developing a proximal gradient method suitable for general smooth optimization problems augmented by the variable-coefficient $\ell_1$ penalty.
We then describe basic theoretical properties of the estimator in the penalized likelihood case.
We simply give outlines of proofs for theoretical results; detailed derivations are available in the Supplementary Material.


\subsection{Background: Nonsmooth Penalties and Proximal Operators}\label{sec:prox_bg}

We are interested in minimizing a complicated but smooth loss $l$ augmented with a simple but nonsmooth regularizer $g$:
\begin{equation}\label{eq:auglik}
    c(\x) = l(\x) + g(\x)\, ,
\end{equation}
where $g$ is given by the $\ell_1$ norm in the case of Lasso regression. Proximal gradient descent and its relatives are the algorithms of choice in such a situation. 
Given a function $g$ with domain $\mathcal{X}$ and some norm parameterized by a positive definite matrix $\mathbf{C}$, their \textit{proximal operator} is then defined as the following mapping:
\begin{equation}
    \mathrm{prox}_{g}^\mathbf{C}(\mathbf{x}) = \underset{\mathbf{u} \in \mathcal{X}}{\mathrm{argmin}} \, g(\mathbf{u}) + \frac{1}{2}||\mathbf{x}-\mathbf{u}||_{\mathbf{C}}^2 \,\, .
\end{equation}
Intuitively, the proximal operator of a function $g$ evaluated at a vector $\mathbf{x}$ returns another vector $\mathbf{u}$ which is close to $\mathbf{x}$ (wrt $\mathbf{C}$) but does a better job minimizing $g$.

Proximal Gradient algorithms optimize the objective function in Equation \ref{eq:auglik} via iteration of a two step process. Given a current solution $\mathbf{x}^k$ and smooth cost function $\mathcal{L}$, the next iterate is parameterized by a step size $s$ and given by:
\begin{align}\label{eq:prox_descent}
    \hat{\mathbf{x}}^{k+1} = \mathbf{x}^k - s \mathbf{C}^{-1}\nabla \mathcal{L}(\mathbf{x}^k) \\
    \mathbf{x}^{k+1} = \mathrm{prox}^{s\mathbf{C}}_{ g}(\hat{\mathbf{x}}^{k+1})\, .
\end{align}
Note that $\mathbf{C}$ is scaled by the step size in the proximal operator and that we have not included a subgradient of $g$ in the gradient descent step.
The matrix $\mathbf{C}$ may in theory be any positive definite matrix and in practice it is defined by the preconditioning strategy of the gradient descent algorithm that the proximal operator will be deployed in.



The proximal operator is most useful when it can be computed efficiently.
In this article, we will assume a diagonal preconditioner $\mathbf{C}=\mathrm{diag}(c_1,\ldots,c_P)$, which is conducive to breaking the proximal problem into subproblems defined along each axis.
For example, when $g:\mathbb{R}\to\mathbb{R}$ is given by $g(x)=\lambda|x|$, the proximal operator is given by elementwise application of the Soft Thresholding Operator (STO):
\begin{equation}\label{eq:soft.thresh}
    [\mathrm{prox}^{\mathrm{s diag(c_1,\ldots,c_P)}}_{\lambda |.|}(\mathbf{x})]_p = (|x_p|- s c_p \lambda)^+\mathrm{sgn}(x_p)\, ,
\end{equation}
where $(a)^+$ gives $\max(0,a)$ and $\mathrm{sgn}(a)$ gives the sign of $a$.


%
%

\subsection{The Variable-Coefficient \texorpdfstring{$\ell_1$}{l1} Proximal Operator}\label{sec:adapt_prox}

\begin{figure}
	\centering
	\raisebox{-0.5\height}{\includegraphics[scale=0.60]{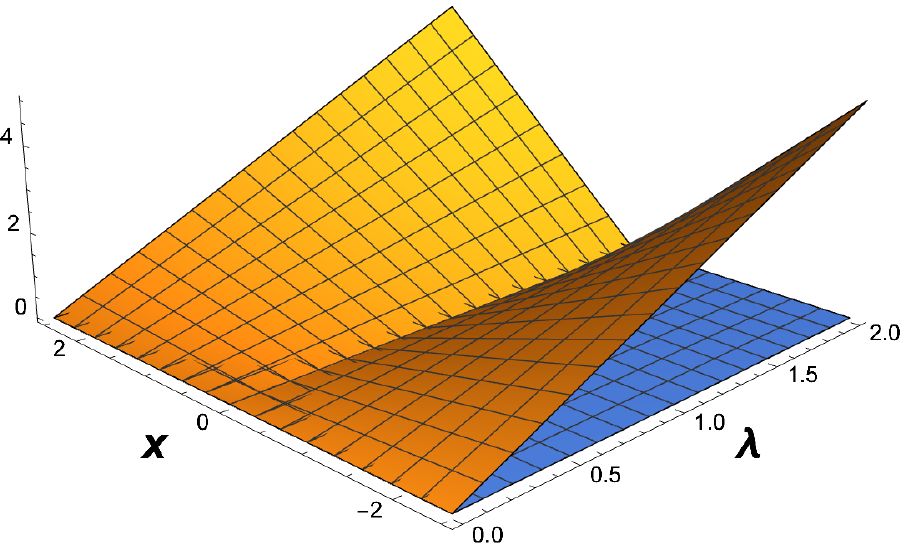}} \,\,
	\raisebox{-0.5\height}{\includegraphics[scale=0.52]{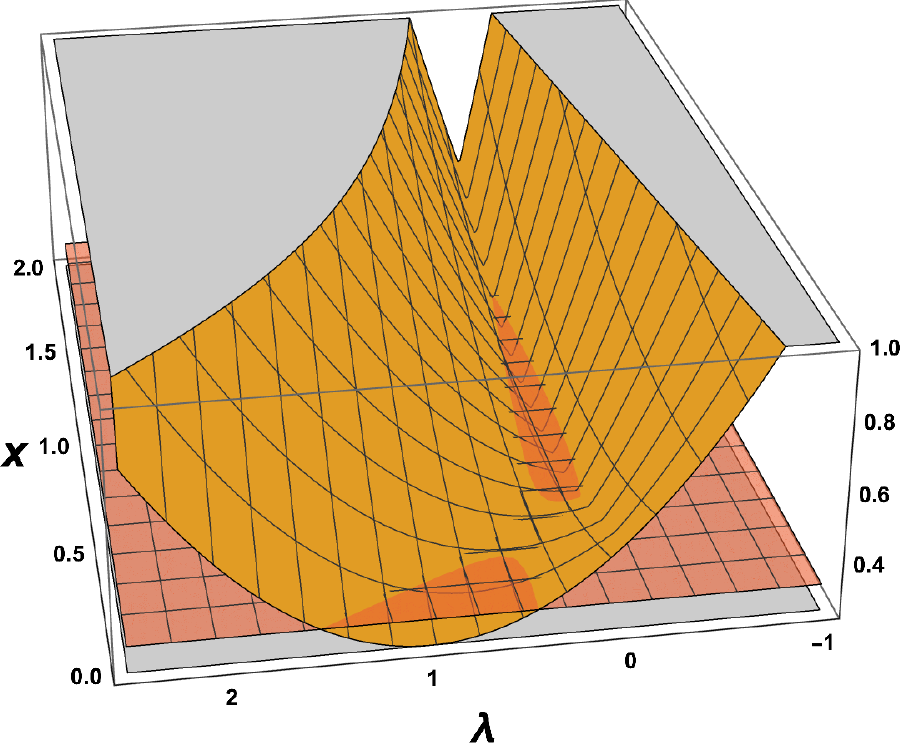}}
	\raisebox{-0.5\height}{\includegraphics[scale=0.52]{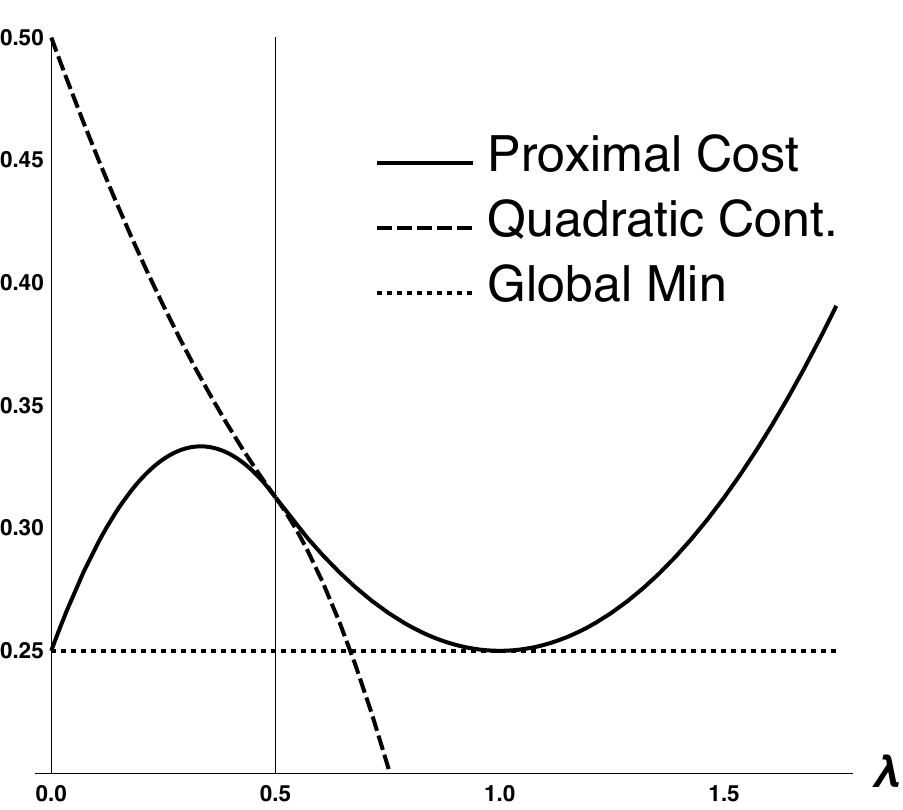}}
	\caption{\textbf{The Proximal Cost} 
	\textit{Left:} The function $g(x,\lambda)=\lambda|x|$.
	The proximal cost \textit{Center:} of $(x,\lambda)$
	\textit{Right:} marginal for $\lambda$ with $\lambda_0 = x_0=1; s_\lambda=s_x = 2$ yielding two optima.
	}
	\label{fig:nonconv}
\end{figure}

Because the $\ell_1$ regularization coefficient $\lambda$ is now being optimized over, the STO is no longer the pertinent proximal operator. 
Indeed, the $\ell_1$ regularization function, considered formally as a function of both $\lambda$ and $\mathbf{x}$, is nonconvex (see Figure \ref{fig:nonconv}, Left), somewhat complicating proximal operator computation.
In fact, many authors such as \citet{parikh2014proximal} define the $\mathrm{prox}$ operator as one that acts on convex functions.
However, there has been work on extending proximality to larger classes of functions (see e.g. \citep{hare2009computing}) and, of particular interest to the statistical community, development of proximal operators for the nonconvex ``Bridge Penalties" $|\beta_p|^q$ for $q\in(0,1)$ \citep{marjanovic2013exact}; see \cite{polson2015proximal} for more on proximal methods in statistics.
Perhaps because of this focus on convexity (and despite the popularity of the $\ell_1$ norm and adaptive penalty methods), the proximal operator of $\lambda |x|$ as a $\mathbb{R}^+\times\mathbb{R}\to\mathbb{R}^+$ function has not to our knowledge been previously examined in the literature.
It turns out that the action $\mathrm{prox}(\mathbf{x},\boldsymbol\lambda)$ of this proximal operator is available in closed form and is single-valued for almost all inputs and always for sufficiently small step sizes $s_x$ and $s_\lambda$ such that $s_xs_\lambda<1$.
We will assume in this section that $\tau=1$, since a different $\tau$ simply scales the step sizes.

Consider the proximal operator of the variable-coefficient $\ell_1$ norm function $g(\mathbf{x}, \boldsymbol\lambda)=\sum_{p=1}^P \lambda_p |x_p|$. 
Since this function decomposes into additive functions of each $(\lambda_p$, $x_p)$ individually, its proximal operator acts on each block independently of the others. 
Therefore, for the remainder of this section, we consider a single block $(\lambda, x)$, dropping the index $p$, and consider the proximal operator of the 2-dimensional function $g(x,\lambda) = \lambda |x|$:
\begin{flalign}\tag{P1}\label{eq:prox_prob}
     \mathrm{prox}^{s_\lambda,s_x}_g (x_0, \lambda_0) = \underset{x\in\mathbb{R},\lambda>0}{\mathrm{argmin}} \,\,  \lambda |x|+\frac{(x-x_0)^2}{2s_x} + \frac{(\lambda-\lambda_0)^2}{2s_\lambda} \,\, . & \hspace{5em}
\end{flalign}

\begin{lemma}\label{lem:marg}
    The marginal cost of \ref{eq:prox_prob} with respect to $\lambda$ (i.e. with $x$ profiled out) is the following piecewise quadratic expression:
\begin{equation}
     \underset{\lambda>0}{\mathrm{argmin}}  \begin{cases}
        \frac{1}{2}(\frac{1}{s_\lambda} - s_x)\lambda^2+(|x_0|-\frac{\lambda_0}{s_\lambda})\lambda + \frac{\lambda_0^2}{2s_\lambda} & \lambda < \frac{|x_0|}{s_x} \\
        \frac{(\lambda-\lambda_0)^2}{2s_\lambda}+\frac{x_0^2}{2s_x} & \lambda \geq \frac{|x_0|}{s_x}\,\, , \\
     \end{cases}
\end{equation}
where the changepoint $\lambda=\frac{|x_0|}{s_x}$ is the point where $\lambda$ is just large enough to push $x$ to zero.
\end{lemma}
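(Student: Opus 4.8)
The plan is to profile out $x$ from (\ref{eq:prox_prob}) by exploiting the separable structure of the objective. For any fixed $\lambda > 0$, the only $x$-dependent terms are $\lambda|x| + \frac{(x-x_0)^2}{2s_x}$, which is precisely the one-dimensional soft-thresholding problem whose solution is recorded in Equation (\ref{eq:soft.thresh}). I would therefore first note that the inner minimizer is $x^\star(\lambda) = \mathrm{sgn}(x_0)\,(|x_0| - s_x\lambda)^+$, and that the profiled cost is obtained by substituting $x^\star(\lambda)$ back into the full objective. Since the quadratic $\frac{(\lambda-\lambda_0)^2}{2s_\lambda}$ does not depend on $x$, it passes through the profiling unchanged, so the whole task reduces to evaluating $h(\lambda) := \lambda|x^\star(\lambda)| + \frac{(x^\star(\lambda)-x_0)^2}{2s_x}$ and adding this $\lambda$-quadratic back on.

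The soft-thresholding formula already supplies the natural changepoint: the threshold $(|x_0| - s_x\lambda)^+$ vanishes exactly when $\lambda \geq |x_0|/s_x$, which is the boundary appearing in the lemma and which I would interpret as the value of $\lambda$ just large enough to drive $x^\star$ to zero. I would accordingly split into the two regimes $\lambda < |x_0|/s_x$ (the ``active'' case, $x^\star \neq 0$) and $\lambda \geq |x_0|/s_x$ (the ``thresholded'' case, $x^\star = 0$) and compute $h(\lambda)$ in each.

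In the thresholded regime the computation is immediate: $x^\star = 0$ gives $h(\lambda) = \frac{x_0^2}{2s_x}$, and adding the $\lambda$-quadratic yields the second branch directly. In the active regime I would use $|x^\star(\lambda)| = |x_0| - s_x\lambda$ together with $x^\star(\lambda) - x_0 = -s_x\lambda\,\mathrm{sgn}(x_0)$, so that $(x^\star - x_0)^2 = s_x^2\lambda^2$. Substituting gives $\lambda|x_0| - \frac{s_x}{2}\lambda^2$ from $h$, which combines with the expansion of $\frac{(\lambda-\lambda_0)^2}{2s_\lambda}$ to produce the stated quadratic $\frac{1}{2}\big(\frac{1}{s_\lambda}-s_x\big)\lambda^2 + \big(|x_0|-\frac{\lambda_0}{s_\lambda}\big)\lambda + \frac{\lambda_0^2}{2s_\lambda}$.

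The argument is essentially routine once the inner problem is identified as soft-thresholding; the only points requiring care are the sign bookkeeping in the active case and confirming that the two branches match at the changepoint. For the former, I would observe that both $|x^\star|$ and $(x^\star - x_0)^2$ are even in $\mathrm{sgn}(x_0)$, so the dependence on the sign of $x_0$ cancels and only $|x_0|$ survives, which is why no separate treatment of $x_0 > 0$ versus $x_0 < 0$ is needed. The main (mild) obstacle is simply keeping the algebra organized so that the cross term $-\frac{\lambda\lambda_0}{s_\lambda}$ and the constant $\frac{\lambda_0^2}{2s_\lambda}$ land in the right place; continuity at $\lambda = |x_0|/s_x$ serves as a useful sanity check, since both branches evaluate there to $\frac{x_0^2}{2s_x} + \frac{(|x_0|/s_x - \lambda_0)^2}{2s_\lambda}$, consistent with the continuity of the original objective and of $x^\star(\lambda)$.
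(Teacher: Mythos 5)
Your proposal is correct and follows essentially the same route as the paper's own proof: convert to a nested optimization, solve the inner $x$-problem in closed form via the soft-thresholding operator, and substitute $x^\star(\lambda)$ back to obtain the piecewise quadratic in $\lambda$. Your sign bookkeeping ($(x^\star - x_0)^2 = s_x^2\lambda^2$ in the active regime) and the continuity check at $\lambda = |x_0|/s_x$ are both correct and slightly more explicit than the paper's one-line derivation.
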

\begin{proof}
    Convert to nested optimization and exploit the known solution for fixed $\lambda$ given by the soft thresholding operator.
\end{proof}
The quadratic polynomial in the interval $[\frac{|x_0|}{s_x},\infty)$ is always convex. When $s_\lambda s_x<1$, the quadratic polynomial in the other interval is convex as is the overall expression. But when $s_\lambda s_x>1$, the coefficient of the quadratic term is negative, and that polynomial is concave, yielding a nonconvex piecewise function (see Figure \ref{fig:nonconv}, center and right). 

We are now prepared to develop the proximal operator.


\begin{theorem}
    The optimizing $\lambda$ for the proximal program \ref{eq:prox_prob} is given by, when $s_x s_\lambda<1$:
    \begin{equation}\label{eq:prox1}
        \lambda^* =\begin{cases} 
          \lambda_0 & \lambda_0 \geq \frac{|x_0|}{s_x} \\
          \frac{(\lambda_0-s_\lambda|x_0|)^+}{1-s_\lambda s_x} & o.w. \,\,\,\, ,
       \end{cases} 
    \end{equation}
    and, when $s_x s_\lambda\geq1$, by $\lambda^* = \mathbbm{1}_{\big[\frac{\lambda_0}{\sqrt{s_\lambda}} > \frac{|x_0|}{\sqrt{s_x}}\big]} \lambda_0$  (here $\mathbbm{1}$ denotes the indicator function).
In either case $x^* = (|x_0|-s_x\lambda^*)^+\mathrm{sgn}(x_0)$.
\end{theorem}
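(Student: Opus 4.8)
The plan is to minimize the one–dimensional marginal cost furnished by Lemma~\ref{lem:marg}, which I will denote $M(\lambda)$, over $\lambda>0$, and then recover $x^*$ by substituting $\lambda^*$ into the fixed-$\lambda$ soft-thresholding solution \eqref{eq:soft.thresh}. Write $\lambda_c=|x_0|/s_x$ for the changepoint. The first step is to record the two candidate stationary points: the right piece $\frac{(\lambda-\lambda_0)^2}{2s_\lambda}+\frac{x_0^2}{2s_x}$ is convex with vertex $\lambda_0$, while differentiating the left piece yields the vertex $\frac{\lambda_0-s_\lambda|x_0|}{1-s_\lambda s_x}$. The key preliminary computation I would carry out is to check that $M$ is continuously differentiable across $\lambda_c$: both one-sided derivatives equal $\tfrac{1}{s_\lambda}\big(\tfrac{|x_0|}{s_x}-\lambda_0\big)$. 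This $C^1$ fact makes the sign of $M'(\lambda_c)$ the single quantity deciding on which side of $\lambda_c$ the minimizer lives, and it is the observation that drives the rest of the argument.

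In the convex regime $s_\lambda s_x<1$ both quadratic pieces open upward, so $M$ is globally convex and its minimizer is pinned down by the sign of $M'(\lambda_c)$. If $\lambda_0\ge\lambda_c$ then $M'(\lambda_c)\le0$, the minimizer lies on the right piece at its vertex, and $\lambda^*=\lambda_0$. Otherwise $M'(\lambda_c)>0$, the minimizer lies on the left piece and equals the left vertex clipped to the feasible set $\lambda>0$, namely $\frac{(\lambda_0-s_\lambda|x_0|)^+}{1-s_\lambda s_x}$. Here I would also verify that this vertex genuinely falls below $\lambda_c$; the inequality reduces to $\lambda_0<|x_0|/s_x$, which holds in this subcase, so the claimed point is interior to the left piece.

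The nonconvex regime $s_\lambda s_x\ge1$ is where the real work lies, and I expect it to be the main obstacle. Now the left piece is concave (affine when $s_\lambda s_x=1$), so its minimum over $[0,\lambda_c]$ occurs at an endpoint, leaving only three candidates: the boundary $\lambda=0$, the changepoint $\lambda_c$, and the right vertex $\lambda_0$. I would first eliminate $\lambda_c$: when $\lambda_0\ge\lambda_c$ the $C^1$ derivative $M'(\lambda_c)\le0$ shows $M$ descends through $\lambda_c$ into the right piece, while when $\lambda_0<\lambda_c$ concavity together with $M'(\lambda_c)>0$ forces $M$ to increase on all of $(0,\infty)$, giving minimizer $0$. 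The minimizer is then whichever of $\lambda=0$ and $\lambda=\lambda_0$ has smaller cost, with values $M(0)=\frac{\lambda_0^2}{2s_\lambda}$ and $M(\lambda_0)=\frac{x_0^2}{2s_x}$; the right vertex wins exactly when $\frac{\lambda_0^2}{s_\lambda}>\frac{x_0^2}{s_x}$, which upon taking square roots is the stated condition $\frac{\lambda_0}{\sqrt{s_\lambda}}>\frac{|x_0|}{\sqrt{s_x}}$, yielding the indicator form. The fiddly bookkeeping I anticipate is confirming the value comparison reproduces the indicator across all subcases—checking it returns $0$ in the subcase $\lambda_0<\lambda_c$, and that in the affine boundary case $s_\lambda s_x=1$ the two candidates may tie, with the convention selecting $\lambda^*=0$ matching the ``single-valued for almost all inputs'' caveat.

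Finally, in every case I would substitute the computed $\lambda^*$ into the soft-thresholding operator \eqref{eq:soft.thresh} with $c=1$, which immediately returns $x^*=(|x_0|-s_x\lambda^*)^+\mathrm{sgn}(x_0)$ and completes the proof.
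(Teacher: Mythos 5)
Your proposal is correct and takes essentially the same route as the paper's own proof: both profile out $x$ via Lemma \ref{lem:marg} and then compare the constrained optima of the two quadratic pieces, splitting on whether $s_xs_\lambda<1$ (convex case, vertex comparison giving $\lambda_0$ or the clipped left vertex $\frac{(\lambda_0-s_\lambda|x_0|)^+}{1-s_\lambda s_x}$) or $s_xs_\lambda\geq1$ (concave left piece, endpoint comparison reducing to $\frac{\lambda_0^2}{2s_\lambda}$ versus $\frac{x_0^2}{2s_x}$). Your extra observation that the marginal cost is $C^1$ at the changepoint $\frac{|x_0|}{s_x}$, with both one-sided derivatives equal to $\frac{1}{s_\lambda}\big(\frac{|x_0|}{s_x}-\lambda_0\big)$, is a mild refinement of the same argument that makes the elimination of the changepoint candidate and the subcase $\lambda_0<\frac{|x_0|}{s_x}$ (where the indicator must return $0$) fully explicit where the paper's boundary comparison is terser.
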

\begin{proof}
    We need only compare the optima of the quadratic functions of Lemma \ref{lem:marg}.
\end{proof}


Due to the nonconvexity of the proximal cost, this proximal program may have two global optima. 
Thus the proximal operator is discontinuous and multi-valued at the discontinuity, as visualized in the top right of Figure \ref{fig:prox_act}.



\begin{remark}\label{rm:dualsparse}
    When $s_x s_\lambda<0$ and $\lambda_0 < s_\lambda |x_0|$ or when $s_x s_\lambda>1$ and $\frac{\lambda_0}{\sqrt{s_\lambda}}>\frac{|x_0|}{\sqrt{s_x}}$, the solution to the proximal problem gives $\lambda = 0$, which would lead to no shrinkage on $\beta$.
\end{remark}
Remark \ref{rm:dualsparse} is interesting, as it implies that it is possible to develop a procedure with ``dual sparsity": on the regression coefficient, when appropriate, or on the penalty coefficient.
However, in our application of this operator to the Laplace penalty, the $\lambda$ normalization term gives this a density of zero, precluding that point being a penalized maximizer.
We look forward to examining other models which allow for zero penalties.

\begin{figure}[h]
    \centering
    {\tiny\textbf{Fixed $s_xs_\lambda$:}}

    \includegraphics[scale=0.3]{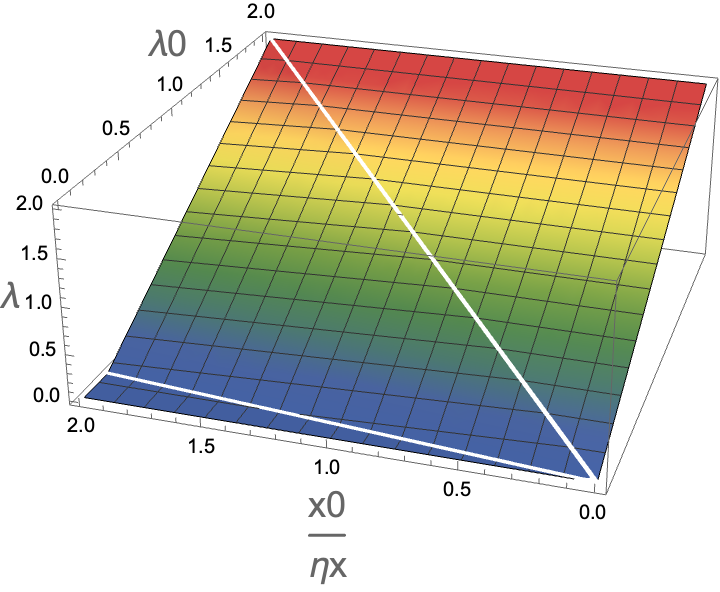}
    \includegraphics[scale=0.3]{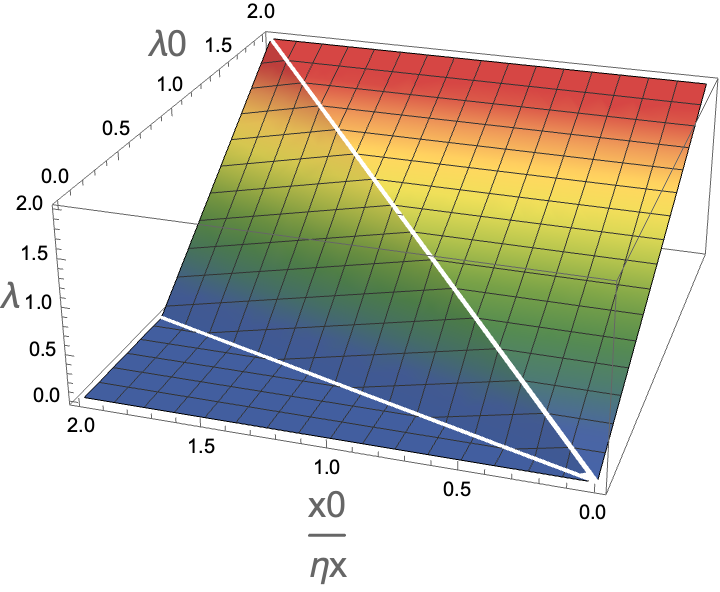}
    \includegraphics[scale=0.3]{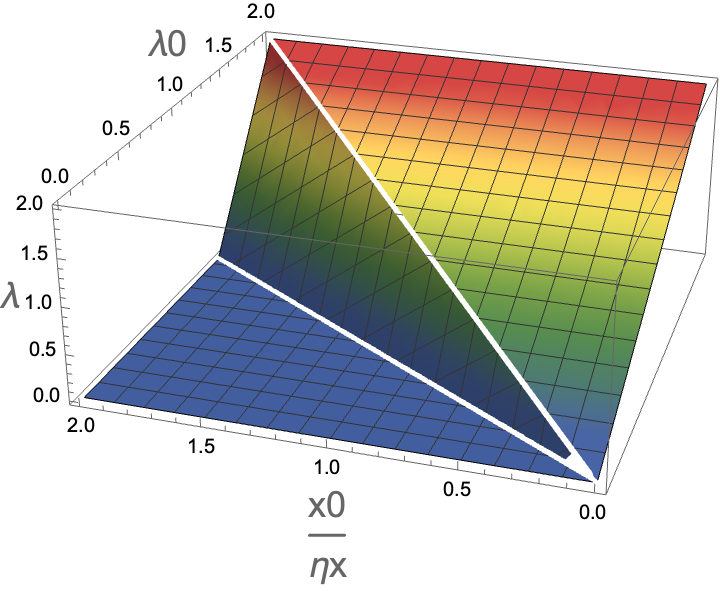}
    \includegraphics[scale=0.3]{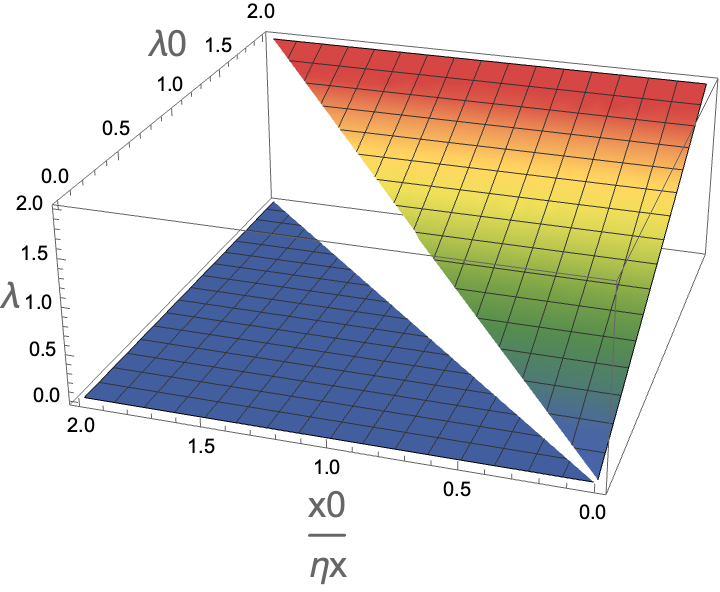}
    
    {\tiny\textbf{Fixed $\frac{|x_0|}{s_x}$:}}

    \includegraphics[scale=0.3]{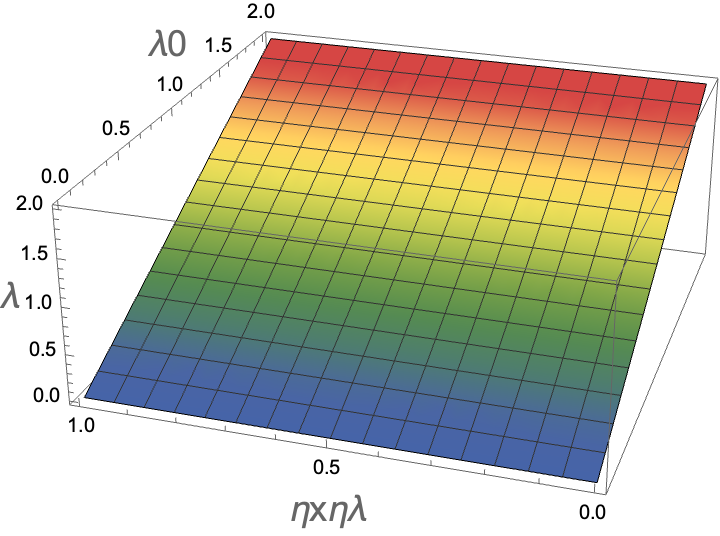}
    \includegraphics[scale=0.3]{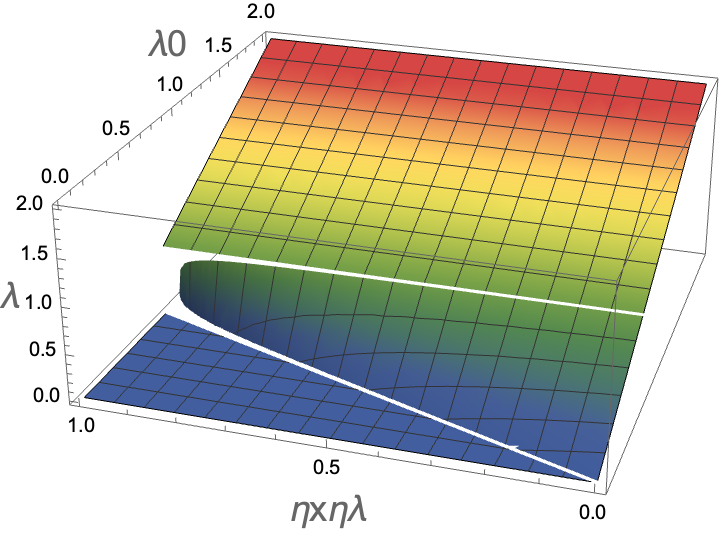}
    \includegraphics[scale=0.3]{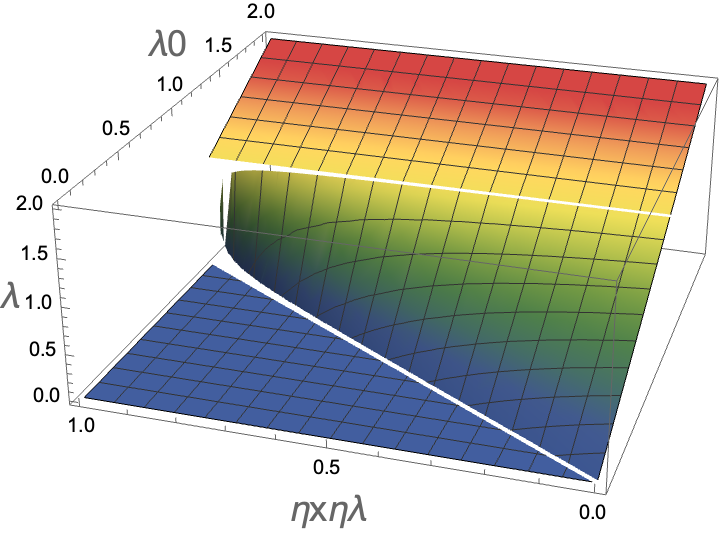}
    \includegraphics[scale=0.3]{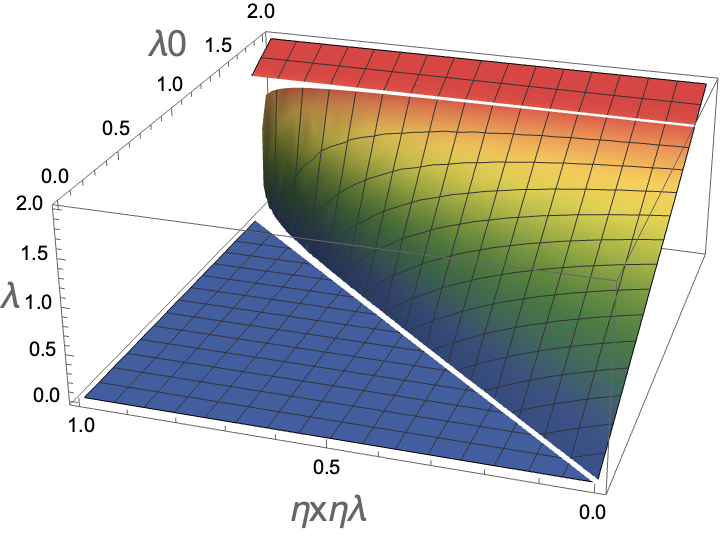}
    
    {\tiny\textbf{Fixed $\lambda_0$:}}

    \includegraphics[scale=0.3]{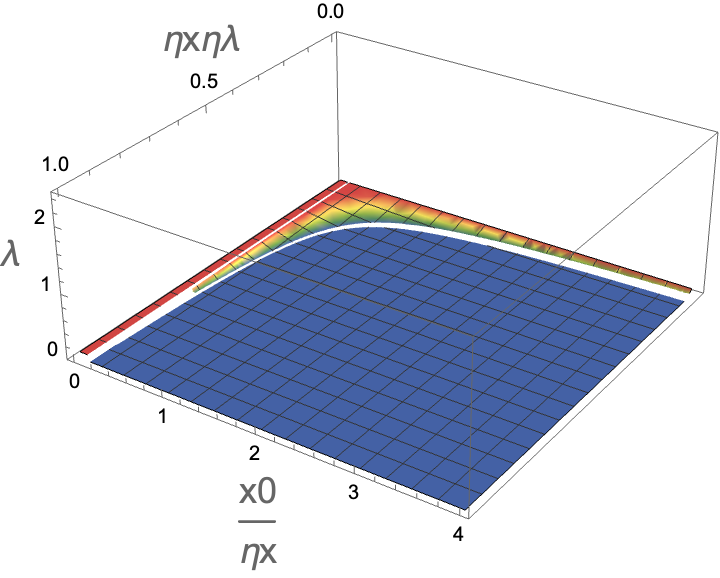}
    \includegraphics[scale=0.3]{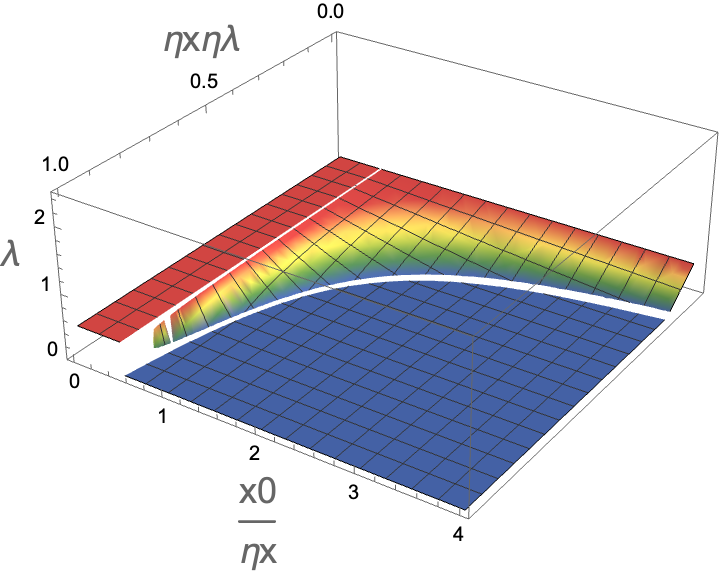}
    \includegraphics[scale=0.3]{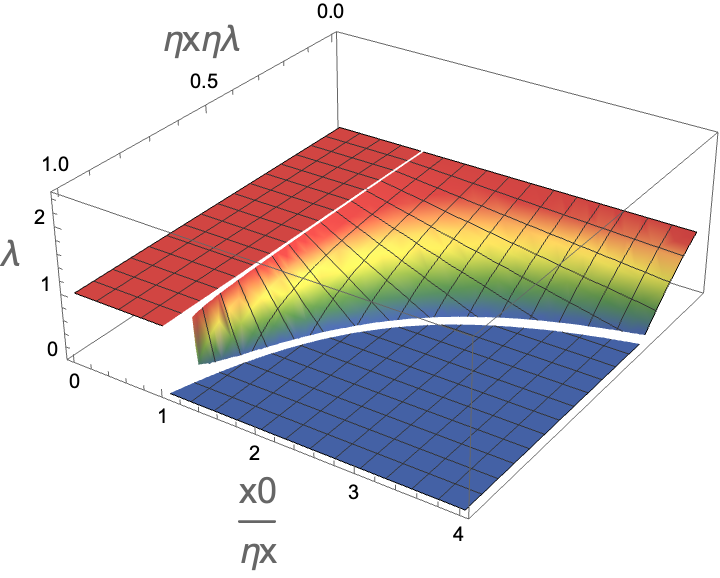}
    \includegraphics[scale=0.3]{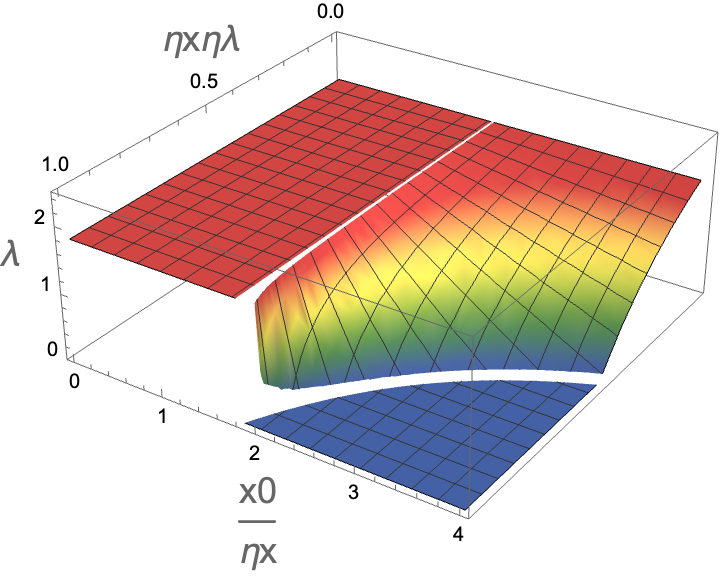}

    \caption{
    \textbf{The Action of the Proximal Operator:}
    Plots of the reduced proximal operator (Eq \ref{eq:reduced_prox}) considering two variables at a time and fixing the third.
    \textit{Top:} for various fixed $b:=s_x s_\lambda<1$ and with $\lambda_0,\frac{|x_0|}{s_x} \in (0,2)$.
    Values $b=s_x s_\lambda\in\{0.1,0.35,0.65,0.99\}$ are shown left to right.
    \textit{Mid:} $a=\frac{|x_0|}{s_x}\in\{0,0.5,1,1.8\}$. 
    \textit{Bottom:} $\lambda_0\in\{0.2,0.75,1.25,1.75\}$.
    }
    \label{fig:prox_act}
\end{figure}

This proximal operator has been conceptualized as a mapping of $(\lambda_0,x_0)\to(\lambda^*,x^*)$ parameterized by $s_x$ and $s_\lambda$, but to visualize it we will briefly study it as a function of these four quantities mapping to an optimizing $\lambda^*$.

\begin{remark}
Since when $s_x s_\lambda>1$ the $\lambda^*$ is either $0$ or $\lambda_0$, we will focus on the case where $s_x s_\lambda<1$. 
Then, let $a:=\frac{|x_0|}{s_x}$ and $b:=s_xs_\lambda$ yielding a function of just three variables:
\begin{equation}\label{eq:reduced_prox}
    \lambda(\lambda_0,a,b) = \begin{cases}
        \lambda_0 & \lambda_0 \geq a \\
        \frac{(\lambda_0-ab)^+}{1-b} & o.w. \,\,\,\,\, .
    \end{cases}
\end{equation}
\end{remark}
This function is visualized in Figure \ref{fig:prox_act}. 
For given step size product $b$ the function is stepwise linear, and converges to the identity mapping with respect to its $\lambda_0$ input as $b\to 0$. 
As $b\to 1$, the mapping becomes more and more steep for $\lambda_0\in(ab,a)$, gradually converging to the discontinuous mapping $\lambda(\lambda_0,a,b)\to \lambda_0 \mathbbm{1}_{\lambda_0 > a}$ .
The proximal operator is multi-valued at $\lambda_0=a$ when $b=1$.

\noindent\textbf{Comparison to Other Penalties in Statistics:} Existing nonconvex alternatives to our proposed biconvex penalty in the form of the MCP, SCAD and Bridge penalties all come with an extra hyperparameter controlling how close the penalty gets to approximating $||.||_0$ directly.
The proposed approach retains only the single global penalty strength parameter, which in practice means it is possible to simply try all pertinent parameter values via a sequence of warm starts.
However, it does require specification of a prior on $\lambda$, though we have found the standard Half-Cauchy to be generally sufficient.
\citet{candes2008enhancing} and \citet{fan2014strong} propose to set $\lambda_p$ in order to locally approximate a prespecified nonconvex penalty, and so also updates $\lambda_p$ each iteration.

\subsection{Deploying the $\mathrm{prox}$ operator with VISTA}

We can deploy the proximal operator of Equation $\ref{eq:prox1}$ as part of a proximal gradient method:
\begin{align}
    & \tilde{\boldsymbol\beta}^{t+1} \gets \boldsymbol\beta^t - s_{\boldsymbol\beta} \nabla_{\boldsymbol\beta} \mathcal{L}(\boldsymbol\beta^t, \boldsymbol\lambda^t) \\
    & \tilde{\boldsymbol\lambda}^{t+1} \gets \boldsymbol\lambda^t - s_{\boldsymbol\lambda} \nabla_{\boldsymbol\lambda} [\mathcal{L}(\boldsymbol\beta^t, \boldsymbol\lambda^t)-\log\boldsymbol\lambda^t-\log P_{\boldsymbol\lambda}(\boldsymbol\lambda^t)] \\
    & \boldsymbol\beta^{t+1},\boldsymbol\lambda^{t+1} \gets \mathrm{prox}_{g}^{s_{\boldsymbol\lambda},s_{\boldsymbol\beta}}(\tilde{\boldsymbol\beta}^{t+1},\tilde{\boldsymbol\lambda}^{t+1})
\end{align}
In practice, success with any gradient descent method relies on choosing good step sizes, preconditioners, and acceleration, as we detail in Supplementary 
\if\arxiv 1
2.
\else
\ref{sec:ap_optim}.
\fi


\subsection{Basic Theoretical Properties for Penalized Likelihood}

In this section, we will assume that $\boldsymbol\lambda$ is excluded from the misfit term $\mathcal{L}(\boldsymbol\beta)$ such that it only appears in the penalty and, eventually, that this misfit term $\mathcal{L}(\bb)$ is given by a negative likelihood $-L(\y;\bb)$. In this case, we can rewrite our penalty as such (we will assume all parameters are penalized for ease of discussion):
\begin{align}
    & \underset{\boldsymbol\beta,\boldsymbol\lambda>\mathbf{0}}{\min}\,\, -L(\boldsymbol\beta) + \sum_{p=1}^P \big[\tau\lambda_p |\beta_p| - \log\lambda_p\big] + \sum_{p=1}^P -\log p_\lambda (\lambda_p) \\
    & \iff \underset{\boldsymbol\beta}{\min}\,\, -L(\boldsymbol\beta) + \sum_{p=1}^P \underset{\lambda_p>0}{\min} \big[\tau\lambda_p |\beta_p| - \log\lambda_p  -\log p_\lambda (\lambda_p) \big]  \label{eq:penlik}
\end{align}
Therefore, we may profile over $\lambda$ to develop a penalty $g_{\tau}(|\beta|) = \underset{\lambda>\mathbf{0}}{\min} \, \tau\lambda |\beta| + \log\lambda  + \rho(\lambda)$, where $\rho(\lambda) = -\log p_\lambda (\lambda)$.
We begin with some basic properties of this penalty.
\begin{lemma}
	The following hold, where $\lambda^*$ denotes the optimizing $\lambda$, and is formally a function of $\tau$ and $|\beta|$:
 \begin{multicols}{2}
	\begin{enumerate}
		\item $\lambda^* = \frac{1}{\tau|\beta| + \rho'(\lambda^*)}$.
		\item $\frac{\partial \lambda^*}{\partial |\beta|} = -\frac{\tau}{\frac{1}{\lambda^{*2}} + \rho''(\lambda)}$.
		\item $g_{\tau_n}'(|\beta|) =  \tau \lambda^*$.
		\item $g_{\tau_n}''(|\beta|) =  -\frac{\tau_n^2}{(\tau_n+\rho'(\lambda^*))^2+\rho''(\lambda^*)}$.
	\end{enumerate}
 \end{multicols}
\end{lemma}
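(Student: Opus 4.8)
The plan is to treat $\lambda^*$ as an implicitly-defined stationary point of the profiled inner objective and then chain together the first-order condition, the implicit function theorem, and the envelope theorem. Write $\phi(\lambda;|\beta|) = \tau\lambda|\beta| - \log\lambda + \rho(\lambda)$ for the inner objective appearing in \eqref{eq:penlik}, so that $g_\tau(|\beta|) = \min_{\lambda>0}\phi(\lambda;|\beta|)$. Before differentiating anything I would first record the regularity we need: because $\phi(\lambda;|\beta|)\to+\infty$ as $\lambda\to 0^+$ (the $-\log\lambda$ term) and as $\lambda\to\infty$ (the $\tau\lambda|\beta|$ term, with the tail growth of $\rho$ covering the $|\beta|=0$ edge case), a minimizer exists in the interior $(0,\infty)$; and assuming $\rho$ is convex ($\rho''\ge 0$) the objective is strictly convex in $\lambda$, so $\lambda^*$ is the \emph{unique} interior stationary point. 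Item 1 is then immediate: setting $\partial\phi/\partial\lambda = \tau|\beta| - 1/\lambda + \rho'(\lambda)$ to zero at $\lambda^*$ and solving gives $1/\lambda^* = \tau|\beta| + \rho'(\lambda^*)$, i.e. $\lambda^* = 1/(\tau|\beta|+\rho'(\lambda^*))$.

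For item 2 I would apply the implicit function theorem to the stationarity relation $F(\lambda^*,|\beta|):=\tau|\beta| - 1/\lambda^* + \rho'(\lambda^*)=0$. Since $\partial F/\partial|\beta| = \tau$ and $\partial F/\partial\lambda^* = 1/\lambda^{*2} + \rho''(\lambda^*)$, the theorem yields $\partial\lambda^*/\partial|\beta| = -\tau/(1/\lambda^{*2}+\rho''(\lambda^*))$, which is exactly the claimed expression. Item 3 follows from the envelope theorem: differentiating $g_\tau(|\beta|)=\phi(\lambda^*(|\beta|);|\beta|)$ and using that the chain-rule factor $\partial\phi/\partial\lambda$ vanishes at $\lambda^*$ leaves only the explicit dependence, so $g_\tau'(|\beta|) = \partial\phi/\partial|\beta|\big|_{\lambda^*} = \tau\lambda^*$.

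Item 4 is then just the derivative of item 3: $g_\tau''(|\beta|) = \tau\,\partial\lambda^*/\partial|\beta|$, and substituting item 2 gives $g_\tau''(|\beta|) = -\tau^2/(1/\lambda^{*2}+\rho''(\lambda^*))$; finally using item 1 to replace $1/\lambda^{*2}$ by $(\tau|\beta|+\rho'(\lambda^*))^2$ puts it into the advertised quotient form. The step I expect to be the real obstacle is not the calculus but justifying the differentiability of $\lambda^*$ as a function of $|\beta|$ — that is, verifying the hypotheses of the implicit function theorem, in particular that the denominator $1/\lambda^{*2}+\rho''(\lambda^*)$ is nonzero (which is guaranteed once $\rho$ is convex, since it is then strictly positive), together with interiority and uniqueness of the minimizer so that the envelope theorem applies without a minimizer-selection ambiguity. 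I would state these as mild conditions on the hyperprior $\rho$ (with the half-Cauchy as the running example) rather than re-deriving them inline.
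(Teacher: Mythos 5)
Your proposal matches the paper's proof essentially step for step: first-order stationarity of the inner objective for item 1, implicit differentiation of that same stationarity equation for item 2, the envelope argument (the chain-rule term $\frac{\partial c^p}{\partial\lambda}\bigr|_{\lambda^*}$ vanishing) for item 3, and differentiating item 3 with substitution of items 1 and 2 for item 4 — your added regularity discussion (interiority via the $-\log\lambda$ barrier, differentiability of $\lambda^*$ via the implicit function theorem) only makes explicit what the paper leaves implicit. One small caveat on your side remark: the Half-Cauchy's $\rho(\lambda)=\log(1+\lambda^2/a)+C$ is \emph{not} convex (its second derivative turns negative for $\lambda>\sqrt{a}$), so convexity of $\rho$ cannot be the condition your running example satisfies; the nonvanishing of $\frac{1}{\lambda^{*2}}+\rho''(\lambda^*)$ should instead be justified by the second-order condition at a nondegenerate interior minimizer, which is consistent with the paper's later assumption that $\rho''$ is merely bounded.
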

\begin{proof}
    These follow from implicit differentiation on first order optimality conditions.
\end{proof}
This allows us to quantify the behavior of this penalty as follows:
\begin{theorem}
    Assume that the logarithmic derivative of the hyperprior density on $\lambda$ is bounded ($|\rho'(\lambda)|< M_1 \,\, \forall \lambda\geq0$) and that the density is decreasing on $(0,\infty)$. Then:
    \begin{enumerate}
        \item $g'_{\tau}(|\beta|)\approx\frac{1}{|\beta|}$ for large $\beta$.
        \item The minimum of $|\beta|+g'_{\tau}(|\beta|)$ is achieved at $\beta=0$ with value $\lambda_a\tau$.
    \end{enumerate}
\end{theorem}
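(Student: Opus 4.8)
The plan is to reduce both claims to the first-order characterization of the profiled penalty supplied by the preceding Lemma, namely $g'_\tau(|\beta|) = \tau\lambda^*$ together with the stationarity identity $\frac{1}{\lambda^*} = \tau|\beta| + \rho'(\lambda^*)$. Combining these gives the single working expression
\[
g'_\tau(|\beta|) \;=\; \tau\lambda^* \;=\; \frac{\tau}{\tau|\beta| + \rho'(\lambda^*)} \;=\; \frac{1}{|\beta| + \rho'(\lambda^*)/\tau},
\]
from which both statements follow by reading off the size of the correction term $\rho'(\lambda^*)/\tau$.

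For claim 1, I would first show $\lambda^* \to 0$ as $|\beta| \to \infty$: the stationarity identity forces $1/\lambda^* = \tau|\beta| + \rho'(\lambda^*) \ge \tau|\beta| - M_1 \to \infty$, where the bound uses the hypothesis $|\rho'| < M_1$. Since $\rho'(\lambda^*)$ is then trapped in $(0,M_1)$ (positive because the density is decreasing), the correction term $\rho'(\lambda^*)/\tau$ stays bounded while $|\beta|$ diverges, so the displayed expression gives $|\beta|\,g'_\tau(|\beta|) = |\beta|/(|\beta| + \rho'(\lambda^*)/\tau) \to 1$. This is the precise sense of $g'_\tau(|\beta|) \approx 1/|\beta|$, i.e. the vanishing-bias-on-large-coefficients property shared with SCAD/MCP.

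For claim 2, I would begin by locating the origin value. At $|\beta|=0$ the identity reads $1/\lambda_a = \rho'(\lambda_a)$, which has a positive root precisely because the decreasing-density assumption gives $\rho' > 0$ (and boundedness $\rho' < M_1$ pins $\lambda_a > 1/M_1$ away from $0$); hence $h(0) := 0 + g'_\tau(0) = \tau\lambda_a$. To show this is the minimum of $h(t) := t + g'_\tau(t)$ over $t = |\beta| \ge 0$, I would use that $g'_\tau$ is decreasing (Lemma: $g''_\tau < 0$), writing $g''_\tau(t) = \tau\,\tfrac{d\lambda^*}{dt} = -\tau^2/D(t)$ with $D(t) := \tfrac{1}{\lambda^{*2}} + \rho''(\lambda^*)$, so that
\[
h(t) - h(0) \;=\; \int_0^t \big(1 + g''_\tau(s)\big)\,ds \;=\; \int_0^t \Big(1 - \frac{\tau^2}{D(s)}\Big)\,ds.
\]
The claim thus amounts to nonnegativity of the integrand, i.e. to $D(s) \ge \tau^2$ along the trajectory.

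The main obstacle is exactly this inequality, since the hypotheses constrain $\rho'$ but say nothing about $\rho''$. Two observations make it tractable. First, $D$ is the second-order positivity certificate for $\lambda^*$ being the inner minimizer, so $D>0$ automatically; and since $\lambda^*\to 0$ as $t$ grows, $1/\lambda^{*2}\to\infty$ and $D\to\infty$, so the inequality can only fail near $t=0$. The binding condition therefore localizes to the origin, $\tau^2 \le \rho'(\lambda_a)^2 + \rho''(\lambda_a)$, which holds in the regularization regime of interest (small, or asymptotically vanishing, $\tau_n$). I would accordingly establish claim 2 in that regime, report the threshold value $\tau\lambda_a$, and note that for larger $\tau$ the minimizer migrates off the origin as the penalty becomes more aggressively concave; the decreasing-density and boundedness assumptions enter only to guarantee existence and positivity of $\lambda_a$, hence a strictly positive threshold.
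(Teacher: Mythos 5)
Your part 1 is essentially the paper's own argument, executed more carefully. The paper's proof writes $g'(|\beta|)=\tau/(\tau|\beta|+\rho(\lambda^*))$ and invokes boundedness of ``$\rho$'', where both occurrences should read $\rho'$; your version, which first shows $\lambda^*\to 0$ from the stationarity identity, traps $\rho'(\lambda^*)$ in $(0,M_1)$ using the decreasing-density hypothesis, and then concludes $|\beta|\,g'_\tau(|\beta|)\to 1$, is the correct rendering of the same idea and needs no change.

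For part 2 you diverge from the paper, and rightly so: here it is the paper's proof, not yours, that has the gap. The paper argues that ``each term of $|\beta|+\lambda^*\tau$ is decreasing in $|\beta|$ individually, and so the minimum of their sum must occur at $0$,'' which is a non sequitur (a sum of decreasing terms is minimized at infinity, not at zero) and factually wrong besides, since the first term is increasing while Lemma 2 makes $\tau\lambda^*$ decreasing; the paper's accompanying claim ``$\lambda_a\leq\lambda^*$'' also has the inequality backwards, as $\lambda^*$ decreases from $\lambda^*(0)=\lambda_a$. Your criterion $h(t)-h(0)=\int_0^t\bigl(1-\tau^2/D(s)\bigr)\,ds$ with $D=1/\lambda^{*2}+\rho''(\lambda^*)$ is exactly right and shows the conclusion is $\tau$-dependent: at $t=0$ it is necessary that $\tau^2\leq\rho'(\lambda_a)^2+\rho''(\lambda_a)$. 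This restriction is not vacuous under the stated hypotheses. Take an exponential hyperprior, so $\rho'\equiv\theta$, $\rho''\equiv 0$ (density decreasing, $|\rho'|$ bounded): then $h(t)=t+\tau/(\tau t+\theta)$, and for any $\tau>\theta$ the interior minimum $2-\theta/\tau$ is strictly below $h(0)=\tau/\theta$; even the paper's recommended Half-Cauchy gives $\lambda_a=\sqrt{a}$ and $D(0)=1/a$, so the origin ceases to be the minimizer once $\tau>1/\sqrt{a}$. So the theorem as stated requires a smallness condition on $\tau$, precisely as you say (a condition in some tension with the paper's later oracle regime $\tau_n=n\tau_0\to\infty$). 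Two caveats on your write-up: your ``localizes to the origin'' step is heuristic, since $D$ need not be monotone along the trajectory — you must require $D(s)\geq\tau^2$ for all $s$, not merely at $s=0$ (boundedness of $\rho''$, assumed only in the later theorem, is what makes this checkable); and it is worth separating the two Fan--Li properties, because $\min_t h(t)>0$ (sparsity) survives for every $\tau$ since $\lambda^*>0$ by the log barrier — it is only the ``argmin at zero'' (continuity) half that fails for large $\tau$.
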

\begin{proof}
    Follows from Lemma 2.
\end{proof}

\begin{remark}
    $g'_{\tau}(|\beta|)\approx\frac{1}{|\beta|}$ approximates the gradient of the adaptive lasso \citep{Zou2006} procedure with optimal weights with hyperparameter $\gamma=1$ as well as the iteration described by \citet{candes2008enhancing}.
\end{remark}
\begin{remark}
    \citet{Fan2001} describe three desirable properties of nonconcave penalties: first, that they have bias decreasing quickly in nonzero parameter size, second, that they induce sparsity, and third, that they be continuous in the data. The shrinking gradient size with parameter norm is sufficient to ensure unbiasedness for large parameters. The fact that the minimum of $|\beta|+g'_{\tau}(|\beta|)$ is strictly positive ensures sparsity, while the fact that the minimum occurs at zero ensures continuity. This latter condition is not satisfied by, for example, the bridge penalty.
\end{remark}
\begin{remark}
Bounded logarithmic derivatives are satisfied by the densities of, for example, the Cauchy and Exponential distributions, but not the Gaussian distribution.
\end{remark}

We next consider the asymptotic distribution of the penalized likelihood estimator. In particular, we demonstrate that there exists a local minimizer of the penalized loss which satisfies the oracle property of \citet{Fan2001}.
We assume that the first $r$ entries of the true parameter vector $\bb_0$ are nonzero, and the rest 0, such that $\bb_0={\tiny\begin{pmatrix} \bb_{10}\\ \bb_{20}=\mathbf{0}\end{pmatrix}}$.
\begin{theorem}
    Let $\tau_n=n\tau_0$ for $\tau_0>0$, and further assume that $|\rho''(|\lambda|)|<M_2$ (bounded second logarithmic derivative). Then, under the standard regularity conditions on the likelihood enumerated in the supplementary material, there is a local minimum of $\ref{eq:penlik}$ that satisfies the following:
    \begin{enumerate}
        \item $\hat{\bb}_2=\mathbf{0}$ with probability approaching 1 as $n\to\infty$.
        \item $\hat{\bb}_1$ is asymptotically normal with covariance given approximately by $\frac{1}{n}I^{-1}(\bb_{10})$, the Fisher information matrix considering only active variables.
    \end{enumerate}
\end{theorem}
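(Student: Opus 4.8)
The plan is to follow the two-stage oracle-property template of Fan and Li (2001): first establish the existence of a root-$n$ consistent local minimizer of the profiled objective \eqref{eq:penlik}, then show that this minimizer is exactly sparse on the inactive block and asymptotically normal on the active block. Throughout I would work with the profiled penalty $g_{\tau_n}(|\beta|)$ and lean on Lemma 2 and the preceding theorem on the behavior of $g_{\tau_n}$, whose derivative formulas are precisely what control the competition between the likelihood's $O(n)$ curvature and the penalty.

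\textbf{Step 1 (root-$n$ consistency).} I would show that for every $\epsilon>0$ there is a large constant $C$ with $P\big(\inf_{\|\u\|=C} Q(\bb_0 + n^{-1/2}\u) > Q(\bb_0)\big) \ge 1-\epsilon$, which forces a local minimizer inside the ball of radius $Cn^{-1/2}$. Taylor-expanding $-L$ produces a linear term $n^{-1/2}\u^\top\nabla(-L)(\bb_0) = O_p(\|\u\|)$ and a quadratic term $\tfrac12\u^\top[-\tfrac1n\nabla^2 L]\u \to \tfrac12\u^\top I(\bb_0)\u$, which dominates for large $C$ by positive definiteness of the Fisher information. For the penalty difference, on the active block I would use the fact that $g'_{\tau_n}(|\beta_{0p}|)\to 1/|\beta_{0p}| = O(1)$, so each active term changes by only $O(n^{-1/2})$; on the inactive block $g_{\tau_n}$ is increasing, so its contribution is nonnegative and only reinforces the inequality.

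\textbf{Step 2 (sparsity).} Given the root-$n$ consistent minimizer, I would show $\hat\bb_2 = \mathbf{0}$ with probability tending to one by examining $\partial Q/\partial\beta_p$ for an inactive coordinate over $0<|\beta_p|\le Cn^{-1/2}$; there $\partial(-L)/\partial\beta_p = O_p(\sqrt n)$ while $g'_{\tau_n}(|\beta_p|)\,\mathrm{sgn}(\beta_p)$ carries the sign of $\beta_p$ and, by the second conclusion of Theorem 2, blows up like $g'_{\tau_n}(0^+) = \tau_n\lambda_a \sim n$ as $\beta_p\to 0$. The main obstacle lives exactly here: because $g'_{\tau_n}(|\beta|)\approx 1/|\beta|$ away from the origin, at the edge of the ball $g'_{\tau_n}(Cn^{-1/2})\approx \sqrt n /C$ is only the same order as the score, so the crude derivative-sign comparison is borderline. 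I would instead use a comparison-of-values argument: integrating $g'_{\tau_n}\approx 1/|\beta|$ shows the cumulative penalty cost of a nonzero coordinate at scale $n^{-1/2}$ grows like $\tfrac12\log n$, which dominates the $O_p(1)$ likelihood gain from fitting a truly-zero coefficient, so collapsing $\beta_p$ to $0$ strictly lowers $Q$. Making this uniform over the inactive block, and verifying that the implicit dependence of $\lambda^*$ on $\beta$ through Lemma 2 does not spoil the monotonicity of $g_{\tau_n}$, is the delicate part.

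\textbf{Step 3 (asymptotic normality).} Restricting to the active block with $\hat\bb_2=\mathbf{0}$ already imposed, I would write the stationarity equation $\partial(-L)/\partial\beta_{1p} + g'_{\tau_n}(|\hat\beta_{1p}|)\,\mathrm{sgn}(\hat\beta_{1p}) = 0$ and expand the score about $\bb_{10}$. The penalty gradient is $O(1)$ and its Jacobian is governed by $g''_{\tau_n}$, which the bound $|\rho''|<M_2$ keeps at $O(1)$ (tending to $-1/|\beta_{0p}|^2$), so neither term perturbs the leading-order Hessian. Solving yields $\sqrt n(\hat\bb_1-\bb_{10}) = [-\tfrac1n\nabla^2 L]^{-1}[n^{-1/2}\nabla L(\bb_0) - o_p(1)]$, and combining the CLT for the score with the law of large numbers for the Hessian gives asymptotic normality with covariance $\tfrac1n I^{-1}(\bb_{10})$. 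The bias contributed by the penalty is $O(1)/\sqrt n = o(n^{-1/2})$ and is therefore asymptotically negligible, which is what delivers the oracle efficiency.
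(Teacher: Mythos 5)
Your proposal is correct and follows the same Fan--Li template as the paper's proof (sphere-comparison for root-$n$ consistency, derivative-sign analysis for sparsity, stationarity-plus-CLT for normality), but it diverges at exactly the step you flagged as delicate, and in an instructive way. The paper never confronts the borderline zone $|\beta_j|\asymp n^{-1/2}$: it proves sparsity first, restricting the sign argument to $\Vert\bb_2\Vert < n^{-1}$, where $g'_{\tau_n}(|\beta_j|) \geq n\tau_0/(\tau_0+\rho'(\lambda^*))$ genuinely dominates the $O_p(\sqrt{n})$ score, and only then produces the root-$n$ local minimizer in $\bb_1$ with $\bb_2$ already clamped to zero. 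Since the theorem claims only the \emph{existence} of a local minimum, this shrunken neighborhood suffices. Your value-comparison argument --- integrating $g'_{\tau_n}(s)=\tau_n/(\tau_n s+\rho'(\lambda^*))$ from $0$ to $Cn^{-1/2}$ to get a penalty cost of order $\tfrac12\log n$ against an $O_p(1)$ likelihood gain --- is a valid and genuinely stronger route: it yields sparsity over the entire root-$n$ ball rather than only within $\Vert\bb_2\Vert<n^{-1}$, at the price of the uniformity bookkeeping you acknowledge (which is unproblematic, since the assumptions $\rho'\geq 0$ and $|\rho'|\leq M_1$ give the uniform bound $g'_{\tau_n}(|\beta|)\geq \tau_n/(\tau_n|\beta|+M_1)$ independent of the implicit $\lambda^*$). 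You still need the near-origin sign argument to convert ``$|\hat\beta_p|=O_p(n^{-1})$'' into ``$\hat\beta_p=0$ exactly,'' which you correctly retain.

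One small arithmetic slip in Step 3: the penalty-induced bias is not ``$O(1)/\sqrt{n}$'' --- were it truly of exact order $n^{-1/2}$ it would \emph{not} be negligible after scaling. Since the penalty gradient $\mathbf{g}$ is $O(1)$ on the active block and enters through a Hessian of order $n$, the bias is $\tfrac1n\bigl(I_1(\bb_{10})+\tfrac1n\mathbf{D}\bigr)^{-1}\mathbf{g}=O(n^{-1})=o(n^{-1/2})$, matching the paper's displayed centering term; your conclusion is right, but the stated order is not. Also note your covariance $\tfrac1n I^{-1}(\bb_{10})$ matches the main-text statement (the supplement's $\tfrac1n I(\bb_1)$ is a typo).
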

\begin{proof}
See Supplementary 
\if\arxiv 1
1.
\else
\ref{sec:app_proofs}.
\fi
\end{proof}
The oracle property tells us that this estimator has the same asymptotic distribution as that estimator with truly zero $\beta_j$ clamped to zero.

\section{The Sparse Bayesian Lasso}\label{sec:linear}

Having developed a general-purpose optimization algorithm and developed basic properties in the penalized likelihood case, we pivot to develping a Bayesian Lasso procedure which provides full uncertainty quantification and penalty coefficient adaptation.
We achieve this by deploying the VISTA procedure of the previous section on a novel nonsmooth Variational Bayesian methodology which we call the Sparse Bayesian Lasso.

\subsection{Variational Inference and Nonsmooth Penalties}\label{sec:vb}



Variational Bayes searches for a distribution over an unknown parameter vector $\boldsymbol\theta$ that is 1) analytically tractable, and 2) sufficiently close to the posterior such that what each has to say about the quantities of interest are approximately the same.
It does so by defining an optimization problem $\underset{Q_\theta \in \mathcal{Q}}{\mathrm{argmin}}\, d(Q_\theta,P_{\theta|\y})$
where $P_{\theta|\y}$ is our true posterior probability with density $p$ and $Q_{\theta}$ is a candidate variational distribution with density $q$ from a space of possible distributions $\mathcal{Q}$ parameterized by \textit{variational parameters}, and $d$ is some measure of dissimilarity between distributions.
In this article, for each parameter $\theta$, the variational distribution is defined by a location parameter $\eta_\theta$ and a scale parameter $\nu_\theta$.
We will impose variational independence as is common: $q_{\boldsymbol\eta,\boldsymbol\nu}(\boldsymbol{\theta}) = \prod_{m=1}^M q_{\eta_m,\nu_m}(\theta_m)$.

The most common $d$ is the KL divergence from variational to posterior:
\begin{equation}
    d(Q,P) = \mathrm{KL}(Q_{\theta}||P_{\theta|\mathbf{y}}) = \mathbb{E}_{\boldsymbol\theta\sim Q}[\log\bigg(\frac{q(\boldsymbol{\theta})}{p(\boldsymbol{\theta}|\mathbf{y})}\bigg)]\, 
    = -\mathbb{E}_{\theta\sim Q}[\log \mathcal{L}(\mathbf{y}|\boldsymbol\theta)] + \mathrm{KL}(Q_{\theta}||P_{\theta})  \, .
\end{equation}
We see this $d(Q_\theta,P_{\theta|\y})$ may be viewed as the negative expected log likelihood (which we can view as a model misfit term) penalized by the KL divergence between the variational and prior distributions (which we can view as a model complexity term).
Since $Q$ was chosen to be simple, we can use a Monte Carlo sample to estimate the expected likelihood.

For example, for regression coefficients $\beta_p$, we might specify Laplace distributions both for variational and prior distributions: $\beta_p\overset{Q}{\sim}\mathrm{L}(\eta_{\beta_p},\nu_{\beta_p})$ and $\beta_p\overset{P}{\sim}\mathrm{L}(0,\frac{1}{\lambda_p\tau})$, leading to the following penalty function on our variational parameters $\eta_{\beta_p}$ and $\nu_{\beta_p}$:
\begin{equation}\label{eq:laplace_kl}
    g_{\mathrm{KL}}(\eta_{\beta_p},\nu_{\beta_p},\lambda_p)=\mathrm{KL}(\mathrm{L}(\eta_{\beta_p}, \nu_{\beta_p})|| \mathrm{L}(0, \frac{1}{\lambda\tau})) = \tau\lambda (\nu_{\beta_p} e^{-\frac{|\eta_{\beta_p}|}{\nu_{\beta_p}}}+|\eta_{\beta_p}|) - \log(\nu_{\beta_p}) - \log\lambda \, . 
\end{equation}

One might be forgiven for thinking that Variational Bayes, a procedure built on optimization, would be the approach that allows us to combine the generality afforded by Bayesian inference with the sparsity of penalized likelihood procedures.
Unfortunately, this is not the case.
The proof of this is straightforward, but we have not seen it explicitly mentioned in the academic literature.

\begin{theorem}
    Penalty functions defined by KL-cost Variational Bayesian procedures with continuous prior and variational distributions do not induce sparsity.
\end{theorem}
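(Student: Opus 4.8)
The guiding principle of my plan is that exact sparsity is a thresholding phenomenon: a penalty $g$ forces an estimate to be \emph{identically} zero over a set of data configurations of positive measure precisely when $g$ has a kink at the origin, i.e. when its one-sided derivatives obey $g'_-(0) < g'_+(0)$ so that $0$ lies in the interior of $\partial g(0)$. For a continuously differentiable penalty, by contrast, the condition for the optimum to sit exactly at zero collapses to the single stationarity equation $\nabla(\text{misfit}) + g'(0) = 0$, which can hold only on a measure-zero set; no thresholding interval exists. The plan is therefore to show that the variational complexity penalty, regarded as a function of the variational location parameter $\eta$ (the coordinate we would want driven to zero), is $C^1$ at $\eta=0$, and so cannot threshold.

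First I would decompose the KL penalty as
\begin{equation}
\mathrm{KL}(Q_{\eta,\nu}\|P) = \mathbb{E}_{Q}[\log q_{\eta,\nu}(\theta)] - \mathbb{E}_{Q}[\log p(\theta)] = -H(\nu) - \int \log p(\theta)\, q_{\eta,\nu}(\theta)\, d\theta ,
\end{equation}
where $p$ is the continuous prior density. For a location--scale variational family the differential entropy $H$ is translation invariant and depends only on the scale $\nu$, so the entire $\eta$-dependence lives in the cross-entropy term. Writing $q_{\eta,\nu}(\theta) = \nu^{-1} f((\theta-\eta)/\nu)$ exhibits this term as a convolution of the (possibly non-smooth) log-prior against the continuous variational density.

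The key step will be to differentiate in $\eta$ by transferring the derivative onto the smooth, $\eta$-dependent density rather than onto $\log p$. Since $Q$ is continuous its CDF $F$ is continuous, and for the canonical kinked case $\log p(\theta) \propto -|\theta|$ one obtains $\frac{d}{d\eta}\int|\theta|\,q_{\eta,\nu}(\theta)\,d\theta = 1 - 2F(-\eta)$, which is continuous in $\eta$ with value $1-2F(0)$ at the origin. Thus the kink of the Laplace prior is \emph{integrated away} by the continuous variational density, exactly mirroring how the explicit term $\nu e^{-|\eta|/\nu}+|\eta|$ of Equation~\ref{eq:laplace_kl} has vanishing one-sided derivatives at $\eta=0$. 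The same reasoning covers a general continuous prior and any continuous location family, and the misfit term $-\mathbb{E}_Q[\log\mathcal{L}(\mathbf{y}\mid\theta)]$ is likewise an expectation against $Q$ and hence $C^1$ in $\eta$. Consequently the whole objective is continuously differentiable in $\eta$, so stationarity at $\eta=0$ is a single equation rather than an interval condition, and no thresholding region can form.

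The main obstacle will not be the calculation but making the two endpoints rigorous: first, justifying differentiation under the integral sign uniformly near $\eta=0$, which requires a domination/tail condition relating the growth of $\log p$ to the decay of the variational density (of the mild type already invoked in the earlier theorems); and second, stating the ``induces sparsity'' $\Leftrightarrow$ ``kink at zero'' equivalence precisely enough that a $C^1$ penalty is genuinely excluded from producing a positive-measure thresholding set. Once the latter characterization is pinned down, the conclusion follows from the elementary fact that convolving against a continuous density is a smoothing operation that destroys any kink.
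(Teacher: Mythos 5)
Your proposal is correct and takes essentially the same route as the paper's own proof: both invoke the Fan--Li observation that any continuous, sparsity-inducing penalty must be singular at the origin, and both conclude by noting that a KL-based penalty is an integral against the continuous variational density and is therefore differentiable at $\eta=0$ (your explicit convolution computation $\frac{d}{d\eta}\mathbb{E}_{Q}|\theta| = 1-2F(-\eta)$, continuous in $\eta$, is precisely the ``fundamental theorem of calculus'' step the paper states tersely). Your version simply fills in the details the paper leaves implicit, including the cancellation of one-sided derivatives in Equation~\ref{eq:laplace_kl} and the domination conditions needed to differentiate under the integral sign.
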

\begin{proof}
    As \citet[Section 2]{Fan2001} discuss in the context of penalized likelihood, any penalty function which induces sparsity and is continuous in the data must have a singularity at the origin. 
    Since Variational Bayesian penalties are defined by a prior-variational KL divergence, and this in turn is defined by an integral, an application of the fundamental theorem of calculus is sufficient to reach the desired conclusion.
\end{proof}


The penalty given in Equation \ref{eq:laplace_kl} is smooth despite the two absolute values because they cancel at zero as visualized in Figure $\ref{fig:huber}$ and demonstrated algebraically by \citet{Meyer2021}, who studied this function because of its similarity to the Huber loss function.

\subsection{A Nonsmooth Approximation}

\citet{owen2007robust}, in the context of penalized M estimation, advocates to reintroduce nonsmoothness by exchanging the quadratic and linear parts of the penalty.
We propose to approximate $\nu_{\beta_p} e^{-\frac{|\eta_{\beta_p}|}{\nu_{\beta_p}}}+|\eta_{\beta_p}|$ by $\nu_{\beta_p}+|\eta_{\beta_p}|$ since it is a good approximation for small $\nu_{\beta_p}$:
\begin{equation}\label{eq:laplace_ns}
     g_{\mathrm{NS}}(\eta_{\beta_p},\nu_{\beta_p},\lambda_p) := \tau\lambda_p (\nu_{\beta_p} + |\eta_{\beta_p}|) - \log(\nu_{\beta_p}) - \log\lambda_p - 1 \, .
\end{equation}
Figure \ref{fig:huber} show that $g_{\mathrm{NS}}$ retains the qualitative properties of the KL penalty, and $g_{\mathrm{NS}}(\eta_{\beta_p},\nu_{\beta_p})-1$ bounds $g_{\mathrm{KL}}(\eta_{\beta_p},\nu_{\beta_p})$ from below.
Notice that the term $\log\eta_{\beta_p}$ serves as a logarithmic barrier preventing $\eta_{\beta_p}\to0$. 
Therefore, the sparsity induced by this penalty is qualitatively different from that induced by the Horseshoe, or Spike-Slab, which compress the entire posterior distribution to zero.
In contrast, this proposed penalty sets only the variational mode exactly to zero while allowing for nonzero, and sometimes significant, variational variance.

\begin{figure}
	\centering
	\raisebox{-0.5\height}{\includegraphics[scale=0.8]{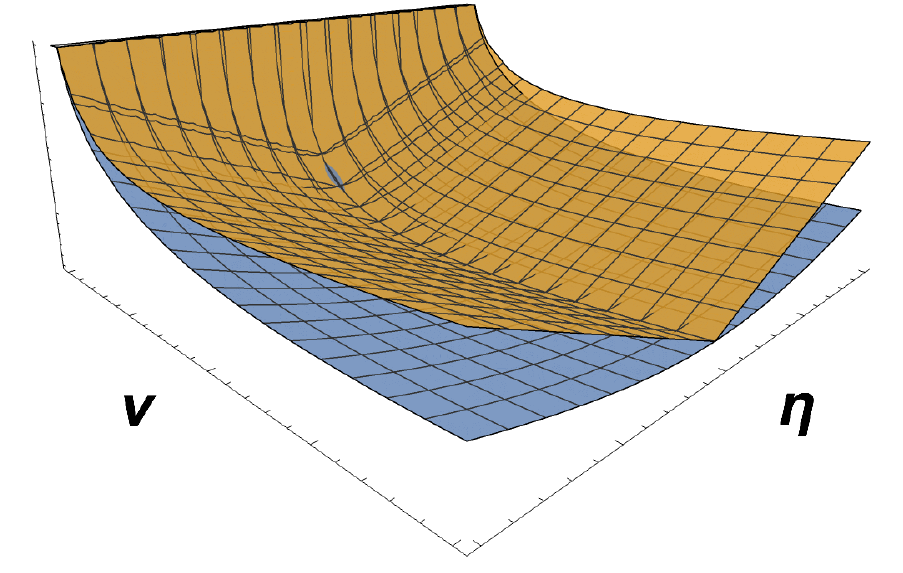}}
	\raisebox{-0.5\height}{\includegraphics[scale=0.45]{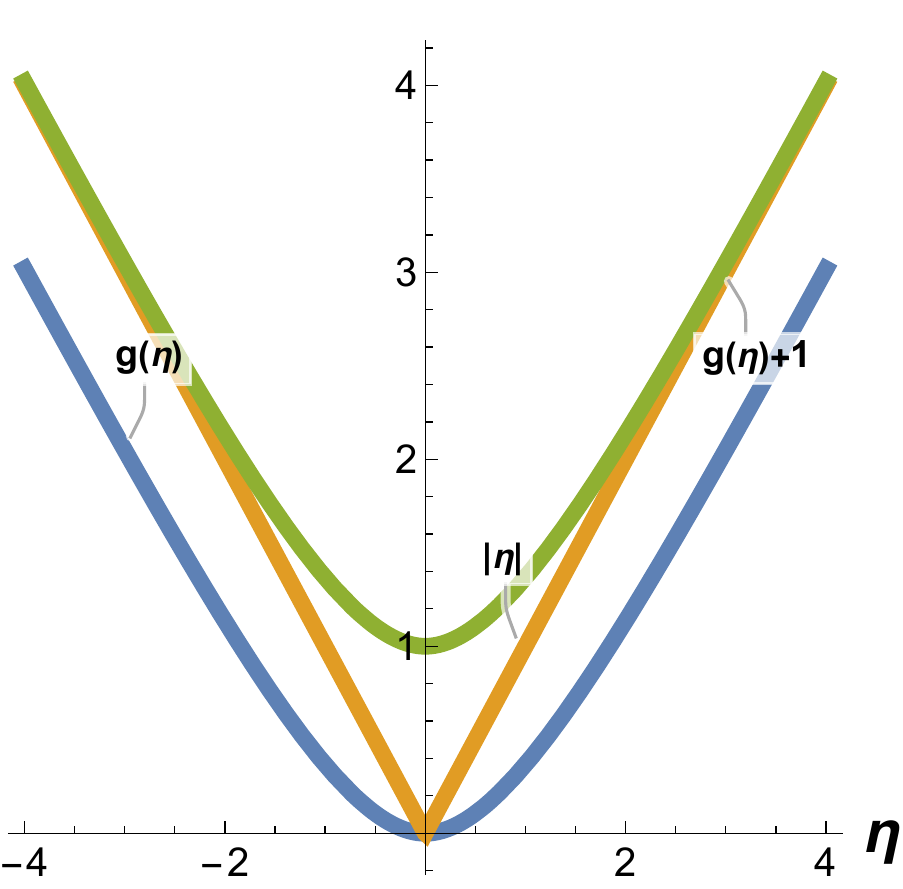}}
	\,\,
	\raisebox{-0.5\height}{\includegraphics[scale=0.45]{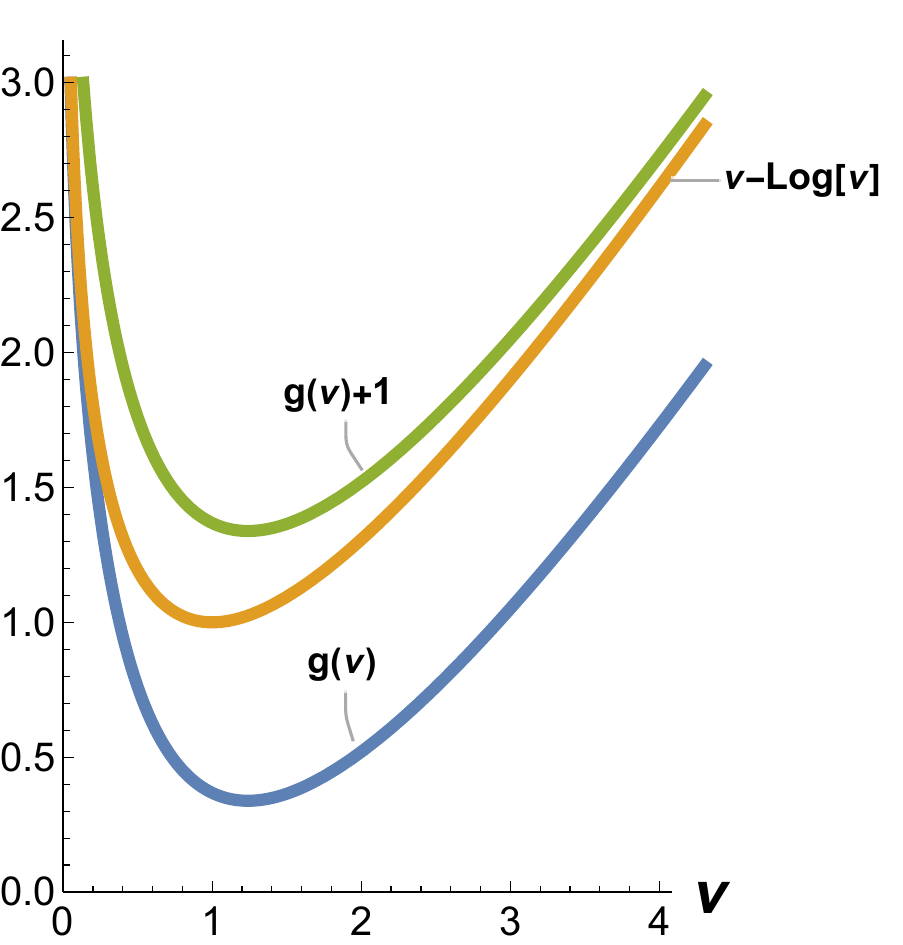}}
	\caption{\textbf{Nonsmooth Approximation} 
	\textit{Left:} Bivariate KL penalty (blue) and nonsmooth approximation (orange).
	The marginal penalties, ($g$ gives KL) for
	\textit{Center:} $\eta$, 
	\textit{Right:} $\nu$. 
	}
	\label{fig:huber}
\end{figure}

In the case of a linear regression with Gaussian error, the cost is closed form whenever the first two moments of $Q_{\boldsymbol\beta}$ and the inverse and log expectation of $Q_{\sigma^2}$ are tractable (Supplementary
\if\arxiv 1
3).
\else
\ref{sec:app_lineargaus}).
\fi
We cannot expect this to be the case in general, and must estimate the expected log likelihood term $-\mathbb{E}_{\boldsymbol\theta\sim Q_{\boldsymbol\theta}}[\log \mathcal{L}(\mathbf{y}|\boldsymbol\theta)]$. 
Most common in ``blackbox" VB \citep{Ranganath2014} is to optimize this with stochastic optimization over a Monte Carlo sample.
We instead use the Sample Average Approximation \citep[SAA]{Robinson1996}, which converts the optimization to a deterministic one by employing the same Monte Carlo draw throughout the optimization procedure, and use antithetic sampling for variance reduction \citep[Chapter 8.2]{Owen2013}.
Collecting those parameters which are not desired to be sparse into the vector $\boldsymbol\theta$ with associated variational parameters $\eta_\theta,\nu_\theta$, our cost is given as:
\begin{equation}
    \underset{\boldsymbol\eta}{\min}\,\, \frac{1}{B}\underset{\boldsymbol\theta_b\sim Q_{\boldsymbol\theta},\boldsymbol\beta_b\sim Q_{\boldsymbol\beta}}{\sum} \mathcal{L}(\boldsymbol\theta_b,\boldsymbol\beta_b) + \sum_{p=1}^P g_{\mathrm{NS}}(\eta_{\beta_{b,p}},\nu_{\beta_{b,p}},\lambda_p) + \mathrm{KL}(Q_{\boldsymbol\beta,\boldsymbol\theta}||P_{\boldsymbol\beta,\boldsymbol\theta})
\end{equation}
Note that to compute gradients we must differentiate through the sampling procedure; this is generally implemented in automatic differentiation packages like \texttt{tensorflow} and \texttt{pytorch}, though quality of the gradients vary.
When the distribution is a location-scale family, this differentiation is particularly straightforward \citep{Kingma2014}. 
With the gradient of the smooth part computed, we can apply the VISTA algorithm to yield sparse learning with any smooth likelihood.
Since $\lambda_p$ is of secondary interest, we estimate it via maximization rather than specifying a distribution for it.

\subsection{Building Coefficient Trajectories by Varying \texorpdfstring{$\tau$}{tau}}\label{sec:qual}

\begin{figure}
    \centering
    \includegraphics[scale=0.52,trim={0 6.5cm 0 0},clip]{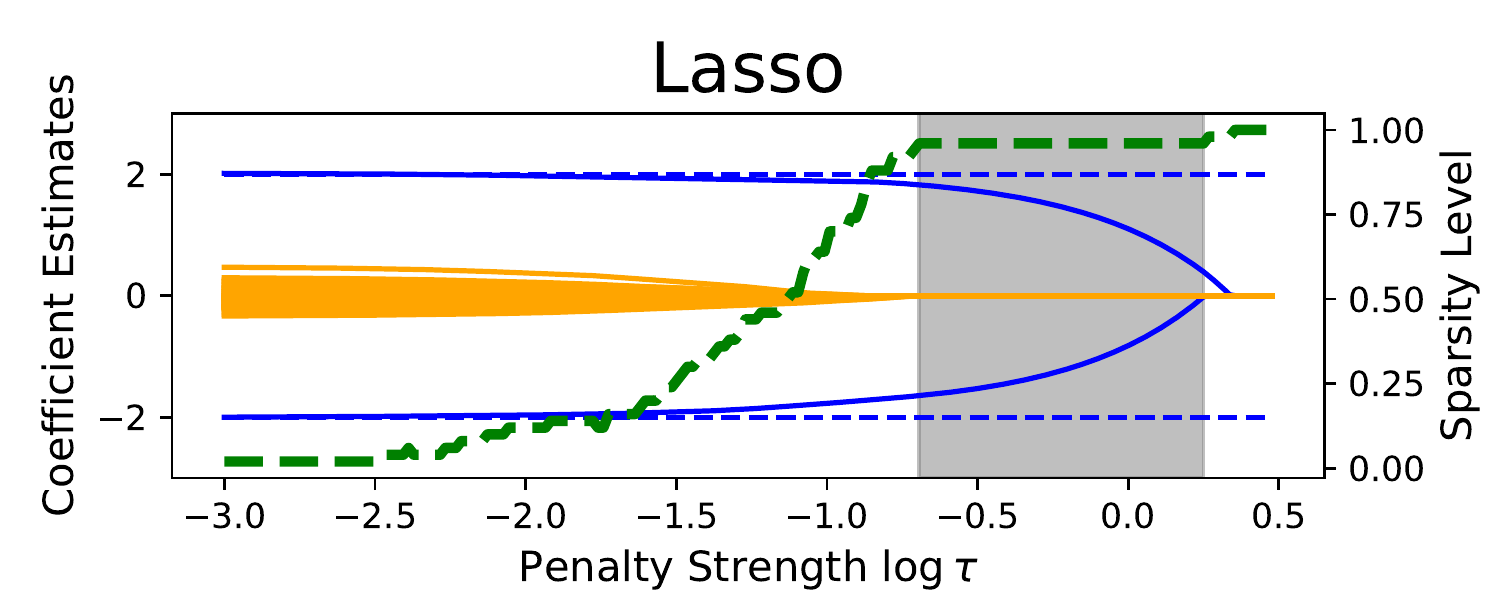}
    \includegraphics[scale=0.52,trim={0 0 0 6.5cm},clip]{example_traj.pdf}
    \caption{\textbf{Qualitative Behavior:} Solid lines indicate coefficient estimates (y-axis) as function of regularization strength for Lasso (left) SBL (right). Dotted blue lines give true nonzero coefficient values (-2,2), and solid blue lines give estimates. Orange lines give estimates of coefficients which are 0. Dotted green line (second y-axis) gives percent of estimated coefficients estimated as 0 (sparsity).}
    \label{fig:example_traj}
\end{figure}

A practically useful aspect of Lasso regression is the ability to efficiently form trajectories of parameter estimates while varying the penalty strength $\tau$ by warm-starting optimizations.
We also deploy VISTA estimating the SBL with Gaussian error structure and a Half-Cauchy prior, illustrated in Figure \ref{fig:example_traj}, evaluate a trajectory for varying $\tau$, and then compare to the Lasso on a simple simulated dataset with a true Gaussian-linear model in 50 dimensions. 
Though it successfully thresholds the active parameters last, there is no regularization strength which allows Lasso to identify the correct model without bias. 
In the SBL, by contrast, the variational mean of the active parameters are near their true values for the whole range of penalty values until they are thresholded out, closer to the behavior of nonconvex penalties such as MCP and SCAD.
Unlike penalized likelihood procedures, however, we can build trajectories of entire variational distributions, with credible intervals pictured here.


\subsection{Empirical Frequentist Properties}

\begin{table}
\tiny
\centering
\begin{tabular}{llrrrrr}
\toprule
Likelihood & Method & $\beta$-FNR &  $\beta$-FPR &  $\beta$-coverage &  $\sigma^2$-coverage &   Time (s) \\
\midrule
bernoulli & VISTA-MAP &      0.03 &      0.00 &        - &       - &   4.43 \\
        & VISTA-SBL &      0.00 &      0.00 &       0.91 &      1.00 &  31.58 \\
        & Horseshoe &      0.00 &      0.04 &       0.93 &      1.00 & 400.09 \\
cauchy & SSLasso &      0.88 &      0.00 &        - &       - &   0.02 \\
        & VISTA-MAP &      0.30 &      0.00 &        - &       - &   6.43 \\
        & VISTA-SBL &      0.00 &      0.03 &       0.94 &      0.97 &  40.86 \\
        & Horseshoe &      0.00 &      0.00 &       0.94 &      0.87 & 458.67 \\
nb              & VISTA-MAP &      0.00 &      0.02 &        - &       - &   6.31 \\
        & VISTA-SBL &      0.00 &      0.00 &       0.95 &      1.00 &  28.77 \\
        & Horseshoe &      0.00 &      0.00 &       0.94 &      1.00 & 962.86 \\
 normal & SSLasso &      0.00 &      0.00 &        - &       - &   0.02 \\
        & VISTA-MAP &      0.00 &      0.02 &        - &       - &   7.12 \\
        & VISTA-SBL &      0.00 &      0.00 &       0.94 &      1.00 &  34.93 \\
        & Horseshoe &      0.00 &      0.00 &       0.95 &      1.00 &  39.10 \\
poisson         & VISTA-MAP &      0.00 &      0.00 &        - &       - &  18.01 \\
        & VISTA-SBL &      0.00 &      0.00 &       0.80 &      1.00 &  46.77 \\
        & Horseshoe &      0.00 &      0.00 &       0.98 &      1.00 & 394.34 \\
\bottomrule
\end{tabular}

\caption{FNR is False Negative Rate, FPR is False Positive Rater, COVR is 95\% empirical interval coverage, time is elapsed real time in seconds for given $\tau$.}\label{tab:sim}
\end{table}

In this section, we examine the empirical performance of our proposed method on selected GLMs.
We use the mean function described in our working example.
We deploy the VISTA algorithm, implemented in \texttt{tensorflow\_probability}, on for both MAP and VB inference with Bernoulli, Negative Binomial, Normal, and Poisson likelihoods using the canonical link functions and with the Cauchy likelihood with an identity link.
We tuned sparsity level $\tau$ by hand which is what we advocate doing in practice by examining the entire trajectory produced by varying $\tau$ (Supplementary
\if\arxiv 1
5), fixing it across Monte Carlo replicates.
\else
\ref{sec:app_imp}), fixing it across Monte Carlo replicates.
\fi

We use a Monte Carlo sample size of 40.
As a comparator, we show a \texttt{JAGS} implementation of the Horseshoe prior, run for 1,000 iterations after adaptation on two chains, which should represent a lower bound on the amount of computation required for inference in practice.
Additionally, we compare to the Spike-Slab Lasso using the R package \texttt{SSLASSO} when the link function is the identity, namely on the Gaussian and Cauchy cases.
The Spike-Slab Lasso penalty could of course be combined with a non-Gaussian likelihood, but solving the marginal problem for coordinate descent will no longer in general be in closed form and is not implemented in \texttt{SSLASSO}.
In contrast, the proximal operator is the same for any differentiable likelihood.

As Table \ref{tab:sim} shows, SBL is able to maintain high interval coverage using generally about an order of magnitude the elapsed real time of the Horseshoe prior.
The performance gap is smaller on the Normal likelihood where the Gibbs sampler enjoys conjugacy.
On the Negative Binomial likelihood, which is the likelihood we use in the case study of Section \ref{sec:results}, we observe about a $30\times$ speedup.
The SSLasso is fast but not yet applicable to non-Gaussian likelihoods and unable provide UQ.

\section{Modeling Migration in Iraq}  \label{sec:application}

We begin this section with an overview of the classical Gravity Model.
We then pivot to discussing the data set we collected, and  finally present the results of our analysis.

\subsection{ISIL and the 2013-2017 Iraqi Civil War}

In late 2013, tensions between Sunni and Shia Muslims in the Anbar province of Iraq boiled over into violence between the Shia government and its allies against the Islamic State.
The violence led to significant internal displacement which is the subject of our case study.
In the absence of traditional migration variables, we consider the use of social media for capturing indirect indicators of migrations. Specifically, we focus on novel conversation buzz and insecurity predictors based on Twitter data.


\subsubsection{Data Collection and Processing}\label{sec:data}

\begin{figure}
    \centering
    \includegraphics[scale=0.54]{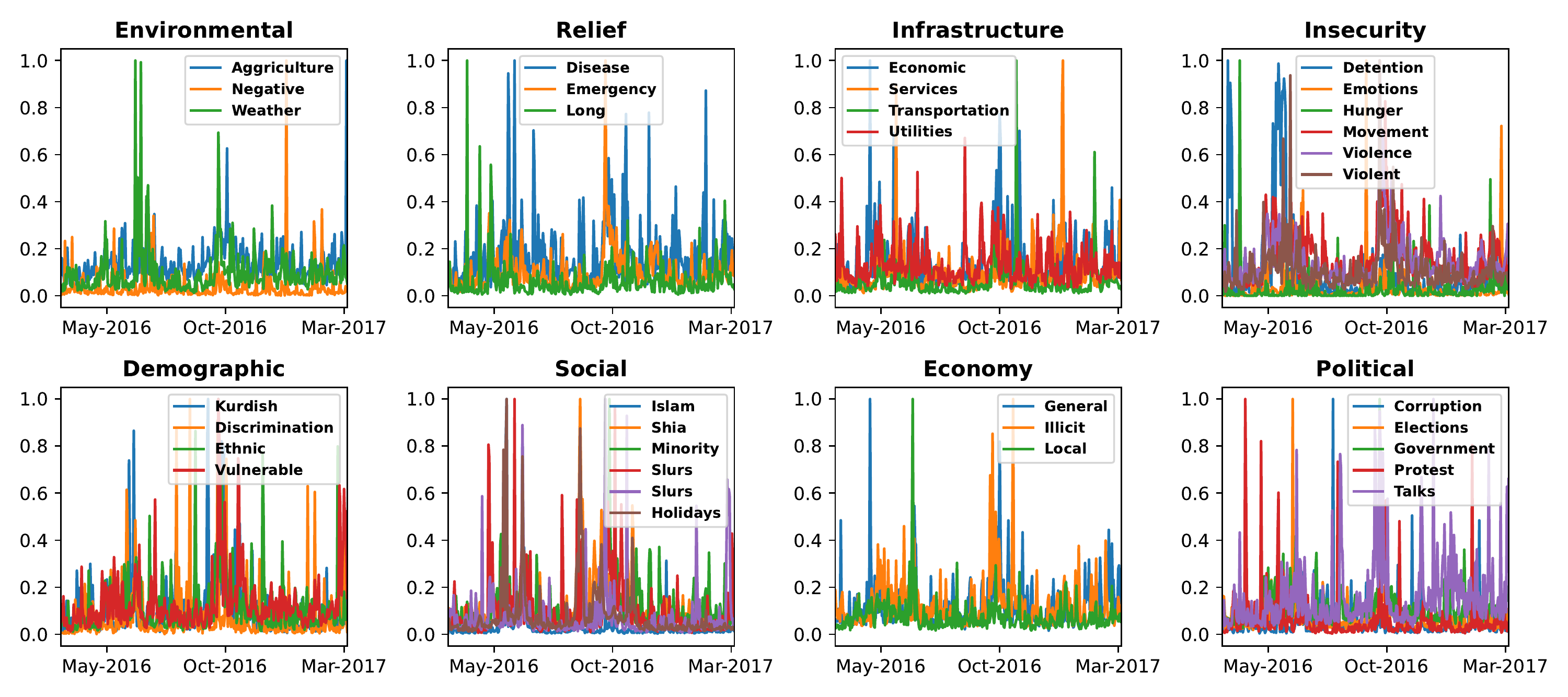}
    \includegraphics[scale=0.54]{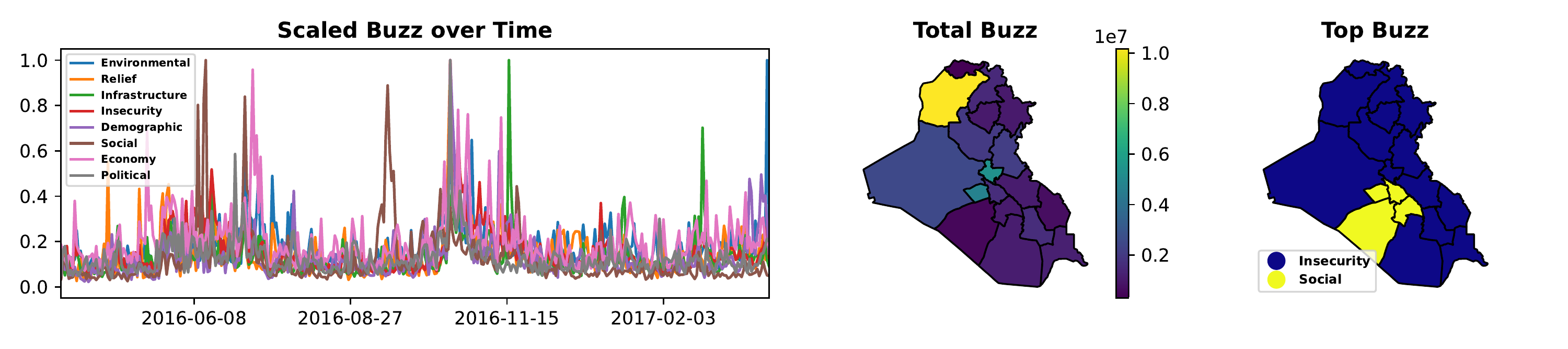}
    \includegraphics[scale=0.54]{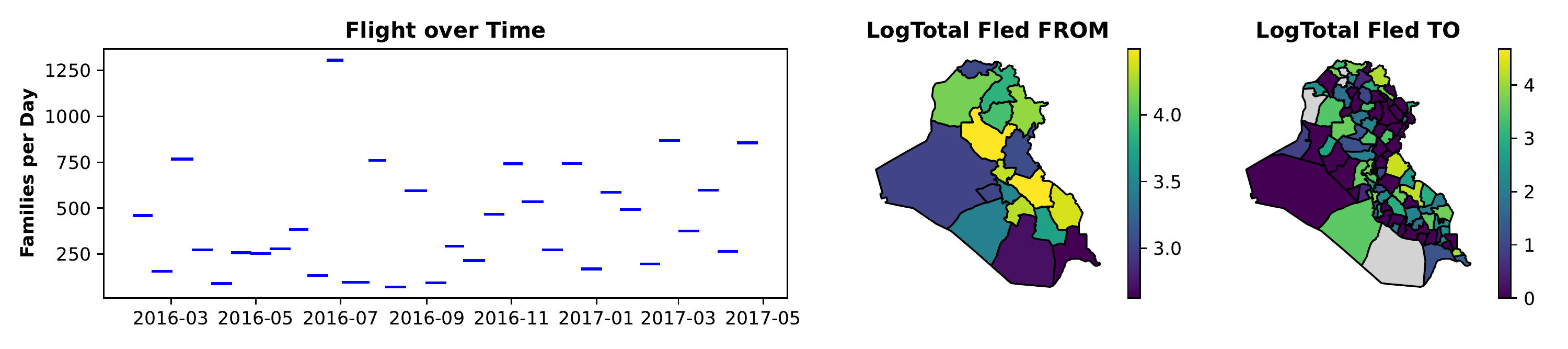}
    \caption{\textbf{Iraq Dataset:} \textit{Top Two Rows:} Buzz Variables broken up by category and normalized to have max value 1. \textit{Bottom Left Quadrant:} The cumulative buzz by each category is shown above the UN IOM data giving the families fleeing during each period. \textit{Bottom Right Quadrant:} Spatial totals for Buzz (upper row) and flow (lower row); gray districts not observed.}
    \label{fig:buzz_time}
\end{figure}

\noindent\textbf{Social Media Indicators} We collected Arabic language tweets between March 2016 and June 2017 from Twitter's Streaming Application Program Interface (API).
Buzz variables were formed by monitoring counts of keywords selected by Arabic language experts related to various migration-relevant topics, such as ethnic angst and discussion about the economy.
We also formed sentiment indicators.
Variables such as Social buzz for different demographic groups, (e.g. \texttt{Social-shia\_buzz}), Insecurity, (e.g. \texttt{Insecurity-emotions}), Environmental (e.g. \texttt{Environmental-weather}) are examples of the 68 variables identified.

\noindent\textbf{Movement Data} We use the flow data provided by the International Organization for Migration's Iraq Displacement Tracking Matrix, which gives stocks of displaced families in one of 97 Districts, as well as which of the 18 Governorates they originated from.
We differenced these data with respect to time to estimate flows.
The counts are aggregated over variable length periods, with the shortest period consisting of 11 days and longest of 31 days. There are 48 total time periods in our study, leading to a total of 57,024 observations.
96.4\% of response values are zero, with a mean of 6.71 families displaced for all time across origin-destination pairs.

\subsubsection{New Challenges for the Gravity Model}\label{sec:gravity}

Gravity models of migration \citep{Poprawe2015,Karemera2000,Ramos2017} measure the exchange between the origin $o$ and destination $d$ for a given time period $t$, denoted as $y_{t,o,d}$; in our case study $y_{t,o,d}$ is measured in number of families.
The gravity model specifies a generalized linear model relating $\mathbf{y}\in\mathbb{R}^{N_oN_dT}$ to predictors $\mathbf{X}\in\mathbb{R}^{N_oN_dT\times P}$  as well as \textit{resistance to exchange} terms $\omega_{t}$, $\omega_{o}$, $\omega_{d}$ which represent the effects associated with each time periods, origin and destination.
Poisson gravity models have been fit using MCMC for hospital patient flows \citep{Congdon2000} and trade \citep{Ranjan2007}.
\cite{Chen2018} perform explicit Bayesian Model Selection on a gravity model of trade via Normal approximations to their likelihood.
The fine timescale of our data introduces temporal misalignment between cause and effect: if an event which causes migration is recorded by our Twitter Buzz variables on a given day, we do not expect migrants to register at a camp the same day. 
At the yearly scale which gravity models are typically deployed, this is not a concern, but we need to develop a mechanism for estimating lags in this specific case study. 

\subsection{Spatiotemporal ZINB Gravity Model with Gaussian Lags}\label{sec:app_model}

In this section, we develop the following hierarchical Bayesian model to relate the organic variables to the observed migration:

\begin{align}
    &\lambda_p \overset{\mathrm{iid}}{\sim} \mathrm{C}(0,1)^+ \,\, \mathrm{for} \,\, p \in \{1, \ldots, P\}\\
    &\beta_p|\lambda_p \overset{\mathrm{indep}}{\sim} \mathrm{L}(0,\frac{1}{\tau\lambda_p})^+ \,\, \mathrm{for} \,\, p \in \{1, \ldots, P\}\\
    &\phi_\mathcal{I} \sim \mathrm{logitN}(a_\phi, b_\phi) \,\, \mathrm{for} \,\, \mathcal{I} \in \{\mathcal{T},\mathcal{O},\mathcal{D}\}\\
    &\rho^2_\mathcal{I} \sim \mathrm{logN}(a_\rho, b_\rho) \,\, \mathrm{for} \,\, \mathcal{I}  \in \{\mathcal{T},\mathcal{O},\mathcal{D}\}\\
    &\boldsymbol \omega_\mathcal{I}|\phi_\mathcal{I},\rho^2_\mathcal{I} \sim \mathrm{N}\big(\mathbf{0},\rho^2_\mathcal{I}\Sigma_\mathcal{I}(\phi_\mathcal{I})\big) \,\, \mathrm{for} \,\, \mathcal{I} \in \{\mathcal{T},\mathcal{O},\mathcal{D}\}\\
    &\pi \sim \mathrm{logitN}(a_\pi, b_\pi)\\
    & y_{t,o,d} \sim \begin{cases}
    \delta_0 & \mathrm{w.p.} \,\,\, \pi \\
    \mathrm{N.B.}(g(\mathbf{z}_{t,o,d}(\mu_l, \sigma_l)^\top\beta + \omega_t + \omega_o + \omega_d), \frac{l_{t}}{\sigma^2})\footnote{Here, the negative binomial is parameterized by probability of success $p\in[0,1]$ and number of failures $r$.} & \mathrm{o.w.}
    \end{cases}
\end{align}
where 
$\phi_\mathcal{T}\in[0,1]$ is the correlation between successive temporal effects $\boldsymbol\omega_t$ giving the AR1 correlation matrix $\Sigma_t(\phi_t)$, while $\phi_{o},\phi_{d}$ are the correlations for the origin and destination spatial effects $\boldsymbol\omega_\mathcal{O},\boldsymbol\omega_\mathcal{D}$, respectively, which lie between the inverse maximum and minimum eigenvalues of the adjacency matrix \citep{de2012bayesian}.
These matrices are scaled by variance terms $\rho_\mathcal{I}$ endowed with LogNormal priors.
The number of families moving between two regions on a given day is given by a negative binomial distribution parameterized by the canonical link transformation $g$ of our linear predictions.
The positive ``number of failures" parameter controlling overdispersion is estimated by $\frac{1}{\sigma^2}$ and is scaled by the duration of the time period $l_t$.
Due to the many of zeros, we include the possibility of zero inflation via a parameter $\pi$ with logitNormal prior.
We account for unknown and variable lag between cause and effect through a ``flattened" Gaussian filter with parameter $\mu_l$ determining how far back to shift the time series and the parameter $\sigma_l$ determining how far outside the bounds of the period length to integrate and producing the aggregated effects $\mathbf{z}$ (Supplementary
\if\arxiv 1
5).
\else
\ref{sec:app_imp}).
\fi
We use a neighborhood structure that has weight 1 if two regions touch and 0 otherwise.

\subsubsection{Variational Distribution and Inference}

We optimize the divergence via the VISTA algorithm.
We place logitNormal variational distributions on $\phi_t,\phi_o,\phi_d,\pi$ and logNormal variationals on $\rho_t,\rho_o,\rho_d,\sigma^2$. The KL divergence between prior and variational in all of these cases is given by the KL divergence of the underlying normal distributions and thus is available in closed form. 
The $\omega$'s are given a Normal variational distribution.
While the KL divergence cannot completely be determined in closed form due to the spatiotemporal correlation parameters $\boldsymbol\phi$, the $\omega$ and $\rho$ integrals may be computed analytically, and this expression may be evaluated for Monte Carlo samples of $\phi$ without any matrix decompositions beyond an initial decomposition of the neighborhood structure matrix (Supplementary
\if\arxiv 1
5).
\else
\ref{sec:app_imp}).
\fi
As is done in the context of Gibbs sampling for CAR models, we ``center on the fly" \citep{ferreira2019limiting} by projecting the variational mean of $\boldsymbol\omega_{\mathcal{I}}$ to have mean 0 after each iteration: $\boldsymbol\eta_{\omega_{\mathcal{I}}} \gets \boldsymbol\eta_{\omega_{\mathcal{I}}}- \mathrm{mean}(\boldsymbol\eta_{\omega_{\mathcal{I}}})\mathbf{1}$.

\subsection{Results} \label{sec:results}


\begin{figure}
    \centering
    \includegraphics[scale=0.7]{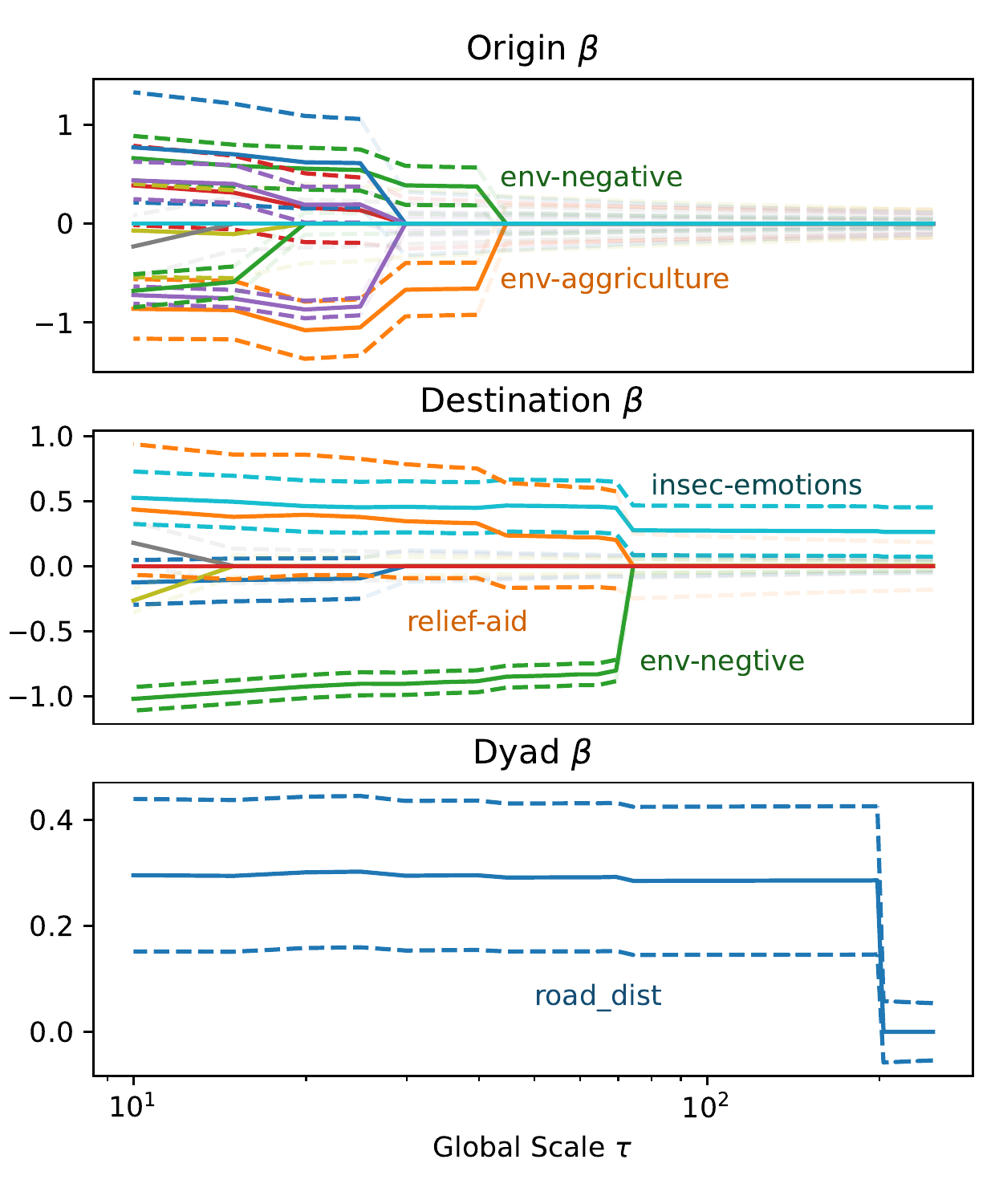}
    \includegraphics[scale=0.85]{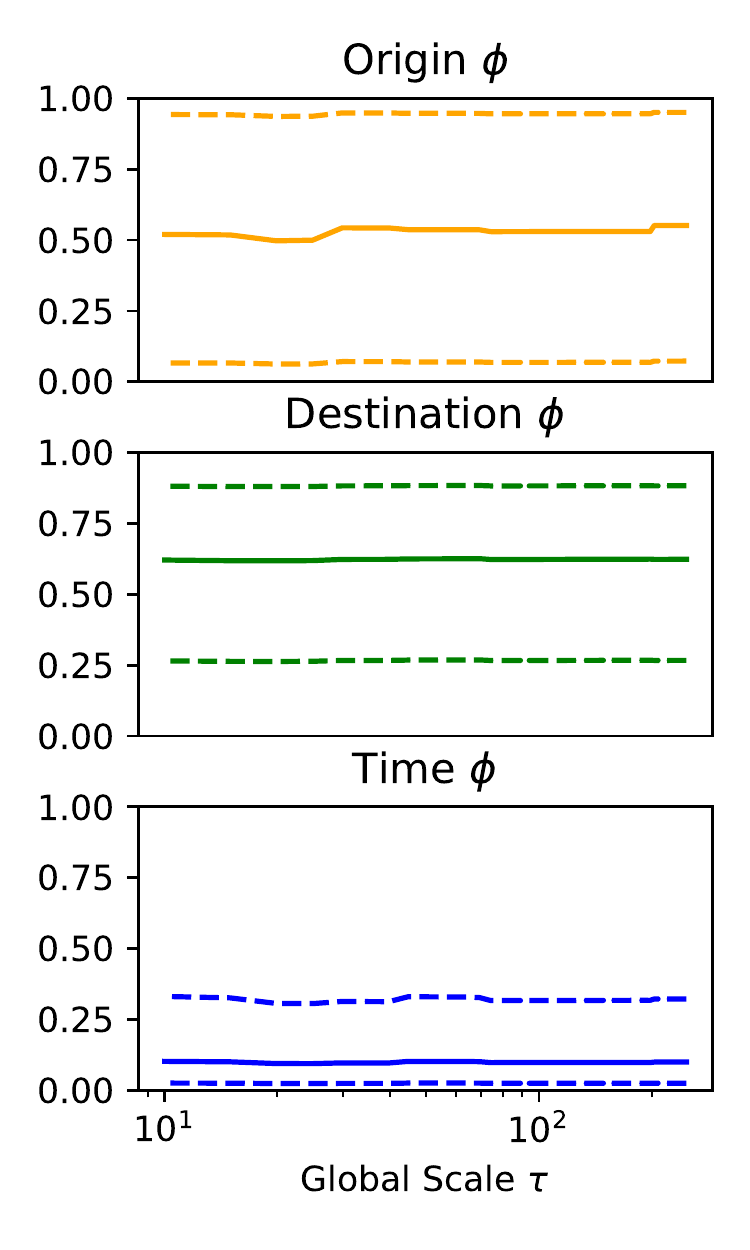}
    \includegraphics[scale=0.9]{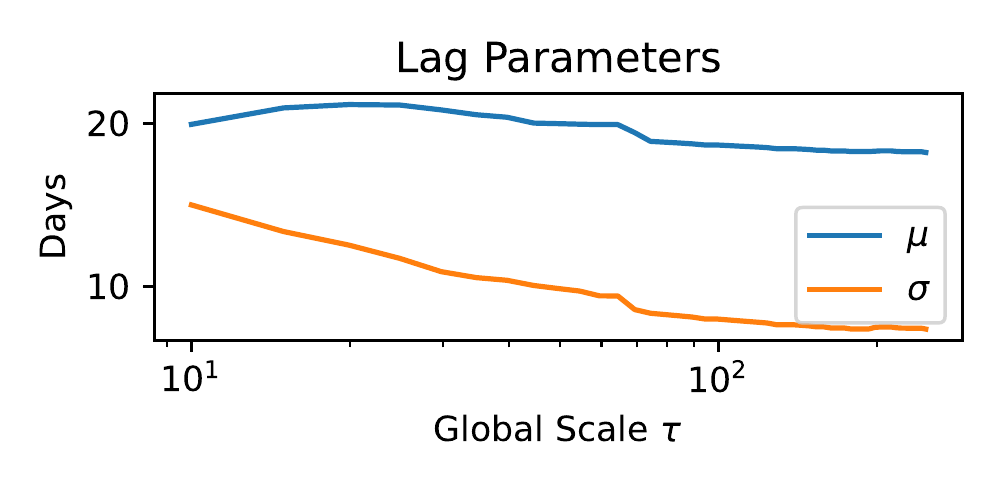}
    \includegraphics[scale=0.9]{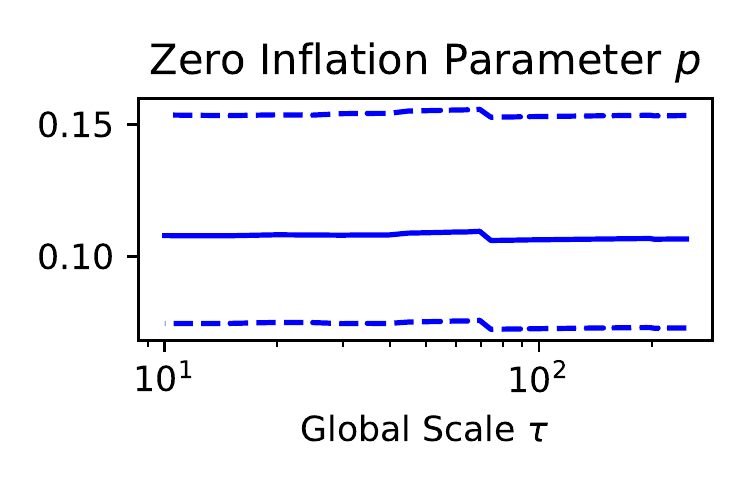}
    \caption{\textbf{Parameter Trajectories vs $\tau$.} \textit{Top Left:} Regression Coefficient Variational Trajectory; dotted lines give 95\% credible intervals; intervals are faded when the posterior mean is $0$. \textit{Top Right:} Spatiotemporal Correlations fairly constant.
    \textit{Bottom Left:} Gaussian convolution parameters vary with fixed effects.
    \textit{Bottom Left:} Zero inflation parameter is constant.
    }
    \label{fig:spt_traj}
\end{figure}

In this section, we apply our methodology to our case study. 
We begin by setting the regularization coefficient $\tau=250$, where the model chooses to include only the destination-level variable \texttt{insecurity-emotions-buz}. 
We then decreased this coefficient along a logarithmically spaced grid to $\tau=10$ with 100 points in between.
Figure \ref{fig:spt_traj} shows the trajectory borne out by the regression coefficients, random effect variances and lag and zero inflation parameters.
Begin with the top-left three panels, which show the variational distributions of the regression coefficients.
Qualitatively, we notice that variances of zeroed out coefficients decrease smoothly as the regularization strength increases, reflecting the increased prior distribution concentration.
The first variable to enter the model is destination \texttt{insecurity-emotions-buzz}, followed by \texttt{road-dist}, the dyadic distance between origin and destination, which actually seems to indicate that migration occurs over longer distances  (see Supplementary
\if\arxiv 1
5.4
\else
\ref{sec:app_antigravity} 
\fi
for more discussion).
Next \texttt{relief-aid} enters with a positive coefficient, indicating either that good relief aid attracts or is located where people are attracted, followed by \texttt{environment-negative} sentiment at the destination level and \texttt{environment-negative} and \texttt{environment-agriculture} at the origin level.

The lag-aggregation parameters are shown in the bottom left of Figure \ref{fig:spt_traj}. 
As the number of variables in the model increases, so too does the lookback window size $\sigma_l$, from about $9$ days to about $15$, while the number of days to look back travels from around $18$ to around $22$.
This varies from a model with sharper predictions from fewer variables to one with smoother predictions incorporating more variables and time.
The spatiotemporal correlation parameters' variational distributions are shown in the top 3 right panels of Figure \ref{fig:spt_traj}. 
The origin parameter is not well pinned down by our only 18 governorates, the Destination is also not strongly identified though it does rule out substantial negative spatial correlation, and the temporal correlation seems minimal.
Finally, the bottom right panel indicates that the zero-inflation parameter is also fairly constant with respect to $\tau$.


\begin{figure}
    \centering
    \includegraphics[scale=0.53]{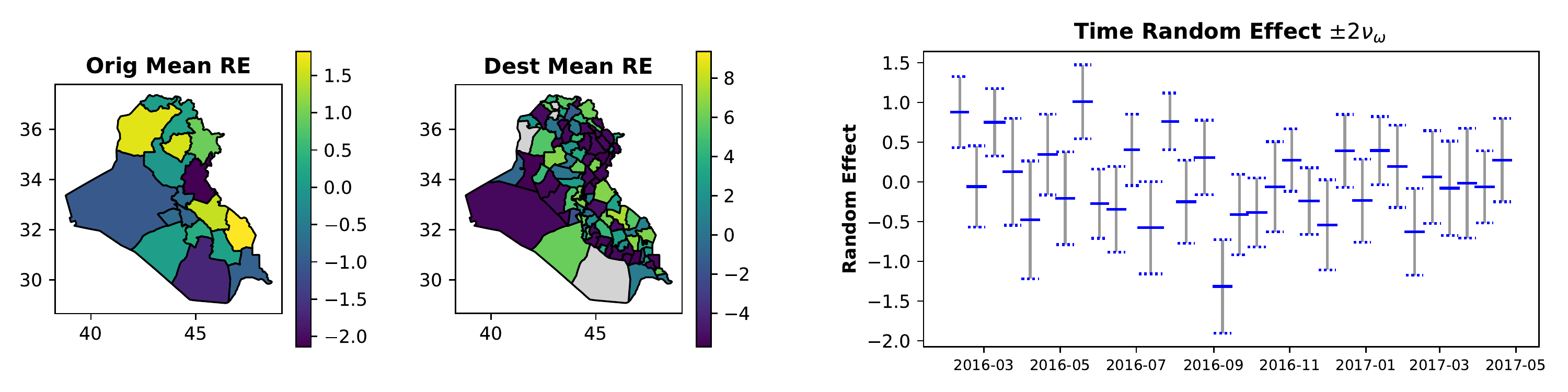}
    \setlength{\abovecaptionskip}{-30pt}
    \caption{\textbf{Spatiotemporal Random Effects:} At the largest $\tau$ value.}
    \label{fig:re_spt}
\end{figure}

The most positive origin random effects belongs to Ninewa Governorate with capital Mosul, which experienced urban combat against ISIL, followed by Maysan and Kirkuk governorates. 
For destinations, the Baghdad area has some of the highest effects, with its constituent Al-Sadr and Kut districts well separated from the rest.
On the lower end we see a different story: many districts which observed no inflow of migration have random effects which are experiencing separation and similar, negative values.
The time period with the largest random effect was from 2016-05-12 to 2016-05-26, a period which saw a series of ISIL bomb attacks which left over 100 dead in Baghdad (May 12) as well as combat between Iraqi and ISIL forces \citep{wilson2019Timeline}, while lowest random effect was 2016-09-01 to 2016-09-15, during which ISIL was more active abroad than in Iraq (ibid.).

\section{Discussion and Future Work}\label{sec:conclusion}

We began this article by examining the density of the Bayesian Lasso when $\lambda_p$ is given a hyperprior and treated as a variable to optimize.
This lead to a biconvex proximal operator that is well defined for sufficiently small step sizes.
We then showed how to plug this proximal operator into ISTA, producing the VISTA algorithm which allows for minimal bias even in the presence of high sparsity.
Next, we examined the asymptotic behavior of penalized likelihood estimates, finding that there exists a local penalized likelihood maximizer with the oracle property.
We then changed gears, showing how to deploy the penalty in the context of Variational Bayes and developing the Sparse Bayesian Lasso, which builds trajectories of entire variational distributions.
Simulation studies showed approximately nominal coverage of its credible intervals.
Finally, we deployed the VISTA algorithm and Sparse Bayesian Lasso on a sophisticated hierarchical model and were able to generate trajectories of variational distributions of all model parameters, varying the penalty strength.

We look forward to deploying the VISTA algorithm on other applications such as the inverse and imaging problems where ISTA made its name.
And we are even more excited to see what other possibilities viewing penalty coefficients as optimization quantities can open up.
In particular, Group Lasso \citep{yuan2006model} and Fused Lasso \citep{tibshirani2005sparsity}, among other Lasso analogs, enjoy fast inference via proximal gradient methods as well; perhaps the variable-penalty versions of these proximal operators are available in closed form.
Variable-coefficient low-rank penalties for matrices (e.g. \citet{koltchinskii2011nuclear}) and their associated proximal operators are also of future interest.
Recently, \citet{quaini2022proximal} developed basic inferential properties of estimators defined via a proximal operator; it would be interesting to see which of these properties apply to the biconvex proximal operator we have developed here.
We are also curious about different ways to motivate nonsmoothness in VB, perhaps via different divergences or discontinuous priors.

\if\blind0
\section*{Acknowledgements}

The authors gratefully acknowledge funding from the Massive Data Institute and McCourt Institute.
We also acknowledge the Georgetown Data Lab and MDI Technical Team for support creating migration indices, and Douglas Post for his preliminary analysis of the Iraq data.
We would like to thank Stephen Becker, Warren Hare, Nicholas Polson, Mahlet Tadesse, and Stefan Wild for valuable conversations and input.
Any errors are our own.
\fi



\bigskip
\begin{center}
{\large\bf SUPPLEMENTARY MATERIAL}
\end{center}

\begin{description}


\item[Python Package:] PyPI-package \texttt{sbl} containing code to perform the the methods outlined in this article.

\item[Supplementary Discussion and Results] PDF Document containing additional numerical results, mathematical details, and background.



\end{description}



\newcommand{\LL}{\mathcal{L}}
\newcommand{\bu}{\mathbf{u}}
\renewcommand\x{\mathbf{x}}
\newcommand\om{\boldsymbol\omega}
\renewcommand\theadalign{cl}
\renewcommand{\cellalign}{cl}
\renewcommand\theadfont{\bfseries}
\renewcommand\theadgape{\Gape[2pt]}
\renewcommand\cellgape{\Gape[1pt]}
\renewcommand\Po{\mathbf{P}_1}
\renewcommand{\y}{\mathbf{y}}
\renewcommand{\X}{\mathbf{X}}
\renewcommand{\bb}{\boldsymbol\beta}
\newcommand{\I}{\mathbf{I}}
\renewcommand\nvar{P}


\title{Supplementary Material: Sparse Bayesian Lasso via a Variable-Coefficient $\ell_1$ Penalty}

\maketitle

\newpage

\setcounter{theorem}{0}
\setcounter{lemma}{0}
\setcounter{section}{0}

\section{Extended Proofs for Section 2}\label{sec:app_proofs}

\begin{lemma}
    The marginal cost of P1 with respect to $\lambda$ (i.e. with $x$ profiled out) is the following piecewise quadratic expression:
\begin{equation}
     \underset{\lambda>0}{\mathrm{argmin}}  \begin{cases}
        \frac{1}{2}(\frac{1}{s_\lambda} - s_x)\lambda^2+(|x_0|-\frac{\lambda_0}{s_\lambda})\lambda + \frac{\lambda_0^2}{2s_\lambda} & \lambda < \frac{|x_0|}{s_x} \\
        \frac{(\lambda-\lambda_0)^2}{2s_\lambda}+\frac{x_0^2}{2s_x} & \lambda \geq \frac{|x_0|}{s_x}\,\, , \\
     \end{cases}
\end{equation}
where the changepoint $\lambda=\frac{|x_0|}{s_x}$ is the point where $\lambda$ is just large enough to push $x$ to zero.
\end{lemma}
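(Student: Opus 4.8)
The plan is to recast the joint minimization over $(x,\lambda)$ in \ref{eq:prox_prob} as a nested optimization, profiling out $x$ first and then reading off the resulting function of $\lambda$. Writing the objective as $F(x,\lambda) = \lambda|x| + \frac{(x-x_0)^2}{2s_x} + \frac{(\lambda-\lambda_0)^2}{2s_\lambda}$, I would fix $\lambda > 0$ and minimize over $x$. Since $\lambda|x| + \frac{(x-x_0)^2}{2s_x}$ is strictly convex in $x$ (a strictly convex quadratic plus the convex term $\lambda|x|$), it admits a unique minimizer, which is exactly the scalar soft-thresholding solution of Equation \ref{eq:soft.thresh}: $x^*(\lambda) = \mathrm{sgn}(x_0)\,(|x_0| - s_x\lambda)^+$. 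This strict convexity is what legitimizes the profiling step, so that the marginal cost $\min_x F(x,\lambda)$ is a genuine, well-defined function of $\lambda$.

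The soft-thresholding solution splits the analysis into precisely the two regimes separated by the changepoint $\lambda = |x_0|/s_x$, according to whether the threshold has driven $x$ to zero. For $\lambda \geq |x_0|/s_x$ the minimizer is $x^* = 0$, and substituting back gives $\lambda\cdot 0 + \frac{x_0^2}{2s_x} + \frac{(\lambda-\lambda_0)^2}{2s_\lambda}$, the second (convex) branch. For $\lambda < |x_0|/s_x$ we have $|x^*| = |x_0| - s_x\lambda$ and, crucially, $x^* - x_0 = -\mathrm{sgn}(x_0)\,s_x\lambda$, so that $(x^*-x_0)^2 = s_x^2\lambda^2$. Substituting these into $F$ and collecting powers of $\lambda$ would yield the first branch, whose quadratic coefficient $\frac{1}{2}(\frac{1}{s_\lambda} - s_x)$ is exactly the quantity that flips sign when $s_x s_\lambda > 1$ and is responsible for the nonconvexity discussed immediately after the lemma.

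The argument is essentially routine once the inner problem is recognized as soft thresholding, so I do not anticipate a serious conceptual obstacle. The only place demanding care is the sign bookkeeping in the active branch --- writing $x_0 = \mathrm{sgn}(x_0)|x_0|$ so as to evaluate $(x^* - x_0)^2$ correctly for both signs of $x_0$ --- together with the algebraic simplification of $\lambda|x^*| - \frac{s_x\lambda^2}{2}$ and the expansion of $\frac{(\lambda-\lambda_0)^2}{2s_\lambda}$ into the stated coefficients. A minor consistency point worth verifying is continuity of the marginal cost at the changepoint $\lambda = |x_0|/s_x$, where the two branches must agree; a direct substitution confirms both reduce to $\frac{x_0^2}{2s_x} + \frac{(\lambda-\lambda_0)^2}{2s_\lambda}$ evaluated there, so the piecewise expression is internally consistent.
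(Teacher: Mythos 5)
Your proposal is correct and follows essentially the same route as the paper: convert to a nested optimization, recognize the inner problem as scalar soft thresholding, and substitute $x^*(\lambda) = \mathrm{sgn}(x_0)(|x_0|-s_x\lambda)^+$ back to obtain the two quadratic branches split at $\lambda = |x_0|/s_x$. Your version is, if anything, slightly more explicit than the paper's (which leaves the profiled cost in unexpanded $(|x_0|-s_x\lambda)^+$ form), and the sign bookkeeping $(x^*-x_0)^2 = s_x^2\lambda^2$ and the continuity check at the changepoint are both handled correctly.
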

\begin{proof}
    Convert to nested optimization and exploit the fact that the solution for known $\lambda$ is given by the soft thresholding operator:
\begin{align}
    &\underset{x\in\mathbb{R},\lambda>0}{\mathrm{argmin}} \,\,  \lambda |x|+\frac{(x-x_0)^2}{2s_x} + \frac{(\lambda-\lambda_0)^2}{2s_\lambda} \\
    &\iff 
    \underset{\lambda>0}{\mathrm{argmin}} \,\, \frac{(\lambda-\lambda_0)^2}{2s_\lambda} + \underset{x\in\mathbb{R}}{\mathrm{argmin}} \,\, \lambda |x|+\frac{(x-x_0)^2}{2s_x}  \\
    &\iff \underset{\lambda>0}{\mathrm{argmin}} \,\, \frac{(\lambda-\lambda_0)^2}{2s_\lambda} + \lambda (|x_0|-s_x\lambda)^+ +\frac{(|x_0|-s_x\lambda)^{+,2}-2|x_0|(|x_0|-s_x\lambda)^+}{2s_x}  \,\, . \label{eq:simpl}
\end{align}
\end{proof}

\begin{theorem}
    The optimizing $\lambda$ for the proximal program P1 is given by, when $s_xs_\lambda<1$:
    \begin{equation}\label{eq:prox1s}
        \lambda^* =\begin{cases} 
          \lambda_0 & \lambda_0 \geq \frac{|x_0|}{s_x} \\
          \frac{(\lambda_0-s_\lambda|x_0|)^+}{1-s_\lambda s_x} & o.w. \,\,\,\, ,
       \end{cases} 
    \end{equation}
    and by $\lambda^* = \mathbbm{1}_{\big[\frac{\lambda_0}{\sqrt{s_l}} > \frac{|x_0|}{\sqrt{s_x}}\big]} \lambda_0$ (here $\mathbbm{1}$ denotes the indicator function) otherwise.
In either case $x^* = (|x_0|-s_x\lambda^*)^+\mathrm{sgn}(x_0)$.
\end{theorem}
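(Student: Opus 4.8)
The plan is to take the profiled objective supplied by Lemma~\ref{lem:marg} and locate its global minimizer over $\lambda > 0$ by comparing the two quadratic pieces, handling the regimes $s_x s_\lambda < 1$ and $s_x s_\lambda \geq 1$ separately. Write $q_A$ for the quadratic valid on $(0, |x_0|/s_x)$ and $q_B$ for the one valid on $[|x_0|/s_x, \infty)$. As a preliminary step I would record that the two pieces agree in both value and first derivative at the changepoint $\lambda = |x_0|/s_x$; a short computation gives $q_A(|x_0|/s_x) = q_B(|x_0|/s_x)$ and $q_A'(|x_0|/s_x) = q_B'(|x_0|/s_x) = \tfrac{1}{s_\lambda}(|x_0|/s_x - \lambda_0)$, so the profiled cost is $C^1$ and no spurious minimizer can hide at the kink.

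For the convex regime $s_x s_\lambda < 1$, I would use that (as noted following Lemma~\ref{lem:marg}) both pieces, and hence the whole profiled function, are convex, so the minimizer is the unique stationary point or the boundary $\lambda = 0$. Since $q_B$ is a parabola with vertex $\lambda_0$, if $\lambda_0 \geq |x_0|/s_x$ the vertex lies in region B and is the answer. Otherwise the derivative is already positive at the changepoint, so the minimizer lies in region A; solving $q_A'(\lambda) = 0$ gives $\lambda = (\lambda_0 - s_\lambda|x_0|)/(1 - s_\lambda s_x)$, and clipping at the constraint $\lambda > 0$ produces the positive part in the statement.

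The nonconvex regime $s_x s_\lambda \geq 1$ is where the real work lies, since now $q_A$ is concave. A concave function attains its minimum over an interval at an endpoint, so on $(0, |x_0|/s_x)$ the only candidates are $\lambda = 0$ and the changepoint, while on region B the convex $q_B$ contributes its vertex $\lambda_0$ (when $\lambda_0 \geq |x_0|/s_x$) or again the changepoint. The crux is therefore to compare $q_A(0) = \lambda_0^2/(2s_\lambda)$ with $q_B(\lambda_0) = x_0^2/(2s_x)$ and to show the changepoint itself is never a strict winner. I would verify the latter by direct subtraction, using $1 + s_\lambda s_x \geq 2$ to dominate the cross term, and then observe that the sign of $q_A(0) - q_B(\lambda_0)$ is exactly the sign of $|x_0|/\sqrt{s_x} - \lambda_0/\sqrt{s_\lambda}$; I would also check that whenever this comparison favors $\lambda_0$ one automatically has $\lambda_0 > |x_0|/s_x$ (which follows from $s_\lambda s_x \geq 1$ via $\sqrt{s_\lambda/s_x} \geq 1/s_x$), so $\lambda_0$ genuinely lies in region B. This yields the indicator formula.

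Finally, in every case $x^*$ is recovered from the inner soft-thresholding solution used to profile out $x$ in Lemma~\ref{lem:marg}, evaluated at $\lambda = \lambda^*$, giving $x^* = (|x_0| - s_x\lambda^*)^+ \mathrm{sgn}(x_0)$. I expect the main obstacle to be the bookkeeping in the concave regime: correctly enumerating the corner candidates and ruling out the interior changepoint, where the algebra must invoke $s_\lambda s_x \geq 1$ at exactly the right moment to collapse everything to the clean threshold $\lambda_0/\sqrt{s_\lambda}$ versus $|x_0|/\sqrt{s_x}$.
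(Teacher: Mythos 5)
Your proposal is correct and follows essentially the same route as the paper's proof: profile out $x$ via Lemma~\ref{lem:marg}, optimize each quadratic piece separately, take the clipped interior stationary point when $s_x s_\lambda < 1$, and in the concave regime reduce the endpoint comparison to $\lambda_0^2/(2s_\lambda)$ versus $x_0^2/(2s_x)$, which yields the indicator formula. Your extra bookkeeping (the $C^1$ matching at the kink, ruling out the changepoint using $1+s_\lambda s_x\geq 2$, and verifying that $\lambda_0$ lies in region B when it wins) is a sound refinement of steps the paper leaves implicit; the only blemish is a transposed sign in one sentence, since the sign of $q_A(0)-q_B(\lambda_0)$ agrees with that of $\lambda_0/\sqrt{s_\lambda}-|x_0|/\sqrt{s_x}$, not its negative.
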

\begin{proof}
    We need only find the optimum of each interval. When $\lambda\geq \frac{|x_0|}{s_x}$, the optimum is simply as close as we can get to $\lambda_0$, namely $\lambda\gets\max[\lambda_0,\frac{x_0}{s_x}]$. 
On the other hand, when $\lambda \leq \frac{|x_0|}{s_x}$, if $s_xs_\lambda<1$, the optimum is as close as we can get to the stationary point $\frac{(\lambda_0-s_\lambda|x_0|)}{1-s_\lambda s_x}$, explicitly $\lambda\gets\min[\frac{(\lambda_0-s_\lambda|x_0|)^+}{1-s_\lambda s_x},\frac{|x_0|}{s_x}]$. When $s_xs_\lambda\geq1$, however, the solution is at one of the interval boundaries $[0,\frac{|x_0|}{s_x}]$; the boundaries have costs of $\frac{\lambda_0^2}{2s_\lambda}$ and $\frac{(\frac{|x_0|}{s_x}-\lambda_0)^2}{2s_\lambda}+\frac{x_0^2}{2s_x}$, respectively, and so we choose $\lambda\gets 0$ if $\frac{\lambda_0^2}{2s_\lambda}<\frac{(\frac{|x_0|}{s_x}-\lambda_0)^2}{2s_\lambda}+\frac{x_0^2}{2s_x}$ and $\lambda\gets\frac{|x_0|}{s_x}$ otherwise. But the cost at $\lambda=\lambda_0$ is only $\frac{x_0^2}{2s_x}$, so the choice is between $0$ and $\lambda_0$ with costs $\frac{\lambda_0^2}{2s_\lambda}$ and $\frac{x_0^2}{2s_x}$. 
\end{proof}

\begin{lemma}
	The following hold, where $\lambda^*$ denotes the optimizing $\lambda$, and is formally a function of $\tau$ and $|\beta|$:
 \begin{multicols}{2}
	\begin{enumerate}
		\item $\lambda^* = \frac{1}{\tau|\beta| + \rho'(\lambda^*)}$.
		\item $\frac{\partial \lambda^*}{\partial |\beta|} = -\frac{\tau}{\frac{1}{\lambda^{*2}} + \rho''(\lambda)}$.
		\item $g_{\tau}'(|\beta|) =  \tau \lambda^*$.
		\item $g_{\tau}''(|\beta|) =  -\frac{\tau_n^2}{(\tau_n+\rho'(\lambda^*))^2+\rho''(\lambda^*)}$.
	\end{enumerate}
 \end{multicols}
\end{lemma}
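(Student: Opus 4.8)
The plan is to regard the profiled penalty as the value function of a one-dimensional inner minimization and to extract all four claims from the first-order optimality condition together with the implicit function theorem and the envelope theorem. Write $b:=|\beta|$ and
\[
  g_\tau(b) \;=\; \min_{\lambda>0} h(\lambda;b), \qquad h(\lambda;b) := \tau\lambda b - \log\lambda + \rho(\lambda).
\]
(For the signs below to come out as stated, the inner objective must carry $-\log\lambda$, matching the penalized-likelihood display rather than the $+\log\lambda$ in the definition of $g_\tau$, which is a typo.) First I would write the stationarity condition $\partial_\lambda h(\lambda^*;b)=0$, i.e.\ $\tau b - 1/\lambda^* + \rho'(\lambda^*)=0$, which rearranges immediately to claim~1, $\lambda^* = 1/(\tau b + \rho'(\lambda^*))$. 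This step is purely algebraic.

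For claim~2 I would differentiate the optimality identity $F(\lambda^*,b):=\tau b - 1/\lambda^* + \rho'(\lambda^*)=0$ implicitly in $b$. The chain rule gives
\[
  \tau + \Bigl(\tfrac{1}{\lambda^{*2}} + \rho''(\lambda^*)\Bigr)\frac{\partial\lambda^*}{\partial b} = 0,
\]
and solving produces claim~2. The substantive point here — and the one place where real care is needed — is that $\lambda^*$ is a well-defined, continuously differentiable function of $b$. This is exactly the hypothesis of the implicit function theorem, which applies precisely when $\partial_\lambda F = 1/\lambda^{*2} + \rho''(\lambda^*) \neq 0$, i.e.\ the second-order optimality condition at the interior minimizer. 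I would invoke the strict convexity of $h(\cdot;b)$ near $\lambda^*$: under the bounded-$\rho''$ assumption of the following theorem, $1/\lambda^{*2}$ dominates the bounded $\rho''$, so this quantity is strictly positive, the minimizer is unique, and $\lambda^*(\cdot)$ is smooth. I expect this existence/smoothness justification to be the main (and essentially only) obstacle; everything else is elementary once it is in hand.

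For claim~3 I would apply the envelope theorem: since $\lambda^*$ is a stationary point of $h(\cdot;b)$, the total derivative of the value function equals the partial derivative of $h$ in $b$ at fixed $\lambda$, so $g_\tau'(b)=\partial_b h(\lambda^*;b)=\tau\lambda^*$. Finally, claim~4 follows by differentiating claim~3 once more and substituting claim~2,
\[
  g_\tau''(b) = \tau\,\frac{\partial\lambda^*}{\partial b} = -\frac{\tau^2}{\tfrac{1}{\lambda^{*2}} + \rho''(\lambda^*)},
\]
after which rewriting $1/\lambda^{*2}=(\tau b + \rho'(\lambda^*))^2$ via claim~1 places the denominator in the stated form $(\tau b + \rho'(\lambda^*))^2 + \rho''(\lambda^*)$. (The displayed statement appears to drop the factor $|\beta|$ inside the square and to mix $\tau_n$ with $\tau$; this derivation makes the intended expression explicit.) Thus all four items reduce to differentiating the first-order condition and the value function, with the only genuine analytic input being the implicit-function-theorem argument in step~2.
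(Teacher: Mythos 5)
Your proof is correct and takes essentially the same route as the paper's: the stationarity condition of the profiled inner minimization gives claim~1, implicit differentiation of that identity gives claim~2, the envelope argument (which the paper writes out explicitly as the vanishing of $\frac{\partial c^p(|\beta|,\lambda)}{\partial\lambda}\bigr|_{\lambda^*}$) gives claim~3, and differentiating claim~3 while substituting claims~1 and~2 gives claim~4. You also correctly flag, and go slightly beyond the paper by addressing, the sign typo in the main text's definition of $g_\tau$ (it should be $-\log\lambda$, as in the supplement), the implicit-function-theorem hypothesis $\frac{1}{\lambda^{*2}}+\rho''(\lambda^*)\neq 0$ that the paper leaves tacit, and the missing factor $|\beta|$ (and $\tau$ versus $\tau_n$ inconsistency) in the stated form of claim~4.
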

\begin{proof}
    It will be convenient to develop notation for the cost function inside our penalty: $g_\tau(|\beta|) = \underset{\lambda>0}{\min} \,\, \big[\tau\lambda|\beta| - \log\lambda + \rho(\lambda)\big] := \underset{\lambda>0}{\min} \,\, c^p(|\beta|,\lambda)$.
    For 1, since $\lambda^*$ is the optimizing $\lambda$, and due to the $-\log\lambda$ constraining the optimum to be an interior point, we know that $0=\frac{\partial}{\partial\lambda} \big[\tau\lambda|\beta| - \log\lambda + \rho(\lambda)\big] = \tau|\beta| - \frac{1}{\lambda} + \rho'(\lambda)$.
    For 2, we can use implicit differentiation on this same equation.
    For 3, we simply note that $\frac{\partial}{\partial|\beta|}g_{\tau}(|\beta|) = \frac{\partial}{\partial|\beta|} \big[\tau\lambda^*|\beta| - \log\lambda^* + \rho(\lambda^*)\big] = \tau\lambda^* + \frac{\partial\lambda^*}{\partial|\beta|}\frac{\partial c^p(|\beta|,\lambda)}{\partial\lambda}\Bigr|_{\lambda^*} = \tau\lambda^*$. 
    4 proceeds by differentiating 3 and plugging in 1 and 2.
\end{proof}

\begin{theorem}
    Assume that the logarithmic derivative of the hyperprior density on $\lambda$ is bounded ($|\rho'(\lambda)|\leq M_1 \,\, \forall \lambda\geq0$) and that the density is decreasing on $(0,\infty)$. Then:
    \begin{enumerate}
        \item $g'_{\tau}(|\beta|)\approx\frac{1}{|\beta|}$ for large $\beta$.
        \item The minimum of $|\beta|+g'_{\tau}(|\beta|)$ is achieved at $\beta=0$ with value $\lambda_a\tau$.
    \end{enumerate}
\end{theorem}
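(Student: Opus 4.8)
The plan is to obtain both parts directly from the fixed-point characterization of $\lambda^*$ and the derivative formulas in the preceding lemma, feeding in the two hypotheses (the bound $|\rho'(\lambda)|\le M_1$ and the decreasing density, which gives $\rho'\ge 0$). Part~1 is the routine half. I would start from $g'_\tau(|\beta|)=\tau\lambda^*$ (part~3) and substitute the stationarity relation $\lambda^*=\frac{1}{\tau|\beta|+\rho'(\lambda^*)}$ (part~1) to get $g'_\tau(|\beta|)=\frac{\tau}{\tau|\beta|+\rho'(\lambda^*)}=\frac{1}{|\beta|+\rho'(\lambda^*)/\tau}$. Since $|\rho'(\lambda^*)|\le M_1$ uniformly, the correction $\rho'(\lambda^*)/\tau$ stays bounded, so it is negligible against $|\beta|$ as $|\beta|\to\infty$ and $g'_\tau(|\beta|)\to\frac{1}{|\beta|}$; the same bound forces $\tau|\beta|+\rho'(\lambda^*)\to\infty$, i.e.\ $\lambda^*\to 0$, which is the sense in which the approximation holds.

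For Part~2 the value at the origin is immediate: at $|\beta|=0$ the stationarity relation becomes $\lambda_a:=\lambda^*(0)$ with $\lambda_a\rho'(\lambda_a)=1$, and $|\beta|+g'_\tau(|\beta|)=0+\tau\lambda_a=\lambda_a\tau$ there. The substantive content is that this is the global minimum over $|\beta|\ge 0$. Writing $h(u):=u+g'_\tau(u)$ with $u=|\beta|$, I would try to show $h$ is nondecreasing by examining $h'(u)=1+g''_\tau(u)$ and invoking part~4, $g''_\tau(u)=-\tau^2/\big((\tau u+\rho'(\lambda^*))^2+\rho''(\lambda^*)\big)$. Since $g_\tau$ is concave (a pointwise minimum of functions affine in $u$), we have $g''_\tau\le 0$, so the claim reduces to the single inequality $g''_\tau(u)\ge -1$, equivalently $(\tau u+\rho'(\lambda^*))^2+\rho''(\lambda^*)\ge \tau^2$ for all $u\ge 0$.

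The main obstacle is exactly this inequality, and it is binding near $u=0$, where it reads $\rho'(\lambda_a)^2+\rho''(\lambda_a)\ge \tau^2$. For large $u$ the term $(\tau u+\rho'(\lambda^*))^2\ge \tau^2u^2$ already dominates $\tau^2$ and the decreasing-density hypothesis only helps, so all the difficulty sits in the small-$\beta$ regime. This is the step I would scrutinize most carefully: it is where the full strength of the assumptions on the hyperprior must be used, and implicitly where the interaction between $\tau$ and the concentration of the prior enters, since without further control of $\rho'(\lambda_a)$ and $\rho''(\lambda_a)$ relative to $\tau$ the monotonicity of $h$, and hence the location of its minimizer at the origin, is not automatic. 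Once monotonicity of $h$ is established, the minimizer lies at $0$ (continuity, in the sense of \citet{Fan2001}) and the minimum value $\lambda_a\tau>0$ is strictly positive (sparsity), completing the argument.
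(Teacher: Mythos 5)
Your Part~1 is correct and coincides with the paper's own argument: write $g_\tau'(|\beta|)=\tau\lambda^*$, substitute the stationarity relation $\lambda^*=1/(\tau|\beta|+\rho'(\lambda^*))$, and use the uniform bound $|\rho'|\le M_1$ to conclude both $g_\tau'(|\beta|)\sim 1/|\beta|$ and $\lambda^*\to0$ (the paper's version even misprints $\rho(\lambda^*)$ for $\rho'(\lambda^*)$; yours is the intended argument). In Part~2 your value at the origin, $h(0)=\tau\lambda_a$ with $\lambda_a\rho'(\lambda_a)=1$, matches the paper, and your reduction of the minimum-at-zero claim to $h'(u)=1+g_\tau''(u)\ge0$, i.e.\ $(\tau u+\rho'(\lambda^*))^2+\rho''(\lambda^*)\ge\tau^2$, is sound (your formula also silently corrects the Lemma's typo $(\tau_n+\rho'(\lambda^*))^2$, which should read $(\tau_n|\beta|+\rho'(\lambda^*))^2=1/\lambda^{*2}$).

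However, your proposal does not finish Part~2 --- and the step you could not supply is equally absent from the paper. The paper's proof asserts that, since $\lambda^*$ is maximized at $|\beta|=0$ (it even misstates this as $\lambda_a\le\lambda^*$), ``each term of $|\beta|+\lambda^*\tau$ is decreasing in $|\beta|$ individually, so the minimum of the sum occurs at $0$.'' This is not a valid argument: $|\beta|$ is increasing while $\tau\lambda^*$ is decreasing, so no termwise monotonicity can place the minimum at the origin; exactly the derivative comparison you set up is what is needed. Worse, the inequality you isolated can genuinely fail under the stated hypotheses. Take the standard Half-Cauchy hyperprior the paper itself recommends: $\rho(\lambda)=\log(1+\lambda^2)$, so $\rho'(\lambda)=2\lambda/(1+\lambda^2)\le1$ and the density is decreasing, yet $\lambda_a=1$, $\rho''(\lambda_a)=0$, and $h'(0^+)=1-\tau^2<0$ whenever $\tau>1$; concretely, with $\tau=10$ one computes $h(1/2)=1/2+10\,\lambda^*(1/2)\approx 2.4<10=\tau\lambda_a=h(0)$, so the minimum is interior rather than at $\beta=0$. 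As written, Part~2 therefore requires an additional hypothesis tying $\tau$ to the hyperprior, namely $\tau^2\le\rho'(\lambda_a)^2+\rho''(\lambda_a)=1/\lambda_a^2+\rho''(\lambda_a)$, which is precisely the binding condition your reduction identifies. In short: your proof is incomplete, but the gap you flagged is a defect of the theorem's proof (and, absent extra conditions, of the statement itself), not of your approach.
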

\begin{proof}
    For 1, $g'(|\beta|) = \frac{\tau}{\tau|\beta|+\rho(\lambda^*)}$, and since $\rho(\lambda)$ is bounded, $\underset{|\beta|\to\infty}{\lim}\frac{\tau}{\tau|\beta|+\rho(\lambda^*)} = \frac{1}{|\beta|}$.
    For 2, let $\lambda_a$ be the $\lambda$ such that $\frac{1}{\lambda_a} = \rho'(\lambda_a)$ (which is unique by the assumption that $\rho$ is increasing). Note that $\lambda_a \leq \lambda^*$ and $\lambda_a = \lambda^*(0)$. So each term of $|\beta|+\lambda^*\tau$ is decreasing in $|\beta|$ individually, and so the minimum of their sum must occur at $0$, yielding value $\lambda_a\tau$.
\end{proof}

\begin{theorem}
    Let $\tau_n=n\tau_0$ for $\tau_0>0$, and further assume that $|\rho''(|\lambda|)|<M_2$ (bounded second logarithmic derivative). Then, under the following standard regularity conditions on the likelihood:
    \begin{enumerate}
        \item The data $\mathbf{y}_i$ are i.i.d. with density function $f(\y;\bb)$ providing for common support and model identifiability. 
        We assume it has a score function with expectation zero $\mathbb{E}_{\bb}\Big[\nabla_{\bb} \log f(\y;\bb)\Big] = \mathbf{0}$ and a Fisher information expressible in terms of second derivatives: $I(\bb) = \mathbb{E}_{\bb}\Big[\nabla_{\bb}^2 \log f(\y;\bb) \Big]$. 
        \item The information matrix is finite and positive definite when $\bb=\bb_0$, with $\bb_0$ the true parameter vector. 
        \item For some open subset $\mathcal{B}$ containing $\bb_0$,  for almost all $\y$, the density is thrice differentiable $\forall\bb\in\mathcal{B}$ and that $\Bigr|\frac{\partial^3 \log f(\y;\bb)}{\partial\beta_i\partial\beta_j\partial\beta_k}\Bigr|\leq M_{i,j,k}(\y)$, also over $\mathcal{B}$, where the functions $M$ are such that $\mathbb{E}_{\bb_0}[M_{i,j,k}(\y)]<\infty$.
    \end{enumerate} 
   there is a local minimum of $Q(\bb) = -L(\bb) + \sum_{p=1}^P g_{\tau_n}(|\beta_p|)$ that satisfies the following:
    \begin{enumerate}
        \item $\hat{\bb}_2=\mathbf{0}$ with probability approaching 1 as $n\to\infty$.
        \item $\hat{\bb}_1$ is asymptotically normal with covariance given approximately by $\frac{1}{n}I(\bb_1)$, the Fisher information matrix considering only active variables.
    \end{enumerate}
\end{theorem}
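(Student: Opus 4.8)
The plan is to adapt the two-stage argument of \citet{Fan2001} for the oracle property, exploiting the closed-form penalty derivatives of Lemma 2. Since $\boldsymbol\lambda$ has been profiled out, the objective is $Q(\bb) = -L(\bb) + \sum_{p=1}^P g_{\tau_n}(|\beta_p|)$, smooth away from the origin and with a kink at zero. Two facts from Lemma 2 and the bounded-score-derivative assumption drive everything: for a fixed nonzero coordinate, $g'_{\tau_n}(|\beta_{p0}|) = \tau_n/(\tau_n|\beta_{p0}| + \rho'(\lambda^*)) \to 1/|\beta_{p0}| = O(1)$ and $|g''_{\tau_n}(|\beta_{p0}|)| = O(1)$ as $\tau_n = n\tau_0 \to \infty$, whereas near the origin the right derivative $g'_{\tau_n}(0^+) = \tau_n\lambda_a = n\tau_0\lambda_a$ diverges. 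The first fact makes the penalty asymptotically invisible on the active set; the second clamps the inactive coordinates at zero.

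First I would establish $\sqrt n$-consistency of a local minimizer. Setting $D_n(\u) = Q(\bb_0 + n^{-1/2}\u) - Q(\bb_0)$ and Taylor-expanding the likelihood, the regularity conditions give $-[L(\bb_0 + n^{-1/2}\u) - L(\bb_0)] = -n^{-1/2}\u^\top\nabla L(\bb_0) + \tfrac12\u^\top I(\bb_0)\u\,(1 + o_P(1))$, where $n^{-1/2}\u^\top\nabla L(\bb_0) = O_P(\|\u\|)$ by the CLT and the quadratic form is positive definite; hence this piece is positive on $\{\|\u\| = C\}$ for $C$ large with probability at least $1-\epsilon$. The penalty increment splits by coordinate: on active coordinates it is $O(n^{-1/2})$ because $g'_{\tau_n}$ and $g''_{\tau_n}$ are $O(1)$ there, and on inactive coordinates it is nonnegative since $g_{\tau_n}$ is minimized at zero. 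Thus $D_n(\u) > 0$ uniformly on the sphere w.h.p., placing a local minimizer within $O_P(n^{-1/2})$ of $\bb_0$.

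Next I would prove sparsity, the delicate step. For an inactive coordinate $j$ (so $\beta_{j0} = 0$) I want $Q(\beta_j) - Q(0) > 0$ for all $0 < |\beta_j| \le Cn^{-1/2}$, the other coordinates frozen at their root-$n$-consistent values. That zero is stationary is immediate from the kink: the subgradient condition $|\partial L/\partial\beta_j| \le g'_{\tau_n}(0^+)$ holds because the left side is $O_P(\sqrt n)$ while the right side is $\Theta(n)$. The obstacle is that, unlike SCAD or MCP whose derivative is constant near zero, the self-adapting gradient decays from $\Theta(n)$ to $g'_{\tau_n}(Cn^{-1/2}) \approx \sqrt n/C$ across the window, so a pointwise comparison of $g'_{\tau_n}$ against the $O_P(\sqrt n)$ score does not close. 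I would instead integrate: using $\rho'(\lambda)\le M_1$, the increment obeys $g_{\tau_n}(|\beta_j|) - g_{\tau_n}(0) = \int_0^{|\beta_j|} \tau_n/(\tau_n t + \rho'(\lambda^*(t)))\,dt \ge \log(1 + n\tau_0|\beta_j|/M_1)$, which grows like $\tfrac12\log n$ at the window edge and dominates the smooth likelihood change $-L(\beta_j)+L(0) = O_P(1)$; a separate linear bound handles the $O(1/n)$-neighborhood of zero, where the logarithm is small but $g'_{\tau_n} = \Theta(n)$ still swamps the $O_P(\sqrt n)$ score. Hence $\hat\beta_j = 0$ for every inactive $j$ with probability tending to one.

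Finally, on the event $\hat\bb_2 = \mathbf 0$ the estimator $\hat\bb_1$ solves the penalized score equation $-\nabla_1 L(\hat\bb_1, \mathbf 0) + \mathbf{q} = \mathbf 0$ with $q_j = g'_{\tau_n}(|\hat\beta_{1j}|)\,\mathrm{sgn}(\hat\beta_{1j})$. Expanding the score about $\bb_{10}$ and using $\nabla^2_{11} L(\bb_{10}) = -nI(\bb_{10})(1 + o_P(1))$ gives $\sqrt n(\hat\bb_1 - \bb_{10}) = I^{-1}(\bb_{10})\big[n^{-1/2}\nabla_1 L(\bb_{10}) - n^{-1/2}\mathbf{q}\big](1 + o_P(1))$. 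Since $\hat\beta_{1j} \to \beta_{1j,0} \ne 0$, the bounded-$\rho''$ assumption yields $g'_{\tau_n}(|\hat\beta_{1j}|) = 1/|\beta_{1j,0}| + O_P(n^{-1/2}) = O_P(1)$, so $n^{-1/2}\mathbf{q} = o_P(1)$ and the penalty drops out; the CLT applied to $n^{-1/2}\nabla_1 L(\bb_{10})$ then gives asymptotic normality with covariance $\tfrac1n I^{-1}(\bb_{10})$, matching the oracle. I expect the sparsity step to be the main obstacle, precisely because the adaptive penalty's gradient is only of the same order as the score in the crossover window and must be controlled in integrated rather than pointwise form.
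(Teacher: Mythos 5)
Your proposal is correct and follows the paper's Fan--Li blueprint in outline --- a sphere argument for a $\sqrt{n}$-consistent local minimizer, a sparsity step driven by the penalty gradient dominating the $O_P(\sqrt{n})$ score, then asymptotic normality from the stationarity conditions with the penalty corrections $\tfrac{1}{\sqrt{n}}g'_{\tau_n}$ and $\tfrac{1}{n}g''_{\tau_n}$ vanishing on the active set --- but your sparsity step takes a genuinely different route. The paper dodges the crossover problem you diagnose by shrinking the neighborhood: it establishes $\mathrm{sgn}\big(\partial Q/\partial\beta_j\big)=\mathrm{sgn}(\beta_j)$ only for $\Vert\boldsymbol\beta_2\Vert<\tfrac{1}{n}$, where $g'_{\tau_n}(|\beta_j|)\geq n\tau_0/(\tau_0+\rho'(\lambda^*))$ dominates the $O_P(\sqrt{n})$ likelihood term pointwise; since a local minimum needs only \emph{some} neighborhood, this suffices and keeps the proof short, but it is silent on $|\beta_j|\in(1/n,\,Cn^{-1/2}]$, precisely the window where $g'_{\tau_n}\asymp\sqrt{n}$ matches the score's order. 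Your integrated bound $g_{\tau_n}(|\beta_j|)-g_{\tau_n}(0)\geq\log(1+\tau_n|\beta_j|/M_1)\approx\tfrac{1}{2}\log n$ against an $O_P(1)$ likelihood increment, with the pointwise bound reserved for the $O(1/n)$ shell, closes the whole window and buys the stronger conclusion that $(\hat{\boldsymbol\beta}_1,\mathbf{0})$ minimizes over the entire $Cn^{-1/2}$ ball in $\boldsymbol\beta_2$ --- the full analogue of Fan and Li's Lemma 1, which is immediate for SCAD/MCP only because their derivatives are constant near zero. The swapped order of your consistency and sparsity stages relative to the paper is immaterial. Two details to make explicit when writing this up: the uniformity over the crossover window is a ``with probability $1-\delta$'' argument (bound the normalized score by a constant $W$ on a high-probability event, then choose the shell cutoff $\epsilon$ so that $\log(1+\tau_0 K/M_1)>W\epsilon$ and let the $\tfrac{1}{2}\log n$ growth handle $t\geq\epsilon$), and your step-one claim that inactive-coordinate penalty increments are nonnegative uses that $g_{\tau_n}$ is nondecreasing in $|\beta|$, which holds because it is an infimum of functions increasing in $|\beta|$.
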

\begin{proof}

We begin by establishing the existence of sparse local minima. Let $\mathcal{A}_n = \{\bb_1' : ||\bb_1'-\bb_{10}|| < \frac{C_1}{\sqrt{n}}\}$. We want to show that, asymptotically, for  $\bb_1\in\mathcal{A}_n$:

\begin{equation}
     \underset{\bb_2}{\min} \,\, Q\Big(  {\begin{bmatrix} \bb_1\\ \bb_2 \end{bmatrix}}\Big) =  Q\Big(  {\begin{bmatrix} \bb_1\\ \mathbf{0} \end{bmatrix}}\Big)
\end{equation}

We can do this by showing that $\textrm{sgn}(\frac{\partial Q(\bb)}{\partial \beta_j})=\textrm{sgn}(\beta_j)$ for $j>r$ and for $\bb_2$ sufficiently small.
Starting with the expression for the gradient and applying a series expansion on the likelihood, we see that:
\begin{align}
    \frac{\partial Q(\bb)}{\partial \beta_j} = -\frac{\partial L(\bb)}{\partial \beta_j} + g_{\tau_n}'(|\beta_j|)\textrm{sgn}(\beta_j) = -\frac{\partial L(\bb_0)}{\partial\beta_j} - \sum_{p=1}^P \frac{\partial^2 L(\bb_0)}{\partial\beta_j\partial\beta_p}(\beta_l-\beta_{l,0}) \\ 
    -\sum_{p_1,p_2=1}^P \frac{\partial^3 L(\bb^*)}{\partial\beta_j\partial\beta_{p_1}\partial\beta_{p_2}} (\beta_{p_1}-\beta_{p_1,0})(\beta_{p_2}-\beta_{p_2,0}) + g'_{\tau_n}(|\beta_j|)\textrm{sgn}(\beta_j)
\end{align}
where $\bb^*$ is associated with the Cauchy form for the remainder. 
Since $||\bb_1-\bb_{1,0}||=O_p(\frac{1}{\sqrt{n}})$ by assumption and plugging in the expression for $g_{\tau_n}'(|\beta_j|)$ from Lemma 2, when $\Vert\bb_2\Vert<\frac{1}{n}$, we have that:
\begin{align}
    \frac{\partial Q(\bb)}{\partial \beta_j} = O_p(\sqrt{n}) + \frac{n\tau_0}{n\tau_0|\beta_j|+\rho'(\lambda^*)}\textrm{sgn}(\beta_j) 
\end{align}
but when $\beta_j<\frac{1}{n}$, we have that $g_{\tau_n}'(|\beta_j|)\geq n\frac{\tau_0}{\tau_0+\rho'(\lambda^*)}$, which dominates the $O_p(\sqrt{n})$ likelihood term and means that $\textrm{sgn}(\beta_j)$ determines the sign of the partial derivative, and establishes the desired existence of sparse local minima.

Next, we want to show that within this set $\mathcal{A}_n$ which induces sparse minima in $\bb_2$ is a local minimizer of the likelihood with respect to $\bb_1$.
To this end, let vector $\bu\in\mathbb{R}^r$ be such that $\Vert\bu\Vert_2=C_1$.
We consider the difference in penalized likelihood between $\bb_{10}$ and $\bb_{10}+\frac{\bu}{\sqrt{N}}$, and again take the series expansion about $\bb_0$:
\begin{align}
    &Q(\bb_0+\frac{\bu}{\sqrt{n}}) - Q(\bb_0) = -L(\begin{bmatrix}\bb_{10}+\frac{\bu}{\sqrt{n}}\\\mathbf{0}\end{bmatrix}) + L(\begin{bmatrix}\bb_{10}\\\mathbf{0}\end{bmatrix}) + \sum_{i=1}^r g_{\tau_n}(|\beta_j|) - g_{\tau_n}(|\beta_{j0}|) \\
    &= -\frac{1}{\sqrt{n}}\frac{\partial L(\bb_0)}{\partial\bb_1}^\top\bu + \frac{1}{2} \bu^\top I_1(\bb_1)\bu\big[1+o_p(1)\big] + \sum_{p=1}^r \Big[\frac{1}{\sqrt{n}} g_{\tau_n}'(|\beta_p|)\textrm{sgn}(\beta_p) u_p + \frac{1}{n}g_{\tau_n}''(|\beta_p|) u_p^2\big[1+o_p(1)\big]\Big]
\end{align}
Notice that $g_{\tau_n}'(|\beta_p|) = \frac{n\tau_0}{n\tau_0|\beta_p|+\rho'(\lambda^*)}\overset{n\to\infty}{\to}\frac{1}{|\beta_p|}$ so $\frac{1}{\sqrt{n}}g_{\tau_n}'(|\beta_p|)\overset{n\to\infty}{\to}0$ and that $g_{\tau_n}''(|\beta_p|) = -\frac{\tau_n^2}{(\tau_n+\rho'(\lambda^*))^2+\rho''(\lambda^*)}$ also converges to a finite constant as $n\to\infty$, so $\frac{1}{n}g_{\tau_n}''(|\beta_p|)\to0$ as well.
For sufficiently large $C_1$, the quadratic likelihood term thus dominates all other terms and the difference is positive, and there is therefore a local minimizer with respect to $\bb_1$ within $\mathcal{A}_n$.

To determine its asymptotic distribution and conclude the proof, we examine the stationarity conditions for $j\in\{1,\ldots,r\}$ of this local optimum:

\begin{align}
    &0=\frac{\partial Q({\tiny\begin{bmatrix}\bb_1\\\mathbf{0}\end{bmatrix}})}{\partial\beta_j} \Bigr|_{\beta=\big(\substack{\hat{\bb_1}\\\mathbf{0}}\big)} = -\frac{\partial L(\bb)}{\partial \beta_j} \Bigr|_{\beta=\big(\substack{\hat{\bb_1}\\\mathbf{0}}\big)} + g_{\tau_n}'(|\hat\beta_j|)\textrm{sgn}(\beta_j) = \frac{\partial L(\bb_0)}{\partial \beta_j} \\
    &+ \sum_{p=1}^r \Big[\frac{\partial L(\bb_0)}{\partial\beta_j\partial\beta_p}+o_p(1)\Big](\hat\beta_p-\beta_{p0}) + g'_{\tau_n}(|\beta_j|)\textrm{sgn}(\beta_j) + \Big[g_{\tau_n}''(|\beta_j|)+o_p(1) \Big](\hat{\beta_j}-\beta_{j,0}) \,\, .
\end{align}
Slutsky's together with the Central Limit theorems thus give us that:
\begin{equation}
    \sqrt{n}\big(I_1(\bb_{1,0})+ \frac{1}{n} \mathbf{D}\big)\Big(\hat\bb_1-\bb_{1,0} + \frac{1}{n}\big(I_1(\bb_{1,0})+\frac{1}{n}\mathbf{D}\big)^{-1}\mathbf{g}\Big)\overset{d}{\to}\mathcal{N}\big(\mathbf{0},I_1(\bb_{1,0})\big) \,\, ,
\end{equation}
denoting:
\begin{equation}
    \mathbf{D} = \mathrm{diag}([ g''(|\beta_1|), \ldots, g''(|\beta_r|)]); \,\,\,\, \mathbf{g} = [g'_{\tau_n}(|\beta_j|)\textrm{sgn}(\beta_1), \ldots, g'_{\tau_n}(|\beta_j|)\textrm{sgn}(\beta_1)]
\end{equation}
\end{proof}

\section{Optimization Details}\label{sec:ap_optim}

\noindent\textbf{Step Size Selection with Trust Regions:}  We deploy an adaptive step size based on Trust Regions. 
The trust region framework involves comparing the observed function reduction with that implied, typically, by the second order model itself implied by our gradient descent:

\begin{equation}
	\hat{f}(\mathbf{x})\approx
	f(\mathbf{x}^t)+\nabla f (\mathbf{x}^t)^\top (\mathbf{x}-\mathbf{x}^t) +  ||\mathbf{x}^t-\mathbf{x}||_{\mathbf{C}}^2
\end{equation}
with the expected reduction in cost given by $\rho_{tr}:=\frac{f(\mathbf{x}^{t+1})-f(\mathbf{x}^t)}{\hat{f}(\mathbf{x}^{t+1})-\hat{f}(\mathbf{x}^t)}$, and where the norm $\mathbf{C}$ is given in our case by our diagonal step sizes: $||\mathbf{x}||_{\mathbf{C}}^2 = \sum_{p=1}^{P} \frac{x_p^2}{\eta_p}$.

In our case, our proximal approach means we have to include our regularization term in the trust region surrogate, and burden our notation by splitting the optimization variables into unpenalized variables $\theta$, penalized variables $\beta$, and penalizing variables $\lambda$, so $\mathbf{x}=[\theta,\beta,\lambda]$:
\begin{align*}
	\hat{f}([\theta, \beta, \lambda])\approx
	f([\theta^t, \beta^t, \lambda^t])+\nabla f ([\theta^t, \beta^t, \lambda^t])^\top ([\theta^t, \beta^t, \lambda^t]-[\theta, \beta, \lambda])   \\+||[\theta^t, \beta^t, \lambda^t]-[\theta, \beta, \lambda]||_{\mathbf{C}}^2 + \sum_{p=1}^P \lambda_p |\beta_p|,
\end{align*}
but is otherwise identical. 
When the $\rho_{rt}$ is too small, it means the expected reduction was much less than the actual reduction, and the trust region shrinks.
Conversely, a ratio closer to $1$ is indicative of an overly cautious stepsize. 
A trust region procedure proceeds by shrinking the step size by some factor $m_s$ whenever $\rho_{tr}<r_l$ is too small, expanding it by $m_e$ if $\rho_{tr}>r_u$ is too big, and leaving it unchanged when $\rho_{tr}$ is within a prespecified interval $[l_\rho,u_\rho]$. 
We use the default parameters recommended by the McCormick School's optimization Wiki \footnote{\url{https://optimization.mccormick.northwestern.edu/index.php/Trust-region_methods}}, which are $m_s=0.25,m_e=2,r_l=0.25,r_u=0.75$.
Steps which worsen the cost function are rejected and the step size is again shrunk by $m_s$.

\noindent\textbf{Nesterov Acceleration:} Along with careful step-size selection, we implement Nesterov acceleration \citep{nesterov1983method}, which the unfamiliar reader should think of as a careful kind of momentum. 
Nesterov acceleration has been shown to make a big difference in proximal gradient methods in the form of the ISTA algorithm, which is called Fast ISTA \citep{beck2009fast} when endowed with Nesterov acceleration.

One point of implementation we should mention is that Nesterov's search direction does not guarantee descent, unlike a gradient or proximal gradient direction.
This can be problematic since we are using a trust region to select step sizes, which expects our model to provide descent directions for sufficiently small step sizes. 
Various strategies exist to deal with this; we take the approach of simply resetting the Nesterov variables each time the trust region shrinks 3 times in a row (and resetting the step size to its previous value).

\noindent\textbf{Preconditioning:} Nesterov Acceleration gets us a faster asymptotic convergence rate. 
The other half of speedy optimization is preconditioning, which essentially involves improving the constants in the convergence rate.
By preconditioner, we simply mean a positive definite linear transformation of the gradient: $\tilde{\mathbf{g}}\gets\mathbf{C}\mathbf{g}$.
We studied the Natural Gradient Method \citep{amari1998natural}, which uses the Fisher information associated with the variational distribution as a preconditioner, as well as ``bespoke" preconditioners for various likelihood terms.
But ultimately, we found maximum generalizability in simply using an exponential moving average of gradient norms as our preconditioner, \textit{\'a la} Adam \citep{kingma2015adam}.
It is very likely that significant improvements in execution time on general problems can be attained by better preconditioners.

\noindent\textbf{Ablation Study:} We demonstrate the importance of Nesterov acceleration and proper preconditioning on our working problem. As Figure \ref{fig:optim_map} shows, the gradient search direction with as large a step size as we observed to be stable is much slower than that with a trust region-selected step size.
In turn, Nesterov acceleration is much faster than the trust region alone.
And finally, preconditioning improves optimization speed even more.

\begin{figure}
	\centering
	\includegraphics{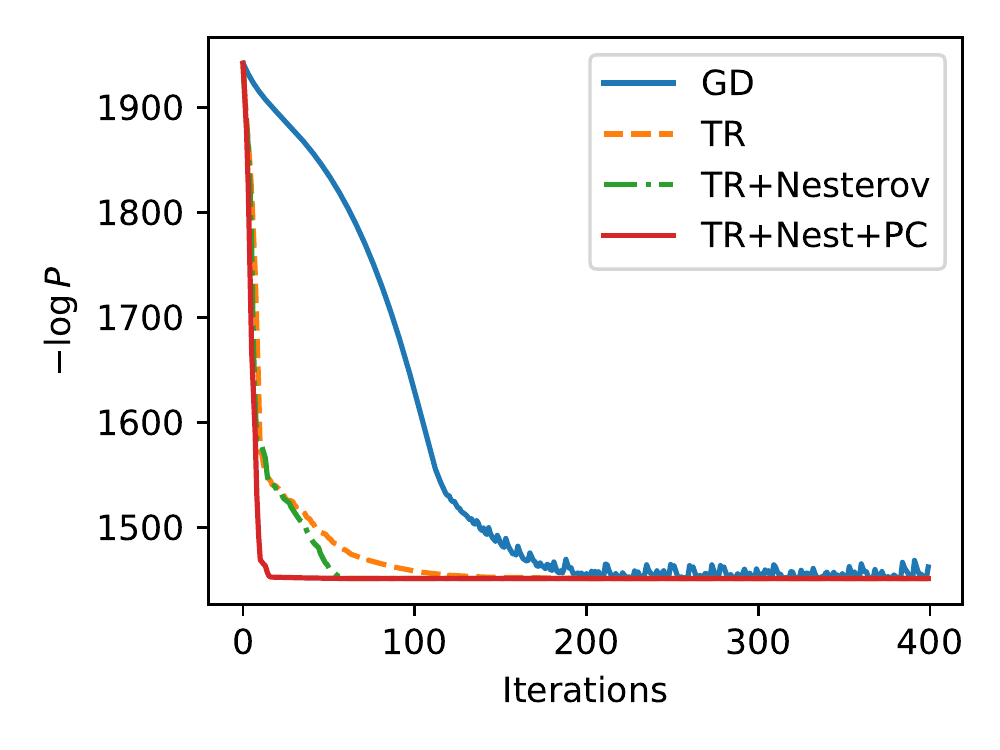}
	\caption{\textbf{Accelerating Gradient Descent:} Using adaptive step sizes via a Trust Region, better conditioning via a preconditioner, and acceleration via Nesterov's method yield much improved optimization performance on this problem.}
	\label{fig:optim_map}
\end{figure}

\section{SBL in  Linear Gaussian Models}\label{sec:app_lineargaus}





\begin{figure}
    \centering
	%
	\includegraphics[scale=0.78]{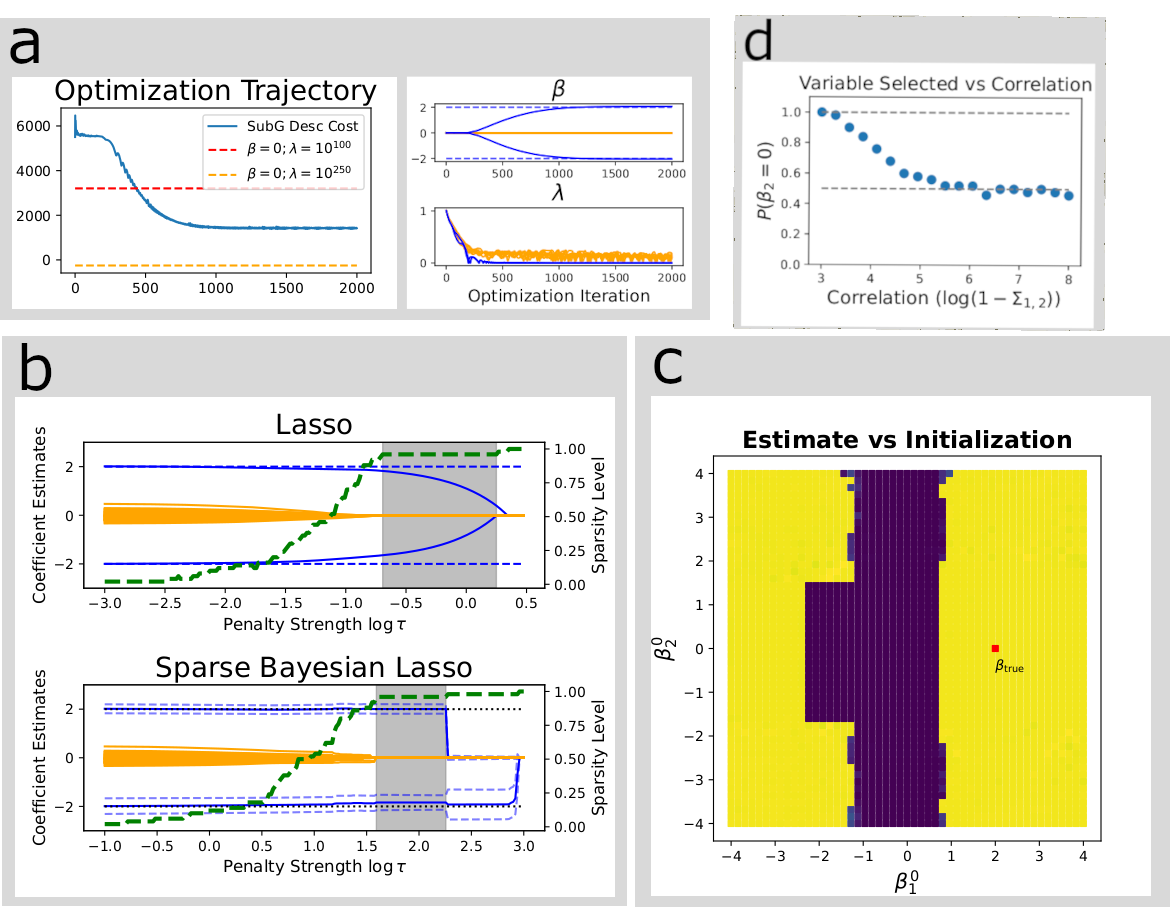}
    \caption{
    \textbf{Nonconvex Behavior.}
    \textit{Top Left and Mid:} MAP inference with uniform prior leads to reasonable local optima, but likelihood is unbounded.
    \textit{Lower and Mid Left:} Penalty $\tau$ varies: gray region is correct sparsity level, Lasso is biased within; SBL is debiased (like nonconvex penalties for likelihoods) and provides credible intervals.
    \textit{Top Right:} Instability in RNG for selection of correlated parameters.
    \textit{Bottom Right:} Cost vs Initialization, true parameter value in red.
    }
    \label{fig:map_unif}
\end{figure}
  
Denote by $\mathbf{P}_X$ the orthogonal projector onto the range of $\mathbf{X}$.
If we assume $\y\sim\mathrm{N}(\X\bb,\sigma^2 \I)$ with $\bb\sim Q_{\bb}$ and $\sigma^2\sim Q_{\sigma^2}$, then we have the following expected log likelihood: 
\begin{align}
    & -\underset{\bb\sim Q_{\bb},\sigma^2\sim Q_{\sigma^2}}{\mathbb{E}} [-\mathcal{L}(\bb,\sigma^2|\y)] =
    \underset{\bb\sim Q_{\bb},\sigma^2\sim Q_{\sigma^2}}{\mathbb{E}} [\frac{N}{2}\log\sigma^2 + \frac{||\y-\X\bb||_2^2}{2\sigma^2}] + C\\
    & = \underset{\sigma^2\sim Q_{\sigma^2}}{\mathbb{E}} [\mathrm{tr}[\frac{\X^\top\X}{2\sigma^2}\mathbb{V}_Q[\bb]]
    +(\mathbb{E}_Q[\bb]-\hat{\bb})^\top\frac{\X^\top\X}{2\sigma^2}(\mathbb{E}_Q[\bb]-\hat{\bb}) + \frac{\y^\top(\mathbf{I}-\mathbf{P}_X)\y}{2\sigma^2}
    +\frac{N}{2}\log\sigma^2] \\
    & = \underset{\sigma^2\sim Q_{\sigma^2}}{\mathbb{E}} [\frac{\mathrm{tr}[\X^\top\X\mathrm{diag}(2\nu^2)]
    +(\eta-\hat{\bb})^\top \X^\top\X (\eta-\hat{\bb}) + \y^\top(\mathbf{I}-\mathbf{P}_X)\y}{2\sigma^2}
    +\frac{N}{2}\log\sigma^2] \, .
\end{align}
To evaluate the expectation wrt $\sigma^2$, we'll define the numerator of the first term as $\xi$ which yields the following:
\begin{equation}
    = \underset{\sigma^2\sim Q_{\sigma^2}}{\mathbb{E}} [\frac{\xi}{2\sigma^2} +\frac{N}{2}\log\sigma^2] 
    = \frac{\xi}{2}\underset{\sigma^2\sim Q_{\sigma^2}}{\mathbb{E}}[\frac{1}{\sigma^2}] +\frac{N}{2}\underset{\sigma^2\sim Q_{\sigma^2}}{\mathbb{E}}[\log\sigma^2] \, .
\end{equation}
We see that if we want a closed form estimate of KL we need to choose $Q_{\bb}$ to have closed form mean and variance and $Q_{\sigma^2}$ to have a closed form reciprocal and logarithmic expectation.
The Gamma and Lognormal distribution satisfy this.
However, we observed difficulties differentiating through the Gamma simulation in \texttt{tensorflow\_probability}, so we settled on the Lognormal which had better numerical behavior.

Normal and lognormal variational-prior models have the same well known KL divergence expression.

\section{$\lambda$-Prior Specification in Orthogonal Designs} \label{sec:app_map}

\noindent\textbf{Working example} $\mathbf{X}\in\mathbb{R}^{N\times P}=\mathbb{R}^{250\times 10}$ with entries sampled from $i.i.d$ standard Gaussians, $\boldsymbol\beta\in\mathbb{R}^{50}$ with only 6 nonzero entries, of $[-2.5,-2,-1.5,1.5,2,2.5]$ (similar $\beta$ as \citet{Rockova2018}), and $\mathbf{y}=\X\boldsymbol\beta+\boldsymbol\epsilon$ with again iid standard normal $\epsilon_n$.
Figure \ref{fig:map_unif}, a shows a more readable problem with only $[-2,2]$ nonzero in $\beta$ and $N=100$.

We now consider the orthogonal linear case where the problem decomposes axis-by-axis, starting again by converting to a nested optimization marginalizing over $x$, yielding:
\begin{align}
    \underset{\beta\in\mathbb{R},\lambda\in\mathbb{R}^+}\min \frac{\sum_{i=1}^N(y_i-x_i\beta)^2}{2\sigma^2} +\tau\lambda|\beta| - \log\lambda - \log p_\lambda(\lambda) \\
    \iff \underset{\beta\in\mathbb{R},\lambda\in\mathbb{R}^+}\min \frac{(\beta-\hat{\beta})^2}{2\frac{\sigma^2}{\sum_{i=1}^N x_i^2}} +\tau\lambda|\beta|  - \tau\log\lambda - \log p_\lambda(\lambda) \,\, , 
\end{align}
where $\hat{\beta}$ gives the least squares estimate.
Again, the STO gives us profile cost for $\lambda$ with $\beta^*(\lambda) = (|\hat{\beta}|-\frac{\sigma^2\tau\lambda}{\sum_{i=1}^N x_i^2})^+\mathrm{sgn}(\hat{\beta})$. Using the fact that $(\beta^*(\lambda)-\hat{\beta})^2=\min(\frac{\sigma^2\tau\lambda}{\sum_{i=1}^N x_i^2},|\hat{\beta}|)^2$ 
, we can express the profile cost as a piecewise log-quadratic (plus the prior term):
\begin{align}
    &\underset{\lambda\in\mathbb{R}^+}{\min} \,\, \frac{\tau^2\min(\lambda, \frac{|\hat{\beta}|\sum_{i=1}^N x_i^2}{\tau\sigma^2})^2}{2} 
    + \tau\lambda \Big(|\hat{\beta}|-\frac{\sigma^2\tau\lambda}{\sum_{i=1}^Nx_i^2}\Big)^+ 
    - \log\lambda - \log p_\lambda(\lambda) \\
    &\iff \underset{\lambda\in\mathbb{R}^+}{\min}\begin{cases}
        \tau^2\bigg(\frac{1}{2} - \frac{\sigma^2}{\sum_{i=1}^N x_i^2}\bigg)\lambda^2 + \tau |\hat{\beta}|\lambda 
        -\log(\lambda)-
        \log p_\lambda(\lambda); & \lambda \leq \frac{|\hat{\beta}|\sum_{i=1}^N x_i^2}{\tau\sigma^2} \\
        \frac{\hat{\beta}^2[\sum_{i=1}^N x_i^2]^2}{2\sigma^4} - \log\lambda - \log p_\lambda(\lambda)
    \end{cases}
\end{align} 

\noindent\textbf{Uniform} $p_\lambda\propto 1$: 
Without a great idea of what prior we ought to put on $\lambda$, we might be tempted to place a uniform prior on it. 
However, this leads to an unbounded density as  $(\boldsymbol\beta,\boldsymbol\lambda)=(\mathbf{0},\infty)$: the term $\frac{||\y-\X\boldsymbol\beta||_2^2}{2\sigma^2}$ becomes $\frac{||\y||_2^2}{2\sigma^2}$, while the $-\log\lambda_p$ terms are unbounded.



Undeterred, we performed MAP inference with a uniform prior on our working problem using the Adam optimizer (\cite{kingma2015adam}) initialized with $\boldsymbol\beta=\mathbf{0}$, $\boldsymbol\lambda=\mathbf{1}$ and subgradient descent, where it is seemingly able to perform debiased variable selection (see Figure \ref{fig:map_unif}a). 
The ``zero $\boldsymbol\beta$ big $\boldsymbol\lambda$" strategy only surpasses the posterior density of local estimates on this particular example at $\lambda\approx10^{200}$; since the $\log\lambda$ term is unbounded, we expect this behavior for any $\mathbf{X},\mathbf{y},\sigma^2$. 
Evidently, local optimizations may converge to a reasonable local minimum despite this.
Though the biconvex penalty proposed here has its advantages over convex penalties, nonconvexity can be a treacherous property.

\noindent\textbf{Power Inverse} $p_\lambda\propto \frac{1}{\lambda^{a}}$: In this case, setting $\boldsymbol\beta=0$ leaves us with a cost of $\sum_{p=1}^P (a-1)\log\lambda_p$.
We see that for $a<1$, we get similar behavior to the uniform case, where cost is unbounded below as $\lambda\to0$. 
$a>1$ leads to a positive coefficient for the $\log\lambda$ term, in turn leading to the opposite behavior, where cost is unbounded below as $\lambda\to0$.
When $a=1$, the log terms cancel, and we are left with a cost of $\frac{||\y-\X\boldsymbol\beta||_2^2}{2\sigma^2}+\sum_{p=1}^P\lambda_p|\beta_p|$.
Without any motivation to stay positive, the $\lambda_p$ term will vanish, yielding the unpenalized problem.

    

\noindent\textbf{Half-Gaussian} $p_\lambda\propto e^{-\frac{(\lambda-m_{\lambda})^2}{b_\lambda^2}}$: Plugging in yields:

\begin{equation}
     \underset{\lambda\in\mathbb{R}^+}{\min}\begin{cases}
        \bigg(\tau^2(\frac{1}{2} - \frac{\sigma^2}{\sum_{i=1}^N x_i^2}) + \frac{1}{b_\lambda^2}\bigg)\lambda^2 + (\tau |\hat{\beta}|-2\frac{m_\lambda}{b_\lambda^2})\lambda 
        -\log(\lambda) +\frac{m_\lambda^2}{b_\lambda^2};
        & \lambda \leq \frac{|\hat{\beta}|\sum_{i=1}^N x_i^2}{\tau\sigma^2} \\
        \frac{\hat{\beta}^2[\sum_{i=1}^N x_i^2]^2}{2\sigma^4} - \log\lambda + \frac{(\lambda-m_\lambda)^2}{b_\lambda^2} & o.w.
    \end{cases}
\end{equation}
This function, like the proximal cost, may be nonconvex when 
$\frac{\sigma^2}{\sum_{i=1}^N x_i^2}>\frac{1}{\tau^2b_\lambda^2}+\frac{1}{2}$, 
as this is when the quadratic term is concave.
But the $-\log\lambda$ term contributes to convexity near the origin such that the overall function is convex until 
$\frac{1}{\lambda^2} > \tau\big(\frac{2\tau\sigma^2}{\sum_{i=1}^N x_i^2}-\tau\big)-\frac{1}{b^2_\lambda}$.
The stationary point of the upper convex region is 
$\lambda^*=\frac{m_\lambda+\sqrt{m_\lambda^2+2b_\lambda^2}}{2}$
(if this lies in the region), and is
\begin{equation}
    \frac{2 b_\lambda^2 s_x}{\beta  b_\lambda^2 \tau  s_x+\sqrt{s_x \left(b_\lambda^4 \tau ^2 \left(\left(\beta ^2+4\right)
   s_x-8 \sigma \right)-4 b_\lambda^2 s_x (\beta  m_\lambda \tau -2)+4 m_\lambda^2 s_x\right)}-2 m_\lambda s_x}
\end{equation}
in the lower region. 
The $\log\lambda$ term prevents optima at $\lambda=0$, so we need only check the cost at these two points to determine which is optimal.


\noindent\textbf{Half-Cauchy} $p_\lambda \propto \frac{1}{1+\frac{\lambda^2}{a}}$: We next turn our attention to the case $p_\lambda=\mathrm{C}(0,a_\lambda)^+$, a Half-Cauchy prior, in the style of the Horseshoe Prior (but with a Laplace rather than Normal conditional prior for $\beta$).
This leads us to the following cost function:
\begin{align}
    \underset{\lambda\geq 0}{\min} \begin{cases}
        \tau^2\bigg(\frac{1}{2} - \frac{\sigma^2}{\sum_{i=1}^N x_i^2}\bigg)\lambda^2
        + \tau\lambda|\hat{\beta}|+\log\Big[\frac{1+\frac{\lambda^2}{a_\lambda}}{\lambda}\Big]; & \lambda \in \lambda \leq \frac{|\hat{\beta}|\sum_{i=1}^N x_i^2}{\tau\sigma^2} \\
         \frac{\hat{\beta}^2}{2\sigma^4} + \log\Big[\frac{1+\frac{\lambda^2}{a_\lambda}}{\lambda}\Big] ; & o.w.
    \end{cases}
\end{align}

Finding the critical points of the function in $[0,\frac{|\hat{\beta}|\sum_{i=1}^N x_i^2}{\tau\sigma^2}]$ may be cast as solving a quartic polynomial equation.
While we are fortunate that such an equation can be solved in closed form, we have not been able to extract much intuition from it.
Nevertheless, the variable penalty $\ell_1$ proximal operator enables iterative numerical solution of this problem.

\section{Additional Experimental Results and Details}

\subsection{Simulation Study}

$\tau$ values were selected manually which gave good performance in simulation studies.
In practice, we recommend evaluating the SBL variational distribution for a whole spectrum of $\tau$ rather than choosing a particular one.

\subsubsection{Nonconvexity and Instability}

Since our penalty is nonconvex, the final parameters of an optimization are dependent on their initializations.
Figure \ref{fig:map_unif}c shows the final costs associated with a grid of initializations on a 2D toy problem, where $\beta=[2,0]$. The initial values are used on a grid of 51 points between [-4,4] and with $\tau$ set to 100.
We perform MAP inference to focus on the nonconvexity in the coefficients (rather than on  variational standard deviations).
We see that in this case, the model lands on the true solution when $\beta_1$ is initialized at large values.
For this simulation and throughout the article, $\lambda_p$ were all initialized at 1 and $\beta$ at 0.
The classical Lasso is famously unstable insofar as it essentially randomly selects from groups of correlated parameters.
We conduct a numerical experiment to confirm the inheritance of this property, using the same toy problem, but now with $\mathbf{X}$ data simulated from a correlated normal distribution with high correlation, and this time using the SBL.
Figure \ref{fig:map_unif}d shows how, starting at a correlation of $0.999$, as we approach a correlation of $1$, the model begins changing from nearly always selecting $\beta_1$ as nonzero to doing so only $50\%$ of the time.

\subsection{Additional Case Study Implementation Details}\label{sec:app_imp}

\begin{table}[h]
\small
    \centering
    \begin{tabular}{|c|c|c|c|c|}
        \hline
         Symbol & Generating Dist & Variational Dist & Name\\
         \hline
         $\beta$ & $\mathrm{L}(0,\frac{1}{\lambda_{\ell_1}})$ & $\mathrm{L}(\eta_\beta,\nu_\beta)$ & Regression Coefficients\\
         $\sigma^2$ & $\log\mathrm{N}(0,b_{\sigma})$ & $\log\mathrm{N}(\eta_\sigma,\nu_\sigma)$ & Overdispersion Parameter\\
         $\omega$ & $\mathrm{N}(0,\rho^2)$ & $\mathrm{N}(\eta_\omega,\nu_\omega)$ & Spatiotemporal Random Effects\\
         $\rho^2$ & $\log\mathrm{N}(0,b_{\rho})$ & $\log\mathrm{N}(\eta_\rho,\nu_\rho)$ & Random Effects Variance\\
         $\phi$ & $\mathrm{U}(0,1)$ & $\mathrm{logit}\mathrm{N}(\eta_\phi,\nu_\phi)$ & Spatial/Temporal Correlation\\
         $\pi$ & $\mathrm{U}(0,1)$ & $\mathrm{logit}\mathrm{N}(\eta_\pi,\nu_\pi)$ & Zero Inflation Probability\\
         \hline
    \end{tabular}
    \caption{Model Estimands along with their assumed prior and variational distributions.}
    \label{tab:estimands}
\end{table}

\subsubsection{Lagging and Aggregation}

In order to account for lag between tweet measurement and recording of a migration event, we use a Gaussian filter. 
Recall that we have periods $r$ of variable length, with beginning $l_r$ and end $u_r$, and duration $s_r:=u_r-l_r$. 
The filter, parameterized by a mean lookback time $\mu_l$ and smoothing window size $\sigma_l$, is defined as:
\begin{equation}
    z_{r,o,d,p} \propto \sum_{t'=t-w}^{t+w}\frac{f((|\frac{u_t+l_t}{2}-t'|-\frac{s_t}{2})^+,\mu_l,\sigma_l^2) x_{t',o,d,p}}{\sum_{t'=t-w}^{t+w} f((|\frac{u_t+l_t}{2}-t'|-\frac{s_t}{2})^+),\mu_l,\sigma_l^2)}
\end{equation}
where $w$ is the maximum window size, $f(.,\mu,\sigma^2)$ gives the Gaussian density function.
In other words, all tweets within the period $r$ when shifted by $\mu_l$ is assigned maximum density. 
Events outside of this window less density, with speed of decay give by $\sigma_l^2$

\subsubsection{Spatiotemporal KL Penalty Estimation}

In this section we describe how to form the KL terms related to the spatial random effects $\omega$, the spatial correlation parameter $\phi$ and scaling parameter $\rho^2$. 
We assume the CAR model $\omega \overset{P}{\sim} N(\mathbf{0},\rho(\I-\phi\mathbf{M})^{-1})$, where $\mathbf{M}$ is a spatial connectivity matrix.

We begin with the normal KL, denoting by $B$ the dimension of $\om$:
\begin{align}
    & \frac{1}{2} (\log|\Sigma_P|-\log|\Sigma_Q|) - B + B \mathbb{E}[\log\rho^2] + \big [ \mathrm{tr}[\Sigma_Q] + \mathbb{E}[\om]^\top\Sigma_P^{-1}\mathbb{E}[\om] \big] \mathbb{E}[\frac{1}{\sigma^2}] \,\, ,
\end{align}
and plugging in the following:
\begin{align}
    & \Sigma_P^{-1} = \I-\phi\mathbf{M} \\
    & \log|\Sigma_P| = -\sum_{i}\log(1-\phi\lambda_i(\mathbf{M})) \\
    & \log|\Sigma_Q| = -\sum_{i} \nu_{\omega_i} \, .
\end{align}
In the lognormal variational case, the two expectations of $\rho^2$ are available in closed form as $\mathbb{E}[\log\rho^2]=\eta_\rho$, $\mathbb{E}[\frac{1}{\rho^2}]=e^{-\eta_\rho+\frac{\nu_\rho}{2}}$.
It remains to integrate out $\phi$, which is not analytical.
This we accomplish via SAA, just as we estimate the negative log likelihood.

\subsection{Case Study Random Effects}

The variational estimates of the random effects are given in Figure \ref{fig:app_spt_traj}. 
The right set of panels, which give us the random effect variance and thus average magnitude, indicate that there is not much change in the magnitude of effects overall as $\tau$ is varied.
On the other hand, some individual random effects do undergo nontrivial changes as $\tau$ varies, and the four most changing random effects are highlighted in each figure.

\begin{figure}
    \centering
    \includegraphics[scale=0.6]{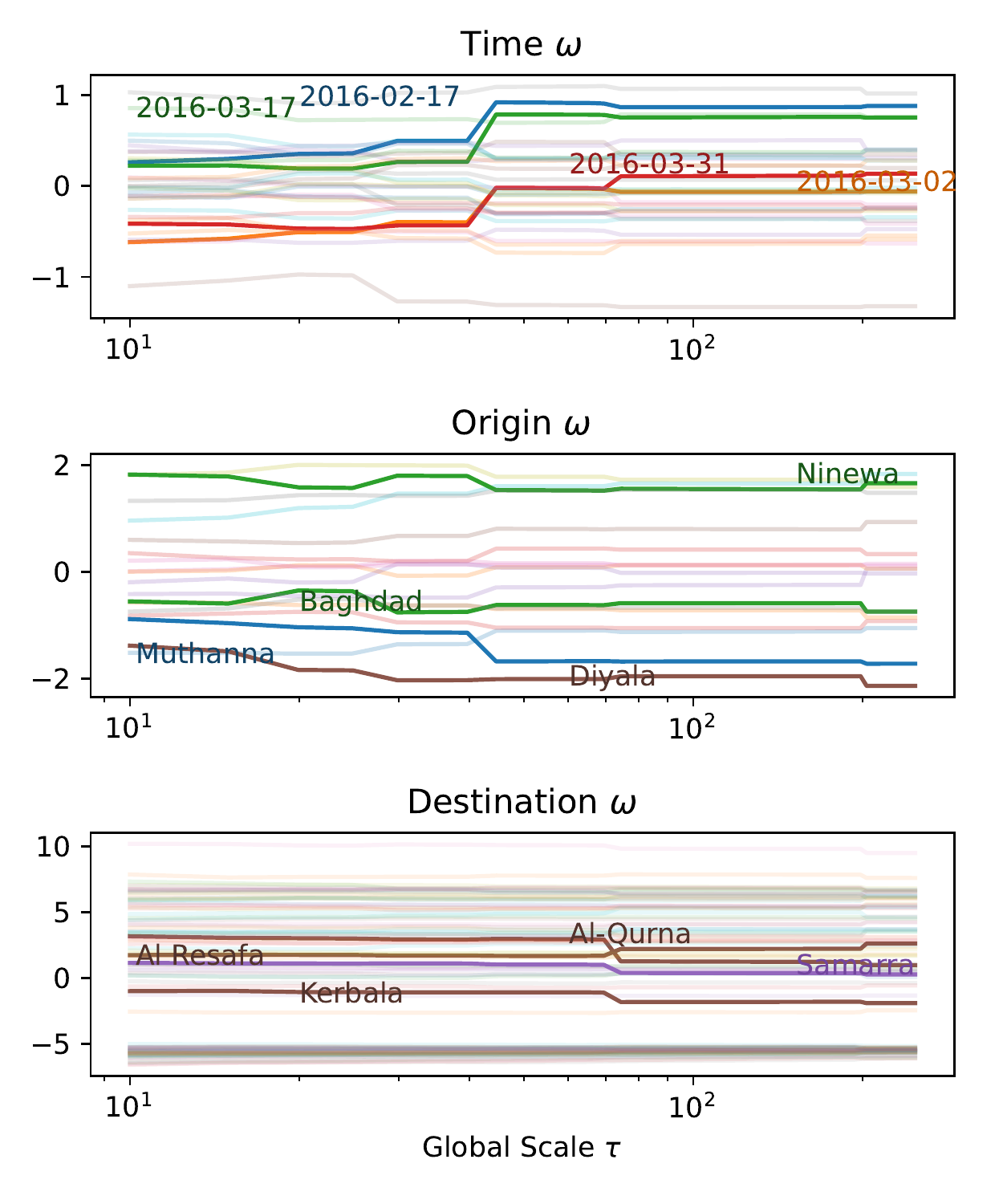}
    \includegraphics[scale=0.735]{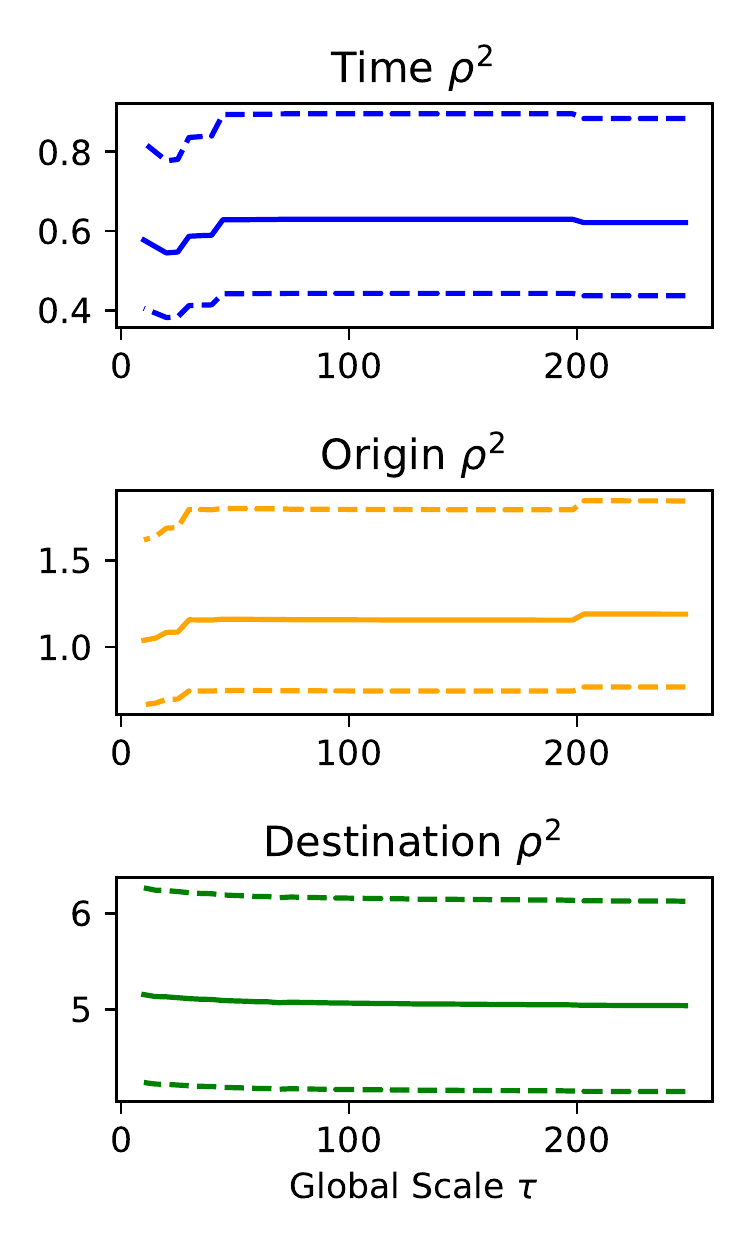}
\caption{Trajectory of Random Effects and Random Effects Variance parameters in Iraq case study.}
    \label{fig:app_spt_traj}
\end{figure}

\begin{figure}
    \centering
    \includegraphics[scale=0.8]{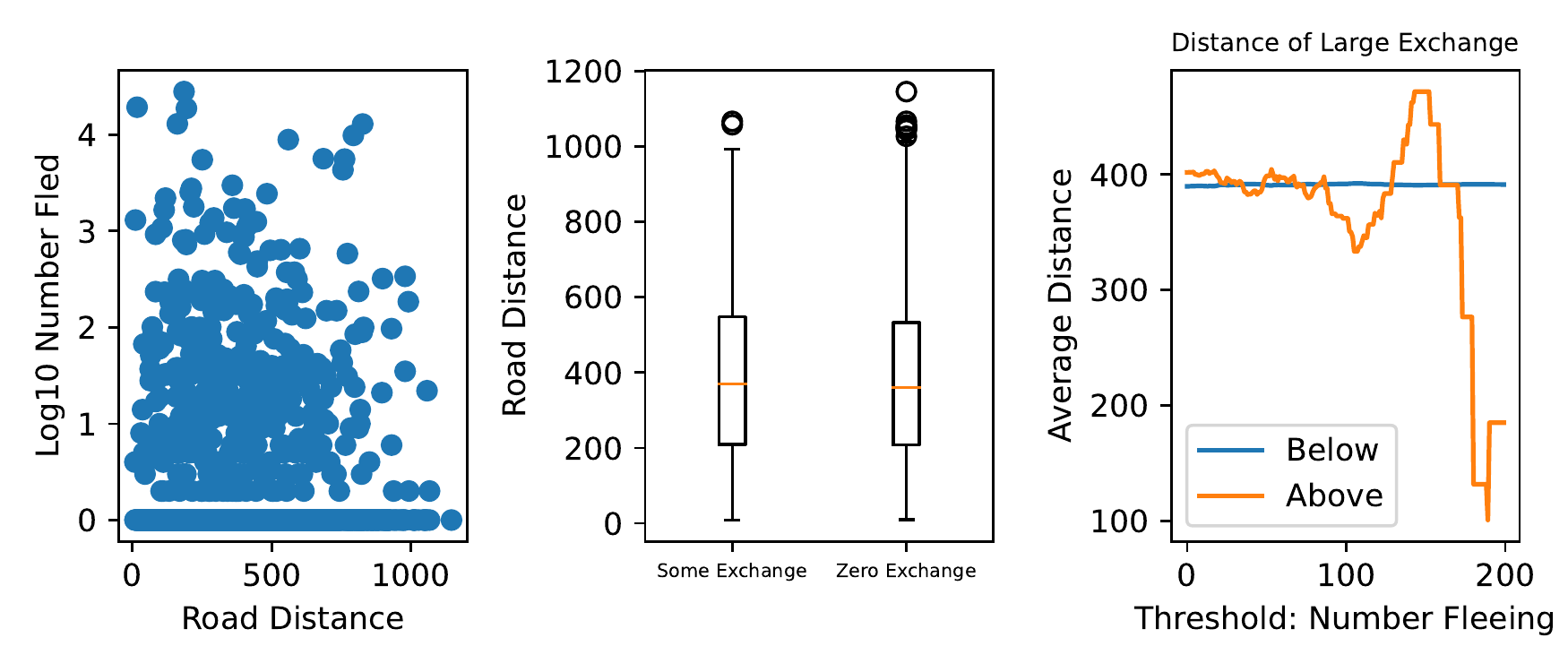}
    \caption{\textbf{Antigravitation:} \textit{Left:} Scatterplots do not reveal a clear relationship between pairwise distance and migrant exchange. \textit{Mid:} Boxplots don't either. \textit{Right:} y-axis gives the mean distance between those regions with a migrant exchange of at least the number given by the x-axis. For some exchange levels, this is above the overall overage, and for others, below.}
    \label{fig:antigravity}
\end{figure}

\subsection{The Antigravity Model}\label{sec:app_antigravity}

The gravity model is so named due to its consideration of the distances between exchanging regions, supposing that greater distance means less exchange.
But this might not be the case in subnational regions: we observe a negative coefficient, meaning exchange is, \textit{ceteris paribus}, increasingly likely with distance.
In a small country where one side of it is at war, it's not too surprising that migrants would want to go to the ``other side".
Without random effects and without penalties, distance still assumes a negative, though smaller, value.

\bibliographystyle{chicago}
\bibliography{main}


\end{document}